\newtheorem{theorem}{Theorem}
\newtheorem{lemma}[theorem]{Lemma}
\newtheorem{remark}[theorem]{Remark}
\definecolor{rodrigo}{rgb}{0,.4,1}
\newcommand{\ro}[1]{{\color{black}#1}}
\definecolor{henrik}{rgb}{1,.4,0}
\newcommand{\he}[1]{{\color{black}#1}}
\newcommand{\mc}[1]{\mathcal{#1}}
\newcommand{\mb}[1]{\mathbb{#1}}
\newcommand{\e}{\mathrm{e}}
\newcommand{\tr}{\mathrm{Tr}} 
\newcommand{\id}{\mb{I}}
\newcommand{\one}{\mathbf{1}}
\newcommand{\norm}[1]{\left\Vert #1 \right\Vert}
\newcommand{\ket}[1]{\left.\left|{#1}\right.\right\rangle}
\newcommand{\bra}[1]{\left.\left\langle{#1}\right.\right|}
\newcommand{\ketbra}[2]{\ket{#1} \!\! \bra{#2}}
\newcommand{\proj}[1]{\ketbra{#1}{#1}}
\newcommand{\vac}{\mc V}
\newcommand{\fu}{Dahlem Center for Complex Quantum Systems, Freie Universit{\"a}t Berlin, 14195 Berlin, Germany}
\begin{document}
\title{The third law of thermodynamics as a single inequality}
 \author{Henrik Wilming}
 \author{Rodrigo Gallego}
\affiliation{\fu}
\begin{abstract}
The third law of thermodynamics in the form of the unattainability principle states that exact ground-state cooling requires infinite resources. Here we investigate the amount of {non-equilibrium} resources needed for approximate cooling. We consider as resource any system out of equilibrium, allowing for resources beyond the \emph{i.i.d.} assumption and including the input of work as a particular case. We establish in full generality a sufficient and a necessary condition for cooling and show that for a vast class of non-equilibrium resources these two conditions coincide, providing a single necessary and sufficient criterion. Such conditions are expressed in terms of a single function playing a similar role for the third law to the one of the free energy for the second law. From a technical point of view we provide new results about concavity/convexity of certain Renyi-divergences, which might be of independent interest.
\end{abstract}
\maketitle
\section{Introduction}
Pure quantum states are indispensable resources for any task in quantum information processing. However, the third law of thermodynamics (more precisely, the unattainability principle) states that cooling a system exactly to zero temperature requires an infinite amount of resources, being it in the form of time, space, work or some other resource \cite{Nernst1906,Nernst1906_2,Masanes2017,Scharlau2016,Silva2016}. Similarly, no-go theorems have been put forward for the task of bit erasure --which is closely related to ground-state cooling-- showing that no unitary process on a system and a finite dimensional reservoir can bring the system from a mixed to a pure state \cite{DiFranco2013,Wu2013,Ticozzi2014}. However, these no-go results do not say much about the amount of resources needed for approximate cooling. Indeed, in recent times a sizable number of studies deal with different protocols to cool a small quantum system by unitarily acting on a heat bath and a certain number of systems out of equilibrium to be ``used up''  (known under the name of algorithmic or dynamical cooling) \cite{Schulman1999,Boykin2002,Schulman2005,Allahverdyan2011,Raeisi2015} or studying particular models of refrigerating small quantum systems \cite{Allahverdyan2010,Skrzypczyk2010,Skrzypczyk2011,Levy2012,Mari2012,Silva2015,Hofer2016}, including ones that seem to challenge the unattainability principle in terms of required time \cite{Cleuren2012,Allahverdyan2012,Levy2012a,Entin-Wohlman2014,Cleuren2012a,Kolar2012,Kosloff2013,Freitas2017}.

In this work we will focus on quantifying in full generality the expenditure of arbitrary systems out of equilibrium that are needed for approximate cooling while having access to a heat bath. Our scenario is similar to the one considered in algorithmic cooling, but here we treat the full thermodynamics of the problem by allowing for resources with non-trivial Hamiltonians and accounting for the energy conservation of the total process.
We will do this in the resource theoretic framework of quantum thermodynamics \cite{Janzing00,Horodecki2013,Gour2015,Halpern2014,YungerHalpern2016}, which has proven useful to answer a variety of fundamental questions in quantum thermodynamics, such as establishing an infinite family of second laws \cite{Brandao2015}, providing fundamental bounds to single-shot thermodynamics \cite{Horodecki2013,Brandao2015,Faist15,Halpern2015}, providing definitions of work for quantum systems \cite{Horodecki2013,Aberg2013,Gallego2015}, generalizing fluctuation theorems \cite{Alhambra2016,Aberg2016}, elucidating the thermodynamic meaning of negative entropies \cite{delRio2011} and elucidating the role of quantum coherence in thermodynamics \cite{Aberg2014,Faist14,Lostaglio14,Lostaglio2015,Korzekwa2016}.

Recently, there have also been studies from this point of view on the problem of cooling \cite{Masanes2017,Scharlau2016,Silva2016}, however mostly focusing on providing necessary conditions in terms of resources such as time, space or Hilbert-space dimension.

The task of cooling that we are considering can be phrased as finding a cooling protocol between an arbitrary resource described by the state and Hamiltonian $\rho_R$ and $H_R$, respectively, and a target system described by $\rho_S$ and $H_S$ so that $\rho_S$ approximates the ground-state of $H_S$. We will later assume for simplicity that $\rho_S$ is a thermal state -- in this case the goal is to bring its final temperature $T_S$ to a very low value. We will assume that the density matrix of the resource has full rank because otherwise the problem trivializes, since one can, for example, simply swap with a ground-state \footnote{Similar arguments can be made for other states without full rank. For example two copies of a rank-2 state in a 4-dimensional system can be written as a pure state in tensor product with a full-rank state. Such a resource therefore already contains a pure state which can then be mapped to the ground-state.}. We furthermore assume that the target system is initially in thermal equilibrium with some environment. Then the transition, \emph{i.e.} the cooling protocol, can be performed by using a thermal bath at a fixed inverse temperature $\beta$ and performing a global unitary that commutes with the total Hamiltonian, so that energy conservation is properly accounted for. This kind of transitions have been extensively studied and they can be characterized by families of functions $M^\alpha$, the so-called \emph{monotones}, so that a transition is possible if and only if \cite{Brandao2015,Buscemi2017,Gour16}
\begin{equation}\label{eq:secondlaws}
M^\alpha(\rho_R,H_R) \geq M^\alpha(\rho_S,H_S) \;\;\; \forall \\\ \alpha.
\end{equation}
Hence, the problem at hand is in principle hard to characterize since one needs to verify an infinite number of conditions to conclude that a given transition is possible. The main contribution of the present work is to show that in the limit where $T_S$ is sufficiently close to zero --\emph{i.e.} the regime where the (un)attainability problem is formulated-- the infinite set of monotones appearing in \eqref{eq:second_laws} can be essentially reduced to a single monotone. We call this monotone the \emph{vacancy} and it is defined as
\begin{equation}
\mc{V}_{\beta} (\rho,H):=S(\omega_{\beta}(H)\| \rho),
\end{equation}
where $\omega_{\beta}(H)$ is the Gibbs state of at inverse temperature $\beta$ and $S$ is the relative entropy defined as
\begin{align}
  S(\rho\| \sigma) =\tr(\rho\log \rho) - \tr(\rho \log \sigma),
\end{align}
if $\mathrm{supp}(\rho)\subseteq\mathrm{supp}(\sigma)$ and equal to $+\infty$ otherwise.

We find that \emph{sufficient} and \emph{necessary} conditions for cooling, respectively, are given by
\begin{align}\label{eq:thirdlawgeneralintro}
\mc{V}_{\beta}(\rho_R,H_R) - K(\rho_R,H_R,\rho_S,H_S,\beta) &\geq \mc{V}_{\beta}(\rho_S,H_S),\\
\mc{V}_{\beta}(\rho_R,H_R) &\geq \mc{V}_{\beta}(\rho_S,H_S)\label{eq:intronecessary},
\end{align}
where $K (\rho_R,H_R,\rho_S,H_S,\beta) \rightarrow 0$ as $T_S \rightarrow 0$. Hence in the limit of very low temperature cooling $\mc{V}_{\beta}(\rho_S,H_S)$ is the key quantity that determines the fundamental limitations.
{Importantly, $\mc{V}_\beta(\rho_S,H_S)$ diverges as $T_S\rightarrow 0$. The necessary condition \eqref{eq:intronecessary} therefore shows that an infinite amount of resources (as measured by $\mc V_\beta$) is necessary for exact ground-state cooling.} Furthermore we show that for a vast class of resource systems, for example thermal states of coupled harmonic oscillators, the function  $K(\rho_R,H_R,\rho_S,H_S,\beta)$ vanishes identically. Hence
\begin{equation}\label{eq:thirdlawintro}
\mc{V}_{\beta}(\rho_R,H_R) \geq \mc{V}_{\beta}(\rho_S,H_S)
\end{equation}
becomes both a sufficient and necessary condition. That $\mc{V}_\beta$ plays an important role for the third law had been previously found in the setting of \emph{i.i.d.} resources and qubits as target systems in the seminal work of Ref. \cite{Janzing00}. Here, we extend the significance of the quantity $\mc{V}_{\beta}$ to arbitrary scenarios.

Usually, the unattainability principle is formulated with respect to time, arguing that an infinite amount of time (or infinitely many cycles of a periodically working machine) are needed to cool a system exactly to zero temperature. Our results show, for example, that if the non-equilibrium resources are simply hot thermal systems (as in the example of a thermal machine that operates between two heat baths), the system to be cooled and the cooling machine have to effectively interact with infinitely many such resource systems (or all parts of one infinitely large system). This implies that an infinite amout of time is needed, since each such interaction takes a finite time (see \cite{Masanes2017} for a thorough discussion of this point).

Our findings not only serve to pose limitations to protocols of algorithmic cooling, but also suggest a surprising symmetry between the second and third law of thermodynamics. The second law --in its averaged version or in the version of the Jarzynski equality \cite{Jarzynski97}-- can be expressed in terms of the free-energy difference defined as
\begin{equation}\label{eq:secondlaw}
\Delta F_\beta (\rho,H) = \frac{1}{\beta} S(\rho\| \omega_{\beta}(H)).
\end{equation}
In analogy, we show that the third law can be expressed similarly in terms of $\mc{V}_\beta(\rho,H)$ which simply inverts the arguments of the relative entropy in Eq. \eqref{eq:secondlaw} and drops the pre-factor.
This symmetry between the second and third law is quite surprising and hints at the fact that the second and third law can be related to the errors of first and second kind in hypothesis testing \cite{Tomamichel2016}.
We leave the investigation of this deeper relation between the two for future work.

From a technical point of view, our results rely on certain convexity-properties of the function $\alpha \mapsto S_\alpha(\rho || \sigma)$, where $S_\alpha$ are classical Renyi-divergences \cite{Tomamichel2016}. We believe that these results might be of independent interest.

\section{Set-up and general necessary condition}\label{sec:set-up}
In the following we will use the set-up of \emph{catalytic thermal operations} \cite{Janzing00,Horodecki2013,Brandao2015} applied to the task of cooling. In this set-up we imagine to possess a resource given by the pair of state and Hamiltonian $(\rho_R,H_R)$. We can then use an arbitrary thermal bath at inverse temperature $\beta$, that is, a system in a Gibbs state $\omega_{\beta}(H_B)$ of a Hamiltonian $H_B$, and finally an ancillary system, the so-called catalyst with arbitrary state and Hamiltonian $(\sigma_C,H_C)$ in such a way the latter is returned in the same configuration and uncorrelated from the rest of the systems after implementing the protocol. The target system to be cooled is initially assumed to be in thermal equilibrium with the thermal bath and therefore described by a Gibbs state $(\omega_{\beta}(H_S),H_S)$. The total compound $RSBC$ is transformed by a cooling protocol, which consists simply of a unitary transformation $U$ which commutes with the total Hamiltonian.

\he{More formally, we say that there exists a cooling protocol to $\rho_S$ using the resource $(\rho_R,H_R)$ if there exists a \emph{fixed} catalyst $(\sigma_C,H_C)$ and for \emph{any} $\epsilon>0$ there exists a unitary $U$ and a bath Hamiltonian $H_B$ such that}
\begin{equation}\label{eq:thermal_operation}
 \rho'_{RS} \otimes \sigma^\epsilon_C = \tr_{B}\left(U \rho_R \otimes\omega_{\beta}(H_S) \otimes \omega_\beta(H_B)\otimes \sigma_C \; U^\dagger \right)
\end{equation}
with $\tr_{R}(\rho'_{R S})=\rho_S$  \ro{and $\norm{\sigma_C - \sigma_C^\epsilon}_1\leq  \epsilon$}. The only constraint on the unitary $U$ is that it conserves the global energy, i.e.,
\begin{equation}\label{eq:energyconservation}
[U,H_R+H_S+H_B+H_C]=0.
\end{equation}

Note that this formulation of the cooling process contains as a particular case partial cooling in which we do not start with the target in a Gibbs state. In this case, the initial system of $S$, if it is partially cooled before starting the protocol, can be simply incorporated as a part of the resource $R$.

The problem of finding conditions for the existence of a transitions of the form \eqref{eq:thermal_operation} has been studied in Ref. \cite{Brandao2015} for diagonal states, that is, with $[\rho_R,H_R]=0$ and $[\rho_S,H_S]=0$. Throughout this manuscript we will restrict to such diagonal states, but we emphasize that the the necessary condition \eqref{eq:intronecessary} also holds for non-diagonal states as we will see later.

Under the assumption that $\rho_R$ and $\rho_S$ are diagonal, one can show that cooling to a state $\rho_S$ is possible if and only if \cite{Brandao2015}
\begin{align}
\label{eq:second_laws}
S_\alpha(\rho_R || \omega_\beta(H_R)) \geq S_\alpha(\rho_S || \omega_\beta(H_S))\quad \forall \alpha \geq 0,
\end{align}
where $S_\alpha$ are so-called \emph{Renyi-divergences}. The proof of this statement relies simply on the results of Ref. \cite{Brandao2015} together with the additivity of the Renyi-divergences under tensor-products.

An important tool that appears in Eq. \eqref{eq:second_laws} is the concept of a \emph{monotone} of (catalytic) thermal operations \cite{Horodecki2013,Janzing00}. This is any function $f$ which can only decrease under (catalytic) thermal operations. The functions $S_{\alpha}$ appearing in \eqref{eq:second_laws} are monotones under catalytic thermal operations and more generally under any channel that has the Gibbs state as a fixed point. Importantly, any monotone $f$, possibly different from $S_\alpha$, allows us to construct necessary conditions for a given transition. We will now show that $\mc{V}_\beta$ is also a monotone under catalytic thermal operations and derive the corresponding necessary condition for cooling.
\begin{theorem}[Monotonicity and necessary condition]\label{thm:necessary} The vacancy is an additive monotone under catalytic thermal operations. \ro{This has as an implication that}
for any target $(\rho_S,H_S)$ and resource $(\rho_R,H_R)$ --not necessarily diagonal states--, the condition
  \begin{equation}\label{eq:thirdlawnecessary}
  \mc{V}_{\beta}(\rho_R,H_R)\geq \mc{V}_{\beta}(\rho_S,H_S).
  \end{equation}
  is \emph{necessary} for cooling.
\begin{proof}
\ro{Let us first show that $\mc{V}_{\beta}$ is a monotone under catalytic thermal operations. Let us consider an arbitrary transition from state $\rho$ to state $\rho'$ --both with Hamiltonian H-- by catalytic thermal operations, then we will now show that $\mc{V}_{\beta}(\rho,H)\geq \mc{V}_{\beta}(\rho',H)$.}

\he{First note that the vacancy diverges for a state $\rho$ without full rank, \ro{thus \he{the inequality $\mc{V}_{\beta}(\rho,H)\geq \mc{V}_{\beta}(\rho',H)$} is satisfied trivially for those states.} Let us therefore assume that \ro{$\rho$ is a full rank state.}
As was shown in~\cite{Brandao2015}, for any $0\leq\alpha\leq 2$ the Renyi-divergences
\begin{align}
S_\alpha(\rho\| \omega_\beta(H)) := \frac{1}{\alpha-1}\log\tr\left(\rho^\alpha \omega_\beta(H)^{1-\alpha}\right)
\end{align}
are monotonoic under (catalytic) thermal operations for arbitrary states $\rho$. \ro{That is, we have the necessary condition}
\begin{align}
  \label{eq:necessaryproof}
S_\alpha(\rho || \omega_\beta(H)) \geq S_\alpha(\rho' || \omega_\beta(H))\quad \forall 0\leq \alpha \leq 2,
  \end{align}
  By simple algebra, one can show that
\begin{align}
\left.\frac{\partial }{\partial_\alpha}\right|_{\alpha=0} S_\alpha(\rho \| \omega_\beta(H)) = V_\beta(\rho,H) \geq 0.
\end{align}
Taylor-expanding \eqref{eq:necessaryproof} on both sides for any $\alpha>0$ and dividing by $\alpha$ then yields
\begin{align}
V_\beta(\rho,H) + O(\alpha)\geq V_\beta(\rho',H) + O(\alpha),
\end{align}
where $O(\alpha)$ indicates that it is of first order on $\alpha$. Taking $\alpha$ arbitrarily small then yields $V_\beta(\rho,H) \geq V_\beta(\rho',H)$. This proves monotonicity under catalytic thermal operations. Additivity follows directly from the additivity of the relative entropy under tensor products.}

Once we established that the vacancy is a monotone under catalytic thermal operations, we can now derive the necessary condition for cooling by simply applying this condition to the transition that $SR$ undergo in the cooling process:
\begin{align}
&\mc{V}_{\beta}\left(\rho_R \otimes \omega_{\beta}(H_S), H_R+H_S \right) \nonumber \\
\nonumber &\quad \geq \mc{V}_{\beta}\left(\rho'_{RS} , H_R+H_S \right)\\
\nonumber &\quad \geq \mc{V}_{\beta}\left( \omega_{\beta}(H_R) \otimes \rho_S, H_R+H_S \right).
\end{align}
The last inequality follows from the fact that one can always replace the state on any system by an uncorrelated thermal state at the heat bath's temperature using a thermal operation. Using additivity of the vacancy and the fact that $\mc{V}_{\beta}(\omega_{\beta}(H),H)=0$, we obtain the necessary condition~\eqref{eq:thirdlawnecessary}.
\end{proof}
\end{theorem}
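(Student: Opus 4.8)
The plan is to prove the monotonicity statement first and then obtain the necessary condition \eqref{eq:thirdlawnecessary} as a short corollary. The monotonicity splits according to the rank of the input state. Since every Gibbs state $\omega_\beta(H)$ has full support, a state $\rho$ without full rank violates $\supp(\omega_\beta(H))\subseteq\supp(\rho)$, so $\mc{V}_\beta(\rho,H)=+\infty$ and the desired inequality $\mc{V}_\beta(\rho,H)\ge\mc{V}_\beta(\rho',H)$ holds trivially. The only real content is therefore the full-rank case, and I would treat it in two ways.

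The most transparent route uses the data-processing inequality directly. An ordinary thermal operation is a channel $\Phi$ that fixes the Gibbs state, $\Phi(\omega_\beta(H))=\omega_\beta(H)$, so data processing for the relative entropy gives $S(\omega_\beta(H)\|\rho)\ge S(\Phi(\omega_\beta(H))\|\Phi(\rho))=S(\omega_\beta(H)\|\rho')$, which is exactly $\mc{V}_\beta(\rho,H)\ge\mc{V}_\beta(\rho',H)$. To cover \emph{catalytic} thermal operations I would apply the same argument on the enlarged system $RC$: the map $X\mapsto\tr_B(U(X\otimes\omega_\beta(H_B))U^\dagger)$ is Gibbs-preserving because $U$ commutes with $H_R+H_C+H_B$, it sends $\rho\otimes\sigma_C$ to $\rho'\otimes\sigma_C$, and additivity of the relative entropy over tensor products lets me cancel the common catalyst term $\mc{V}_\beta(\sigma_C,H_C)$.

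The route I would actually commit to, however, inherits the catalyst bookkeeping already established for the Renyi divergences. A short computation identifies the vacancy as the right derivative at $\alpha=0$, namely $\partial_\alpha S_\alpha(\rho\|\omega_\beta(H))\big|_{\alpha=0}=\mc{V}_\beta(\rho,H)$, using that $S_0(\rho\|\omega_\beta(H))=-\log\tr(\Pi_\rho\,\omega_\beta(H))=0$ for full-rank $\rho$. Since the $S_\alpha$ are monotones under catalytic thermal operations for $0\le\alpha\le2$ and for arbitrary (not necessarily diagonal) states, each small $\alpha>0$ gives $S_\alpha(\rho\|\omega_\beta(H))\ge S_\alpha(\rho'\|\omega_\beta(H))$; dividing by $\alpha$ and letting $\alpha\to0^+$ yields $\mc{V}_\beta(\rho,H)\ge\mc{V}_\beta(\rho',H)$, while additivity of $\mc{V}_\beta$ is immediate from additivity of the relative entropy. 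This version also makes manifest that the necessary condition survives for non-diagonal states.

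The main obstacle is the catalytic layer. In the direct data-processing route the catalyst returns only $\epsilon$-approximately ($\norm{\sigma_C-\sigma_C^\epsilon}_1\le\epsilon$), so the cancellation of the possibly large term $\mc{V}_\beta(\sigma_C,H_C)$ is not exact and must be controlled by a continuity argument as $\epsilon\to0$; this is exactly the delicate point that the Renyi route sidesteps, since the monotonicity of the $S_\alpha$ is a cited result that already absorbs the catalyst and approximation bookkeeping, leaving only the need to check $S_0=0$ on both sides (hence the full-rank assumption, which is preserved because partial traces of full-rank states remain full rank). With monotonicity and additivity in hand, the necessary condition is routine: I apply the monotone along $\rho_R\otimes\omega_\beta(H_S)\to\rho'_{RS}\to\omega_\beta(H_R)\otimes\rho_S$, where the second arrow is the thermal operation that discards and re-thermalizes $R$ while leaving the marginal $\tr_R(\rho'_{RS})=\rho_S$ intact, and then use $\mc{V}_\beta(\omega_\beta(H),H)=0$ together with additivity to read off $\mc{V}_\beta(\rho_R,H_R)\ge\mc{V}_\beta(\rho_S,H_S)$.
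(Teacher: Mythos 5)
Your committed argument is the paper's own proof: reduce to the full-rank case, identify $\mc{V}_\beta(\rho,H)$ as $\partial_\alpha S_\alpha(\rho\|\omega_\beta(H))|_{\alpha=0}$ using $S_0=0$, invoke monotonicity of the Renyi divergences under catalytic thermal operations for small $\alpha>0$, divide by $\alpha$ and let $\alpha\to 0^+$, and then apply the resulting monotone along $\rho_R\otimes\omega_\beta(H_S)\to\rho'_{RS}\to\omega_\beta(H_R)\otimes\rho_S$ with additivity and $\mc{V}_\beta(\omega_\beta(H),H)=0$. The alternative data-processing route you sketch (and correctly decline to commit to, because the $\epsilon$-approximate catalyst spoils exact cancellation of $\mc{V}_\beta(\sigma_C,H_C)$) is a reasonable aside but not the paper's argument, so no further comparison is needed.
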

We emphasize that the necessary condition~\eqref{eq:thirdlawnecessary} is derived in full generality and it applies to any full-rank state $\rho_R$ and any state $\rho_S$, possibly not diagonal in the eigenbasis of $H_S$.

The monotone $\mc{V}_{\beta}$ was first introduced in Ref. \cite{Janzing00}. Its relevance for the unattainability principle is clear since if $\rho_S$ does not have full support, then the r.h.s. of \eqref{eq:thirdlawnecessary} diverges. Hence, exact cooling is impossible unless the resource $\rho_R$ does not have full support either. In the particular case of $\rho_R= \bigotimes _{i=1}^n \varrho^i$ where $\varrho^i$ has full support, the condition \eqref{eq:thirdlawnecessary} already tell us that we need infinite resources --infinite n in this case-- for exact cooling. Hence, such a simple analysis already suggests that the quantity $\mc{V}_\beta$ plays a crucial role for the limitations on cooling.

To summarize, we have seen, building upon previous literature, that the vacancy $\mc{V}_{\beta}$ establishes completely general necessary conditions for cooling. However, for necessary and sufficient conditions one should in principle verify an infinite number of inequalities given by \eqref{eq:second_laws}. Our contribution will be to show that these infinite number of inequalities can be reduced to a single one, which can also be expressed in terms of the vacancy for a sufficiently cold final state $\rho_S$. Furthermore, we will show that for a large family of resource systems the single sufficient condition that we find coincides with the necessary condition \eqref{eq:thirdlawnecessary}. Hence, the limits on cooling are entirely ruled by the function $\mc{V}_{\beta}$. This holds for large classes of finite systems, with possibly correlated and interacting subsystems.

\section{General sufficient conditions for cooling}
The process of cooling laid out in the previous section can in principle be applied to any final state $\rho_S$. We will now assume for simplicity that the final state, as it corresponds to a cooling process, is of the form $\rho_S=\omega_{\beta_S}(H_S)$ with $\beta_S$ very large. We can then derive the following completely general \emph{sufficient} condition for cooling.

\begin{theorem}[General sufficient condition for cooling] \label{thm:generaltheorem} For every choice of $\beta$ and $H_S$ there is a critical $\beta_{\mathrm{cr}}>0$ such that for any $\beta_S > \beta_{\rm cr}$ and full-rank resource $(\rho_R,H_R)$ the condition
\begin{align}\label{eq:main_result}
\mc V_\beta(\rho_R,H_R) - K(\beta_S,\beta,\rho_R,H_R,H_S) \geq \mc V_\beta(\omega_{\beta_S}(H_S),H_S)
\end{align}
is \emph{sufficient} for cooling. The positive semidefinite function $K$ has the property $K(\beta_S,\beta,\rho_R,H_R) \rightarrow 0$ as $\beta_S\rightarrow \infty$ for any fixed $\beta,H_R,\rho_R>0$ and $H_S$.
\end{theorem}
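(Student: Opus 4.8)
The plan is to establish the sufficient condition by reducing the infinite family of Rényi-divergence inequalities \eqref{eq:second_laws} to the single inequality \eqref{eq:main_result} in the cold-target limit. By the characterization of catalytic thermal operations, cooling from $(\rho_R,H_R)$ to $(\omega_{\beta_S}(H_S),H_S)$ is possible if and only if
\begin{align}\label{eq:plan_monotones}
S_\alpha(\rho_R \| \omega_\beta(H_R)) \geq S_\alpha(\omega_{\beta_S}(H_S) \| \omega_\beta(H_S)) \quad \forall\, \alpha\geq 0.
\end{align}
The key observation is that $\mc V_\beta = S_{\alpha=0}$ is already handled by the necessary condition, so the whole game is to control the deficit between the value at $\alpha=0$ and the values at all other $\alpha$, uniformly in $\alpha$, and show that this deficit on the target side vanishes as $\beta_S\to\infty$.

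First I would fix the target side and analyze the function $\alpha \mapsto S_\alpha(\omega_{\beta_S}(H_S)\|\omega_\beta(H_S))$ explicitly; since both arguments are Gibbs states of the same Hamiltonian $H_S$, this is a classical Rényi-divergence between two thermal distributions and can be written in closed form in terms of the partition functions at inverse temperatures $\beta_S$, $\beta$, and $\alpha\beta_S+(1-\alpha)\beta$. As $\beta_S\to\infty$ this distribution concentrates on the ground space, so I expect $S_\alpha(\omega_{\beta_S}(H_S)\|\omega_\beta(H_S))$ to grow like $\mc V_\beta(\omega_{\beta_S}(H_S),H_S)$ plus an $\alpha$-dependent correction that I would bound uniformly. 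Next I would exploit the convexity/concavity properties of $\alpha\mapsto S_\alpha$ advertised in the introduction: monotonicity of $S_\alpha$ in $\alpha$ together with the boundary behaviour lets me argue that the \emph{most binding} constraint among \eqref{eq:plan_monotones} is the one at $\alpha\to 0$, so that satisfying the $\alpha=0$ inequality with a suitable slack $K$ forces all the others. Concretely, I would define
\begin{align}\label{eq:plan_K}
K(\beta_S,\beta,\rho_R,H_R,H_S) := \sup_{\alpha>0}\Big[\mc V_\beta(\omega_{\beta_S}(H_S),H_S) - \mc V_\beta(\rho_R,H_R) \nonumber\\
{}+ S_\alpha(\rho_R\|\omega_\beta(H_R)) - S_\alpha(\omega_{\beta_S}(H_S)\|\omega_\beta(H_S))\Big]_+,
\end{align}
and show that \eqref{eq:main_result} with this $K$ implies \eqref{eq:plan_monotones} for every $\alpha$ essentially by construction.

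The crux is then to prove $K\to 0$ as $\beta_S\to\infty$. The resource side $S_\alpha(\rho_R\|\omega_\beta(H_R))$ is a fixed bounded continuous function of $\alpha$ on $[0,2]$ (finite because $\rho_R$ has full rank), so the difference $\mc V_\beta(\rho_R,H_R) - S_\alpha(\rho_R\|\omega_\beta(H_R))$ is $O(\alpha)$ near $\alpha=0$ by the derivative formula established in the proof of Theorem~\ref{thm:necessary}. On the target side I would show that $\mc V_\beta(\omega_{\beta_S}(H_S),H_S) - S_\alpha(\omega_{\beta_S}(H_S)\|\omega_\beta(H_S))$ is likewise small and, crucially, that its $\alpha$-dependence becomes negligible as $\beta_S\to\infty$ because the target divergence flattens out into its $\alpha=0$ value once the state is concentrated on the ground space. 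Matching these two estimates uniformly over $\alpha$, and handling the large-$\alpha$ tail separately using monotonicity of $S_\alpha$ (which bounds the tail by the finite value at $\alpha=2$ or by the min-relative-entropy), yields $K\to 0$.

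The main obstacle will be the \emph{uniformity in $\alpha$}: the naive bounds are $O(\alpha)$ on the resource side but the target side carries its own $\alpha$-derivative, and I must show the target's $\alpha$-slope shrinks faster than the resource's as $\beta_S\to\infty$ so that the supremum in \eqref{eq:plan_K} is genuinely controlled rather than dominated by some intermediate $\alpha$. This is precisely where the convexity results on $\alpha\mapsto S_\alpha(\rho\|\sigma)$ flagged in the introduction should do the heavy lifting, letting me replace a pointwise-in-$\alpha$ estimate by a single statement about the shape of the divergence curve near the origin and thereby extract the critical $\beta_{\mathrm{cr}}$ from the requirement that the target curve lie below the resource curve on all of $[0,\infty)$.
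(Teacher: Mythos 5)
Your overall strategy---reduce the infinite family \eqref{eq:second_laws} to the small-$\alpha$ regime for cold targets, linearize both Renyi curves at $\alpha=0$ where the slope is the vacancy, and absorb the error into $K$---is the same as the paper's. But two things go wrong in the execution. First, your explicit definition of $K$ has the signs backwards: if $\mc V_\beta(\rho_R,H_R)-\mc V_\beta(\omega_{\beta_S}(H_S),H_S)\geq K$ with your $K$, then for each $\alpha$ you only obtain $2\left[\mc V_\beta(\rho_R,H_R)-\mc V_\beta(\omega_{\beta_S}(H_S),H_S)\right]\geq S_\alpha(\rho_R\|\omega_\beta(H_R))-S_\alpha(\omega_{\beta_S}(H_S)\|\omega_\beta(H_S))$, an \emph{upper} bound on the quantity you need to show is non-negative; for the implication to hold ``by construction'' the bracket must read $S_\alpha(\omega_{\beta_S}(H_S)\|\omega_\beta(H_S))-S_\alpha(\rho_R\|\omega_\beta(H_R))+\mc V_\beta(\rho_R,H_R)-\mc V_\beta(\omega_{\beta_S}(H_S),H_S)$. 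Relatedly, $\mc V_\beta$ is not $S_{\alpha=0}$ (which vanishes for full-rank states) but the derivative $S_0'$, so the relevant comparison is between $S_\alpha$ and the tangent line $\mc V_\beta\cdot\alpha$, not between $S_\alpha$ and $\mc V_\beta$; your claim that $\mc V_\beta(\rho_R,H_R)-S_\alpha(\rho_R\|\omega_\beta(H_R))$ is $O(\alpha)$ near $\alpha=0$ is false as stated, since this difference tends to $\mc V_\beta(\rho_R,H_R)$.

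Second and more substantively, the step you defer to ``convexity results flagged in the introduction'' is where the entire content of the theorem lives, and your description of the mechanism is inverted. The target's slope at the origin equals $\mc V_\beta(\omega_{\beta_S}(H_S),H_S)$, which \emph{diverges} as $\beta_S\to\infty$; it does not shrink. What actually happens (the paper's Lemma on concavity at low temperatures) is that $S_\infty(\omega_{\beta_S}(H_S)\|\omega_\beta(H_S))\leq\log Z_\beta$ stays bounded while the initial slope diverges, so the curve saturates within a window $\alpha\leq\delta(\beta_S)=\log Z_\beta/\mc V_\beta(\omega_{\beta_S}(H_S),H_S)\to 0$; proving that $S''_\alpha\leq 0$ on that window for $\beta_S>\beta_{\rm cr}$ requires an explicit computation of $S''_\alpha$ in terms of $\log Z$, the mean energy and the energy variance at the interpolated inverse temperature $\tilde\beta(\alpha)$, together with a uniform bound on $\tilde\beta(\delta(\beta_S))$, and is what licenses upper-bounding the target by its tangent line $\mc V_\beta(\omega_{\beta_S}(H_S),H_S)\,\alpha$ there. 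Monotonicity of $\alpha\mapsto S_\alpha$ then disposes of the tail $\alpha>\delta(\beta_S)$, and Taylor's theorem applied to the resource over the shrinking interval $[0,\delta(\beta_S)]$ produces the explicit $K=\delta(\beta_S)\cdot\max\left\{0,-\min_{\alpha\leq\delta(\beta_S)}S''_\alpha(\rho_R\|\omega_\beta(H_R))\right\}\to 0$. Without the concavity lemma your argument cannot exclude that the binding constraint sits at some intermediate $\alpha$, so as it stands the proof is incomplete.
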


The proof of the theorem is given in Sec. \ref{sec:proofmaintheorem}. Nonetheless we provide a sketch of the main ideas involved in such a proof at the end of this section. The function $K$ is given by
\begin{align}
\begin{split}
K(\beta_S,\beta,&\rho_R,H_R,H_S)=\\
&\max \left\lbrace0,-\delta(\beta_S) \min_{\alpha\leq \delta(\beta_S)} \frac{\partial^2}{\partial \alpha^2}S_\alpha(\rho_R\|\omega_{\beta}(H_R))\right\rbrace,\nonumber
\end{split}
\end{align}
where
\begin{equation}
\delta(\beta_S):=\log(Z_\beta)/\mc{V}_{\beta}(\omega_{\beta_S}(H_S),H_S)\geq 0,\quad Z_\beta = \tr(\e^{-\beta H_S}).\nonumber
\end{equation}
The bound \eqref{eq:main_result} applies for any possible (diagonal) resource state, however finding $K(\beta_S,\beta,\rho_R,H_R,H_S)$ involves a minimization, that although is feasible for low dimensional systems, might be an obstacle for practical purposes when dealing with large systems and for values of $\beta_S$ so that $K(\beta_S,\beta,\rho_R,H_R,H_S)$ cannot be neglected. That said, we will investigate kinds of resource systems $\rho_R,H_R$ for which  $K(\beta_S,\beta,\rho_R,H_R,H_S)= 0$. In those cases, the general sufficient condition given by \eqref{eq:main_result} taken together with the necessary condition \eqref{eq:thirdlawintro} will imply that that a \emph{necessary} and \emph{sufficient} condition is given simply by
\begin{align}
\mc V_\beta(\rho_R,H_R) \geq \mc V_\beta(\omega_{\beta_S}(H_S),H_S),
\end{align}
In particular, we will see in section~\ref{sec:thermal}, that this holds true for large classes of thermal non-equilibrium resources. Let us also note that $K(\beta_S,\beta,\rho_R,H_R,H_S)$, just as the vacancy, is additive over non-interacting and uncorrelated resources. We will use this property in the next section to investigate the setting of \emph{i.i.d.} resources.

In the result given above, we have focused on thermal target states. This is in fact not necessary. We show in the appendix~\ref{sec:app:epsilon} that a completely analogous result holds for states of the form
\begin{align}
\rho_\epsilon = (1-\epsilon) \ketbra{0}{0} + \epsilon \rho^\perp,\quad \epsilon \ll 1.
\end{align}
where $\rho^\perp$ is any density matrix which has full rank on the subspace orthogonal to the ground-state $\ket{0}$ and commutes with $H_S$.

\subsection{Sketch of the proof of Thm. \ref{thm:generaltheorem}}

As we have seen in the previous section, a set of sufficient conditions for a transition with catalytic thermal operations is given by the infinite set of inequalities of \eqref{eq:second_laws}. The main idea behind the proof is that when the target system is sufficiently cold ($\beta_S > \beta_{\text{cr}}$) it suffices to check the conditions \eqref{eq:second_laws} for very small $\alpha$. This follows from the fact that for $\beta_S > \beta_{\text{cr}}$ the r.h.s. of \eqref{eq:second_laws}, given by $S_{\alpha}(\rho_S\| \omega_{\beta}(H_S))$, rapidly saturates to its maximum value as we increase $\alpha$ and it is concave (see Fig. \ref{fig:renyifigure}). Given that one only needs to consider small values of $\alpha$ it is possible to make a Taylor expansion around $\alpha=0$ of $S_{\alpha}$ of the form
\begin{eqnarray}
\nonumber S_{\alpha}(\rho_R\| \omega_{\beta}(H_R))&\approx& S_{0}(\rho\|\omega_{\beta}(H_R))+ \frac{\partial S_{\alpha}(\rho\|\omega_{\beta}(H_R))}{\partial \alpha}\bigg|_{\alpha=0}\!\!\!\!\!\!\alpha\\ \label{eq:taylormaintext}
&+& k\; \alpha^2.
\end{eqnarray}
This reduces the infinite inequalities of \eqref{eq:secondlaws} to a single one that depends on the derivate of $S_{\alpha}$ and an error term $k$, that is related to the error term  appearing in Thm. \ref{thm:generaltheorem} denoted by $K$. This expansion can be further simplified noting that $S_{\alpha=0}(\rho\|\omega_{\beta}(H))=0$. The vacancy comes into play because of the identity
\begin{equation}
\frac{\partial S_{\alpha}(\rho\|\sigma)}{\partial \alpha}\bigg|_{\alpha=0}=S_{1}(\sigma\|\rho):=S(\sigma\|\rho),
\end{equation}
which inverts the arguments of the second term on the r.h.s. of \eqref{eq:taylormaintext}. Taking all these elements into account and accounting properly for the precision of the Taylor approximations we arrive at an inequality involving only the vacancy and a vanishing error term as determined by Thm. \ref{thm:generaltheorem}.

\begin{figure}
\includegraphics[width=0.45\linewidth]{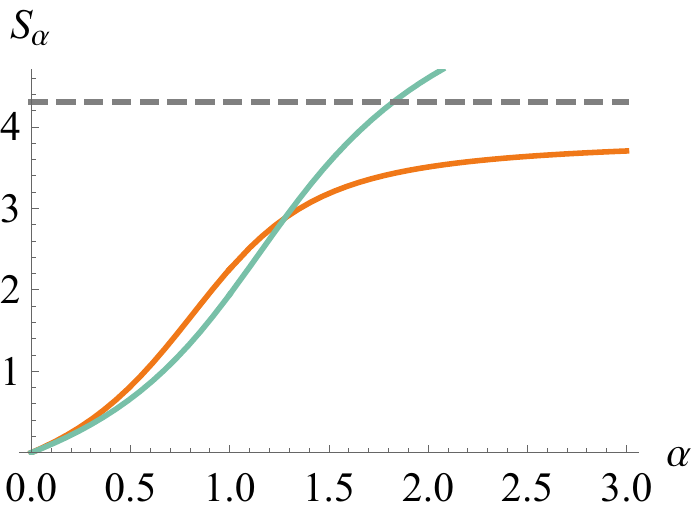}
\includegraphics[width=0.45\linewidth]{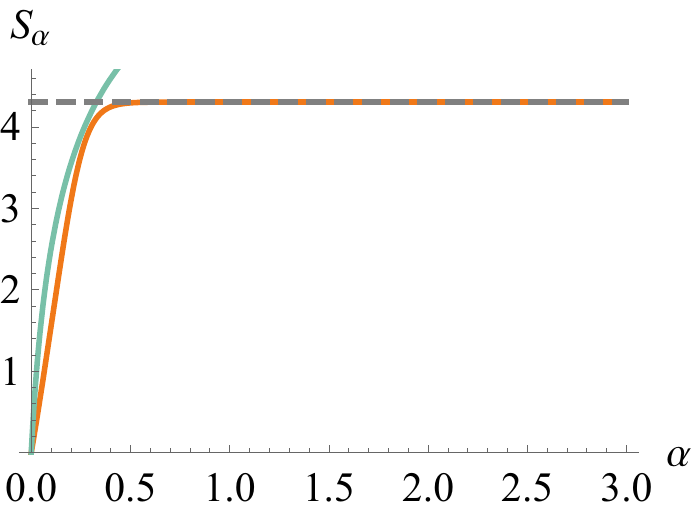}
\caption{The figure shows the behavior of $S_{\alpha}(\omega_{\beta_S}(H_S)\|\omega_{\beta}(H_S))$ (orange) and $S_{\alpha}(\rho_R\|\omega_{\beta}(H_R))$ (blue). \textbf{The left plot} shows a target state that is not very cold together with an insufficient resource. The transition is not possible since the blue line is below the orange line for $\alpha\lesssim 1.25$. \textbf{The right plot} shows the behavior when $\rho_S$ becomes very cold. The function becomes more similar to a step function. The fact that $S_{\alpha}(\rho_R\|\omega_{\beta}(H_R))$ (blue curve) is larger than the orange curve, which implies that the transition is possible, is determined by the behavior at very small values of $\alpha$, and up to a small error, by the fact that the derivate of the blue curve is larger than that of the orange curve at $\alpha=0$.}
\label{fig:renyifigure}
\end{figure}

\subsection{A short comment on catalysts}
\ro{As laid out in Sec. \ref{sec:set-up}, we define the catalytic thermal operations by including the possibility that the catalyst changes during the transition, as long as this change can be made arbitrarily small, as is standard in recent literature on the resource theoretic approach to thermodynamics \cite{Brandao2015}. This formulation is a form of exact catalysis, in the sense that as the error has to be arbitrarily small, the catalyst is returned arbitrarily unchanged. However, it is a possible to consider other forms of catalytic thermal operations which are either more restrictive about the change of the catalyst, where no error --not even arbitrarily small-- is allowed for; or less restrictive, in the sense the catalyst is allowed to change by a finite amount. We consider both alternatives on Appendix~\ref{app:sec:exact_catalysts} and \ref{app:approx_catalysts} respectively. First, we study the case in which one requires that the catalysts is always returned exactly in the same state, that is, taking $\epsilon=0$ in the definitions laid out in Sec. \ref{sec:set-up}. In this case it is no longer valid that a set of sufficient conditions is given by positive values of $\alpha$ on \eqref{eq:second_laws}, but one also has to consider Renyi-divergences for negative $\alpha$. This case is analyzed in Appendix~\ref{app:sec:exact_catalysts}, where we show that in this scenario i) The general necessary condition of Theorem~\ref{thm:necessary} holds and ii) A general sufficient condition similar to the one of Theorem~\ref{thm:generaltheorem} is derived. This general sufficient condition only differs on a multiplicative constant --independent of the final temperature to which one cools-- from the one derived in Theorem~\ref{thm:generaltheorem}.
Finally, in Appendix~\ref{app:approx_catalysts} we furthermore discuss the case of approximate catalysts. We put forward a consistent method to allow for finite errors on the catalyst while maintaining the validity of the third law of thermodynamics.
}

\section{I.i.d resources and scaling of the target temperature}
\label{sec:iid}
Theorem~\ref{thm:generaltheorem} together with the necessary condition~\eqref{eq:thirdlawnecessary} provide completely general sufficient and necessary condition, respectively, for cooling a system to target temperature $T_s=1/\beta_S$ (setting $k_B=1$) using a given resource $(\rho_R,H_R)$. They thus characterize the possibility of cooling in full generality. To obtain results for concrete physical situations and find out how the target temperature $T_S$ scales with physical key quantities of the resource --such as the system-size of the resource--  one has to choose a particular resource and calculate its vacancy as well as the error term $K$. Then one has to check how these quantities depend on the physical properties of interest.

We will now focus on the scaling between the size of the resource and the final temperature of the target system. For this we assume that the resource is given by a number of identically and independently distributed copies. Later we will also discuss other assumptions we can make about the resource.
 Thus, we consider the case where the resource state is given by $\rho_R=\varrho_R^{\otimes n}$ and Hamiltonian $H_R=\sum_i H_R^i$ where $H_R^i=\id_1 \otimes \cdots \otimes \id_{i-1} \otimes h_R \otimes \id_{i+1} \otimes \cdots \otimes \id_n$.

 Let us now consider the following task: Given fixed $\varrho_R,H_R,\beta,H_S$, find the minimum $n$ so that it is possible to cool down the target state to inverse temperature $\beta_S$.

By using Thm. \ref{thm:necessary} together with additivity of $\mc{V}_{\beta}$ we obtain that the \emph{necessary} number of copies $n^{\rm nec}(\beta_S)$ fulfills
\begin{equation}\label{eq:iidlowerbound}
n^{\rm nec}(\beta_S) \geq \frac{\mc{V}_\beta (\omega_{\beta_S}(H_S),H_S)}{\mc{V}_{\beta}(\varrho_R,h_R) }.
\end{equation}
By using Thm. \ref{thm:generaltheorem} in the \emph{i.i.d.} we also obtain a \emph{sufficient} number of copies $n^{\rm suff}$. The condition \eqref{eq:main_result} takes the form
\begin{equation}
n [ \mc{V}_{\beta}(\varrho_R,h_R) - K(\beta_S,\beta,\varrho_R,h_R,H_S) ]\geq \mc{V}_\beta (\omega_{\beta_S}(H_S),H_S).\nonumber
\end{equation}
Since this condition is sufficient, but not always necessary, we obtain
\begin{equation}\label{eq:iidupperbound}
n^{\rm suff}(\beta_S) \leq \frac{\mc{V}_\beta (\omega_{\beta_S}(H_S),H_S)}{\mc{V}_{\beta}(\varrho_R,h_R) - K(\beta_S,\beta,\varrho_R,h_R,H_S) }.
\end{equation}
Since the correction $K$ goes to zero as $\beta_S\rightarrow \infty$ (the target temperature going to zero), we see that
\begin{align}
\lim_{\beta_S\rightarrow \infty} \frac{n^{\rm suff}(\beta_S)}{n^{\rm nec}(\beta_S)} = 1.
\end{align}
It is also interesting to re-express the previous conditions to obtain a more transparent relation between the final achievable temperature and the number of copies. We will see next that $n^{\rm nec}(\beta_S)$ and $n^{\rm suff}(\beta_S)$ scale as $\beta_S$ for large $\beta_S$. Thus the target temperature approaches zero as $1/n$.

\subsection{Scaling of the target temperature}\label{sec:vac_free_energies}

In the special case where the target state is a thermal state one can reformulate the vacancy in terms of non-equilibrium free energies. Indeed, the vacancy of a thermal state at temperature $\beta_S$ simply takes the form
\begin{align}\label{eq:vac_as_F}
\vac_\beta(\omega_{\beta_S}(H),H) = \beta_S \Delta F_{\beta_S}(\omega_\beta(H_S),H_S),
\end{align}
with $\Delta F_\beta(\rho,H) = \langle H\rangle_\rho - \langle H \rangle_\beta - (S(\rho)-S(\omega_\beta))/\beta$.

In this case the condition \eqref{eq:thirdlawnecessary} reads:
\begin{align}
 \mc V_\beta(\rho_R,H_R) \geq \beta_S \Delta F_{\beta_S}(\omega_\beta(H_S),H_S).
\end{align}

From \eqref{eq:vac_as_F} we see that for large $\beta_S$ we have (assuming vanishing ground-state energy)
\begin{align}
\vac_\beta(\omega_{\beta_S}(H),H) = \beta_S E_\beta-S_\beta,\quad \text{as}\ \ \beta_S\rightarrow \infty.
\end{align}
Assuming again a resource system of $n$ non-interacting identical particles each described by $(\varrho_R,h_R)$, we then obtain that the minimum achievable temperature $T^{(n)}_S$ scales as
\begin{align}
 T^{(n)}_S  = \frac{1}{n}\frac{E^S_\beta}{\mc V_\beta(\varrho_R,h_R)},\quad n \gg 1.
\end{align}
This result is similar to the asymptotic result of Janzing et al. \cite{Janzing00}.

Lastly, let us point out that the above scaling relation implies that the probability $p$ to find the system in the ground-state after the cooling procedure increases exponentially to $1$ with $n$. For example, if the target system is a $d+1$ dimensional system with gap $\Delta$ above a unique ground-state, we have (for large $n$)
\begin{align}
p \geq 1/(1+d\e^{-\beta_S \Delta}) \approx 1 - d \e^{-n \mc V_\beta(\omega_{\beta_R}(h_R),h_R) \frac{\Delta}{E^S_\beta}}.\nonumber
\end{align}
Thus, while an exact third law holds in the sense that $n\rightarrow \infty$ for $T_S\rightarrow 0$, the ground-state probability asymptotically converges very quickly to unity.

The above relations demonstrate how one can obtain quantitative expressions of the unattainability principle from Theorems~\ref{thm:necessary} and \ref{thm:generaltheorem} by making assumptions about the given resources.

\subsection{Scaling of the vacancy with system size}
In the case of i.i.d. resources with non-interacting Hamiltonians the vacancy is an extensive quantity in the sense that it scales linearly with the number of particles. However, for arbitrary quantum systems with correlated and interacting constituents, it is in general difficult to calculate the vacancy and hence estimate directly how it scales with the number of particles. Nonetheless,  we can use the relation~\eqref{eq:vac_as_F} to argue that the vacancy will be extensive for large classes of many-body systems.

In particular, let us assume that the resource is in a thermal state of some local Hamiltonian. That is $\rho_R=\omega_{\tilde{\beta}}(\tilde{H}_R)$, where $\tilde{H}_R$ is \emph{any} local Hamiltonian (possibly differing form $H_R$) and $\tilde{\beta}$ is finite. In this case one can use Eq. \eqref{eq:vac_as_F} to write
\begin{equation}\label{eq:vancancy_free}
\mc V_\beta(\rho_R,H_R)= \tilde{\beta}\Delta F_{\tilde{\beta}}(\omega_{\beta_R}(H_R),\tilde{H}_R).
\end{equation}
From the fact that the von~Neumann entropy is sub-additive and from the locality of the Hamiltonians $H_R$ and $\tilde{H}_R$ it then follows that the vacancy $\mc V_\beta(\rho_R,H_R)$ scales (at most) \emph{linearly} with the system size. As a consequence, the minimal final temperature $T_S^{(n)}$ scales (at best) inversely proportional to the volume of the resource.

In the light of the previous considerations, it seems likely that a similar scaling holds for any resource (potentially under reasonable physical assumptions, such as clustering of correlations). We leave the general characterisation of many-body systems such that the vacancy is extensive as an interesting future research direction.

\section{Thermal resources}
\label{sec:thermal}
As discussed after the statement of Theorem~\ref{thm:generaltheorem}, it is useful to find general conditions under which the error term $K$ disappears and the sufficient condition coincides with the general necessary condition. Naturally, it is necessary to make additional assumptions about the resources to achieve this.

We now consider as resource state $\rho_R$ a (possibly multi-partite) thermal state of some Hamiltonian $H_R$ at inverse temperature $\beta_R$. In the following we will derive a simple expression that allows us to check whether
\begin{equation}\label{eq:positivityerror}
K(\beta_S,\beta,\rho_R,H_R,H_S)= 0, \:\: \forall \beta_S>0,H_S,
\end{equation}
and hence \eqref{eq:thirdlawnecessary} becomes a necessary and sufficient condition.
The reasoning is based on showing that
\begin{equation}\label{eq:renyientropythermal}
S_\alpha(\rho_R=\omega_{\beta_R}(H_R)||\omega_{\beta}(H_R))
\end{equation} is convex for a given range of $\alpha<1$, which implies \eqref{eq:positivityerror}. The convexity of \eqref{eq:renyientropythermal} can be determined by looking at the convexity of the average energy as a function of the inverse temperature of the resource
\begin{equation}
x \mapsto E_{x}^R := \tr (\omega_{x}(H_R)H_R).
\end{equation}
In particular, we will see that if $\beta_R< \beta$ and the function $x\mapsto E^R_x$ is convex  for $x\in[\beta_R,\beta]$, then the function $S_\alpha(\omega_{\beta_R}(H_R)||\omega_{\beta}(H_R))$ is convex for all $\alpha<1$.

\begin{theorem}\label{cor:simplecorollary} For resources of the form $(\omega_{\beta_R}(H_R),H_R)$ that are hotter than the bath, that is with $\beta_R \leq \beta$, if $E^R_{x}=\tr(\omega_{x}(H_R) H_R)$ is convex in the range $x\in [\beta_R,\beta]$, then \eqref{eq:thirdlawnecessary} is a sufficient and necessary condition for low temperature cooling.
\end{theorem}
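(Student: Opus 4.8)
The plan is to prove necessity and sufficiency separately, with essentially all the work going into sufficiency. Necessity is immediate: Theorem~\ref{thm:necessary} already establishes that \eqref{eq:thirdlawnecessary} is necessary for arbitrary resource and target, so nothing new is required. For sufficiency I would invoke Theorem~\ref{thm:generaltheorem}: the condition \eqref{eq:main_result} is sufficient, and it collapses exactly onto \eqref{eq:thirdlawnecessary} as soon as the error term $K$ vanishes. Thus the whole task reduces to showing $K(\beta_S,\beta,\rho_R,H_R,H_S)=0$. By the definition of $K$ this holds whenever $\alpha\mapsto S_\alpha(\rho_R\|\omega_\beta(H_R))$ is convex on the minimization range $\alpha\le\delta(\beta_S)$; since $\delta(\beta_S)=\log(Z_\beta)/\mc V_\beta(\omega_{\beta_S}(H_S),H_S)\to 0$ as $\beta_S\to\infty$, for sufficiently low target temperature (this is what fixes $\beta_{\mathrm{cr}}$) the relevant range sits inside $[0,1)$, so it suffices to prove convexity of $S_\alpha$ on $[0,1)$.

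Next I would compute $S_\alpha$ explicitly for the thermal resource $\rho_R=\omega_{\beta_R}(H_R)$. Writing $\phi(x):=\log\tr(\e^{-xH_R})$ and introducing the interpolated inverse temperature $x(\alpha):=\beta+\alpha(\beta_R-\beta)$, a direct substitution into the definition of $S_\alpha$ gives
\begin{align}
S_\alpha(\omega_{\beta_R}(H_R)\|\omega_\beta(H_R))=\frac{g(\alpha)}{\alpha-1},\quad g(\alpha):=\phi(x(\alpha))-\alpha\,\phi(\beta_R)-(1-\alpha)\phi(\beta).\nonumber
\end{align}
The purpose of this rewriting is twofold: $g(0)=g(1)=0$, and $x(\alpha)$ sweeps exactly the interval $[\beta_R,\beta]$ as $\alpha$ ranges over $[0,1]$, which is precisely where the hypothesis controls the energy. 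The bridge to the energy is the standard identity $E^R_x=\tr(\omega_x(H_R)H_R)=-\phi'(x)$, so that convexity of $x\mapsto E^R_x$ on $[\beta_R,\beta]$ is equivalent to $\phi'''(x)\le 0$ there.

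The heart of the argument is a convexity estimate for $S_\alpha=g(\alpha)/(\alpha-1)$. I would set $\psi(\alpha):=(\alpha-1)^3\,\partial_\alpha^2 S_\alpha$ and check that it has the explicit form
\begin{align}
\psi(\alpha)=g''(\alpha)(\alpha-1)^2-2g'(\alpha)(\alpha-1)+2g(\alpha),\nonumber
\end{align}
whose lower-order terms telescope upon differentiation to leave the clean identity $\psi'(\alpha)=g'''(\alpha)(\alpha-1)^2$. Since $g'''(\alpha)=(\beta_R-\beta)^3\,\phi'''(x(\alpha))$ and both factors are nonpositive on $[0,1]$ (the hypothesis $\beta_R\le\beta$ gives $(\beta_R-\beta)^3\le 0$, while convexity of $E^R_x$ gives $\phi'''\le 0$ on $[\beta_R,\beta]$), we obtain $g'''\ge 0$, so $\psi$ is nondecreasing on $[0,1]$. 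Together with $\psi(1)=2g(1)=0$ this forces $\psi(\alpha)\le 0$ on $[0,1)$; dividing by $(\alpha-1)^3<0$ reverses the inequality and yields $\partial_\alpha^2 S_\alpha\ge 0$ on $[0,1)$, the desired convexity. Hence $K=0$ and \eqref{eq:main_result} reduces to \eqref{eq:thirdlawnecessary}, completing sufficiency.

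I expect the main obstacle to be bookkeeping rather than any deep idea: getting the cancellation $\psi'(\alpha)=g'''(\alpha)(\alpha-1)^2$ exactly right, and in particular tracking the three sign reversals that all conspire to give the correct conclusion — the factor $(\beta_R-\beta)^3\le 0$ from $\beta_R\le\beta$, the $\phi'''\le 0$ coming through $E^R_x=-\phi'$, and the final division by $(\alpha-1)^3<0$ — since an error in any one of them inverts the result. A secondary point to treat carefully is the reduction in the first paragraph: one must confirm that the minimization range defining $K$ is genuinely confined to $[0,\delta(\beta_S)]\subset[0,1)$ for $\beta_S>\beta_{\mathrm{cr}}$, so that convexity on $[0,1)$ is all that the sufficient condition actually needs.
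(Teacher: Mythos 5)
Your proposal is correct, and its skeleton is exactly the paper's: necessity from Theorem~\ref{thm:necessary}, sufficiency by showing the error term $K$ in Theorem~\ref{thm:generaltheorem} vanishes, which reduces to convexity of $\alpha\mapsto S_\alpha(\omega_{\beta_R}(H_R)\|\omega_\beta(H_R))$ on the range $\alpha\le\delta(\beta_S)<1$, which in turn is reduced to convexity of $x\mapsto E^R_x$ on $[\beta_R,\beta]$ via the interpolated inverse temperature $\tilde\beta(\alpha)=\beta+\alpha(\beta_R-\beta)$. Where you diverge is the last, substantive step. The paper computes $S''_\alpha$ explicitly in terms of $\log Z$, $E$ and $\mathrm{Var}(H)$ at $\tilde\beta$ (Eq.~\eqref{eq:app:secondder}), rewrites $\log Z_{\beta_R}-\log Z_{\tilde\beta}$ as $\int_0^{\tilde\beta-\beta_R}E_{\beta_R+x}\,\mathrm{d}x$, and observes that the required inequality \eqref{eq:provedinfiga} compares the integral of $E$ with the integral of its tangent line at $\tilde\beta$ over $[\beta_R,\tilde\beta]$ --- immediate from convexity, and argued geometrically with a figure. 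You instead package $S_\alpha=g(\alpha)/(\alpha-1)$ with $g(0)=g(1)=0$, set $\psi=(\alpha-1)^3 S''_\alpha$, and exploit the telescoping identity $\psi'=g'''(\alpha-1)^2$ together with $\psi(1)=0$ and the sign chain $(\beta_R-\beta)^3\le 0$, $\phi'''\le 0$ (equivalent to $E^R=-\phi'$ convex). The two finishing arguments are mathematically equivalent --- your $\psi(\alpha)\le 0$ is precisely the statement that the bracket in \eqref{eq:app:secondder} is nonnegative, i.e.\ a third-order Taylor-remainder bound for $\log Z$ --- but yours is purely algebraic and self-contained, whereas the paper's integral picture makes the role of convexity of $E^R_x$ visually transparent and also yields the converse statement (concavity of $S_\alpha$ for resources colder than the bath) for free by reversing one inequality. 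All the sign bookkeeping in your version checks out.
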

This Theorem simplifies considerably the task of formulating bounds on the third law, since the average energy is a much more accessible quantity than the Renyi-divergences. In Sec. \ref{sec:systemsforwhichisconvex} we discuss several classes of physically motivated conditions that imply that $E^R_{x}=\tr(\omega_{x}(H_R) H_R)$ is convex. We emphasize, however, that the convexity of the energy is not a necessary condition for the correction $K$ to vanish: There are cases for which $x \mapsto E^R_x$ is not convex for the whole range of inverse temperatures $[\beta_R,\beta]$ and \eqref{eq:thirdlawnecessary} is nevertheless a sufficient and necessary condition for cooling.

Lastly, let us mention that condition \eqref{eq:positivityerror} is fulfilled if, for a fixed $\beta_R$, the bath's inverse temperature $\beta$ is sufficiently large, without any extra assumption on the convexity of $E_x$. This implies, that for sufficiently cold baths, \eqref{eq:thirdlawnecessary} is also a sufficient and necessary condition. This is shown in Appendix \ref{sec:app:thermalstates}, together with several properties of the Renyi divergences for thermal states that might be of independent interest and also include a proof of Thm. \ref{cor:simplecorollary}.

\subsection{Systems for which the energy is convex}

\label{sec:systemsforwhichisconvex}
As implied by Cor. \ref{cor:simplecorollary}, Eq.~\eqref{eq:thirdlawnecessary} becomes a sufficient and necessary condition for cooling if the resource is a thermal state hotter than the bath and its average energy is convex in the inverse temperature. We will now see that the convexity of the energy is fulfilled by a wide range of physical models.

We will first re-express the convexity in terms of the heat capacity. This allows to check for general systems whether $E^R_x$ is convex, as the heat capacity as a function of the temperature is an intensively studied quantity for many-body systems. Using the definition of heat capacity $C_{x}:=\frac{ \mathrm{d}  E^R_{x}}{\mathrm{d}T}$, with $T=1/x$, we find that the convexity of the energy, as formulated in the condition of Thm. \ref{cor:simplecorollary}, can be expressed as
\begin{equation}\label{eq:conditionconvexity}
\frac{ \mathrm{d}^2  E^R_{x}}{\mathrm{d}x^2}=\frac{1}{x^2}\left( \frac{C_{x}}{x} - \frac{\mathrm{d} C_x}{\mathrm{d}x}\right)\geq 0  \text{ with } x\in [\beta_R,\beta].
\end{equation}
Equivalently, this condition can be expressed as
\begin{align}
\frac{1}{\beta_R^2} C_{\beta_R} - \frac{1}{\beta'^2} C_{\beta'}\geq  0
\end{align}
for all $\beta_R\leq \beta' \leq \beta$.
In most thermodynamics systems, the heat capacity is monotonically increasing with the temperature, hence $\frac{\mathrm{d} C_x}{\mathrm{d}x} \leq 0$ and \eqref{eq:conditionconvexity} is satisfied. A seminal exception to this case is given by the so-called Schottky anomaly, which is present in certain solid states at very low temperatures \cite{Tari2003}.
We thus see, that for thermodynamic systems the convexity of the energy is a very natural property.
Nevertheless it can fail, in particular in finite systems. We will now show that even for large classes of finite systems the energy is convex.

This is due to the following Lemma, which we prove in the appendix.
\begin{lemma}[Equidistant levels]\label{lemm:equidistant}
Consider any Hamiltonian with equidistant and non-degenerate energy levels. Then the function $\beta\mapsto E_\beta$ is convex.
\end{lemma}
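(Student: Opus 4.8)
The plan is to reduce the convexity of $\beta\mapsto E_\beta$ first to a statement about the third central moment of the energy, and then to an elementary one-variable inequality. Writing $Z_\beta=\Tr(\e^{-\beta H})$ and $E_\beta=\langle H\rangle_\beta=-\partial_\beta\log Z_\beta$, a direct differentiation gives $\partial_\beta E_\beta=-(\langle H^2\rangle_\beta-\langle H\rangle_\beta^2)$ and, differentiating once more, $\partial_\beta^2 E_\beta=\langle (H-\langle H\rangle_\beta)^3\rangle_\beta$, the third central moment of $H$ in the Gibbs state $\omega_\beta(H)$. Thus $E_\beta$ is convex if and only if this third central moment is nonnegative for all $\beta>0$; equivalently, the energy variance $\mathrm{Var}_\beta(H)$ is nonincreasing in $\beta$.

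First I would exploit the equidistant structure. Writing the non-degenerate levels as $E_n=E_0+n\Delta$ with $\Delta>0$ and $n$ ranging over $\{0,1,\dots,d-1\}$ (or over $\N$ in the infinite-dimensional case), central moments are invariant under the shift by $E_0$ and scale as $\Delta^3$. The problem therefore reduces to showing that the third central moment of the integer-valued variable $N$ with $P(N=n)\propto q^n$, $q=\e^{-\beta\Delta}$ --- a (possibly truncated) geometric distribution --- is nonnegative. Equivalently, setting $x=\beta\Delta$ it suffices to prove that $g(x):=\langle N\rangle$ is convex in $x$.

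The key computation is that the partition function $\sum_{n=0}^{d-1}\e^{-nx}$ yields the closed form $g(x)=b(x)-d\,b(dx)$, where $b(y):=1/(\e^{y}-1)$ is the Bose-Einstein function. In the infinite-dimensional case the truncation term is absent, $g=b$, and convexity is immediate since $b''(y)=\e^{y}(\e^{y}+1)/(\e^{y}-1)^3>0$. For finite $d$ one has $g''(x)=b''(x)-d^3 b''(dx)$, and introducing $\psi(y):=y^3 b''(y)$ this rewrites as $g''(x)=x^{-3}\bigl(\psi(x)-\psi(dx)\bigr)$. Since $d\ge 1$ gives $dx\ge x$, convexity of $g$ follows once $\psi$ is shown to be nonincreasing on $(0,\infty)$.

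The remaining step, and the only genuinely technical one, is the monotonicity of $\psi(y)=y^3\e^{y}(\e^{y}+1)/(\e^{y}-1)^3$. Taking a logarithmic derivative and clearing the positive denominators reduces $\psi'\le 0$ to the inequality $G(y):=y(\e^{2y}+4\e^{y}+1)-3(\e^{2y}-1)\ge 0$ for $y>0$. The subtlety is that $\psi$ is extremely flat at the origin: one checks $G(0)=G'(0)=G''(0)=0$, so the inequality cannot be read off from a low-order expansion. I would handle this by iterated differentiation: $G''(y)=4\e^{y}H(y)$ with $H(y):=(y-2)\e^{y}+y+2$, and in turn $H(0)=H'(0)=0$ while $H''(y)=y\,\e^{y}\ge 0$. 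Unwinding these three nested ``value and first derivative vanish, second derivative nonnegative'' steps yields $H\ge0$, hence $G''\ge0$, hence $G\ge0$, which gives $\psi'\le0$ and closes the argument. The main obstacle is thus purely the near-origin degeneracy, which the nested structure bottoming out at $y\,\e^{y}\ge0$ resolves cleanly.
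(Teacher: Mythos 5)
Your proof is correct and follows essentially the same route as the paper's: both reduce $E''_\beta\ge 0$ to the monotonicity of the same one-variable function (your $\psi(y)=y^3 b''(y)$ coincides, up to rescaling of the argument, with the paper's $f(\beta,\gamma)=\gamma^3\sinh(\gamma\beta\Delta)/\sinh(\gamma\beta\Delta/2)^4=8\gamma^3 b''(\gamma\beta\Delta)$, so $\partial_\gamma f\le 0$ and $\psi'\le 0$ are the same statement), and your final elementary inequality $G(y)\ge 0$ is exactly the paper's $y(2+\cosh y)-3\sinh y\ge 0$ multiplied by $2\e^{y}$. The only cosmetic difference is that you settle this last inequality by three nested differentiations bottoming out at $y\e^{y}\ge 0$, whereas the paper expands it in a Taylor series and checks the coefficients termwise.
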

Immediate examples of Hamiltonians with equidistant energy levels are two-level systems or harmonic oscillators. But in fact, the lemma covers a much wider class of models since the vacancy is unitarily invariant and additive over non-interacting subsystems.

 It follows that also any harmonic system and any system described by free fermions has a convex energy function, since free bosonic and fermionic systems can always be made non-interacting by a normal-mode decomposition. In such a normal-mode decomposition they simply correspond to a collection of non-interacting harmonic oscillators or two-level systems, respectively.

  These systems include highly-correlated (even entangled) systems and no thermodynamic limit needs to be taken. A particularly interesting resource that is included by these results is that of hot thermal light, which has been considered before as a valuable resource for cooling \cite{Mari2012}.

It can furthermore be checked that for large but finite many-body systems whose density of states in the bulk is well approximated by $\mu(\epsilon) \simeq \e^{\gamma \epsilon - \alpha \epsilon^2}$ the average energy $E_\beta$ is convex in $\beta$ \footnote{Such a density of states is typical for many-body systems in the bulk of the energy, but deviations typically do appear in the tails of the distribution.}.

Finally, for every finite system there is a critical $\beta_c$ such that $E_\beta$ is convex for all $\beta>\beta_c$. Thus as soon as $\beta>\beta_R>\beta_c$, the sufficient condition~\eqref{eq:thirdlawnecessary} holds for small enough target temperatures. This means that if an experimenter has a mechanism to pre-cool the environment to a very low temperature, well below $1/\beta_c$, and the resource has a temperature, which is larger than that of the environment but still smaller than $1/\beta_c$, then condition~\eqref{eq:thirdlawnecessary} holds as a sufficient and necessary condition.

\subsection{A source of work}

Our formalism can also incorporate a source of work as particular case of a resource for cooling. The limitations on cooling as a function of the input of work have been studied in Ref. \cite{Masanes2017}. There it is shown that the fluctuations of work, rather than its average value, have to diverge when the target state reaches vanishing final temperature and if the heat bath has finite heat capacity. We will here derive a result similar in spirit using Thm. \ref{thm:generaltheorem}, although we employ a different model for the work source. Importantly, our result implicitly allows for infinite heat capacity in the heat bath. It should thus be viewed as being complementary to the results in Ref.~\cite{Masanes2017}.

Let us model the work source by a system $R$ with Hamiltonian
\begin{equation}
H^w_R=\sum_{x=-d/2}^{d/2}\frac{Ex}{d}\ketbra{\frac{Ex}{d}}{\frac{Ex}{d}}.
\end{equation}
One can see this Hamiltonian as a $d$-dimensional harmonic oscillators with energies bounded between $E/2$ and $-E/2$. We are interested in the limit of $d\rightarrow \infty$. In this case $R$ is similar to the model put forward in Ref. \cite{Skrzypczyk2011,Masanes2017}, with the difference that we consider here a finite value of $E$. We enforce the condition that the battery acts as a energy reservoir and not as an entropy sink (that would make the task of cooling trivial) by assuming the work source to be in state $\rho^w_R=\id/d$ (we can also interpret this as the work-source being at temperature $+\infty$). These assumptions on the work source are justified by the fact that it fulfills the second law of thermodynamics.

By this statement we mean the following. Suppose we want to use a non-equilibrium state $\rho$ of some system $S$ with Hamiltonian $H_S$ to extract work and put it as average energy into the work-source. To do this we implement a (catalytic) thermal operation on the heat bath, system $S$ and the work-source $R$. Then the increase of energy on the work-source (i.e., the work) $\Delta E_R$ is bounded by the non-equilibrium free energy of the \emph{system} as
\begin{align}
  \Delta E_R \leq \Delta F_\beta(\rho,H_S).
\end{align}
This is shown in Appendix \ref{sec:worksource}.

Now we will show that the third law can be obtained, in the sense that both $E$ and $d$ have to diverge in order to be able to use $R$ to cool down a system to zero final temperature. Let us first recall that by Lemma \ref{lemm:equidistant} a sufficient and necessary condition for cooling for such a resource is given by \eqref{eq:thirdlawnecessary}. Furthermore, the vacancy of the work source is given by
\begin{eqnarray}
\nonumber \mc{V}_{\beta}(\rho_R^w,H_R^w)&=&S(\omega_{\beta}(H_R^w)\| \id/d)\\
\nonumber&=&\tr \left( \omega_{\beta}(H_R^w)(\log (\omega_{\beta}(H_R^w) - \log (\id/d))\right)\\
\nonumber &=& -\beta \tr \left( \omega_{\beta}(H_R^w)H_R^w\right) + \log(d)
- \log(Z_{\beta} (H_R^w))\\
&\leq & -\beta E/2 +\log(d) - \log(Z_{\beta} (H_R^w)).
\end{eqnarray}
The partition function can be upper bounded as
\begin{eqnarray}
Z_{\beta}(H_R^w)&=& \sum_{x=-d/2}^{d/2} e^{-\beta \frac{Ex}{d}}\\
&\geq & e^{\beta E/2} + (d-1) e^{-\beta E/2}\\
&=& e^{\beta E/2} d\left( \frac{1}{d} + \frac{d-1}{d} e^{-\beta E}\right).
\end{eqnarray}
Hence, we find that
\begin{equation}
\mc{V}_{\beta} (\rho_R^w,H_R^w) \leq - \log \left( \frac{1}{d} + \frac{d-1}{d} e^{-\beta E}\right).
\end{equation}
Combined with \eqref{eq:thirdlawnecessary}, this implies that a necessary condition for cooling to a state $\rho_S$ is given by
\begin{equation}\label{eq:thirdlawwork}
\mc{V}_\beta ( \rho_S,H_S) \leq - \log \left( \frac{1}{d} + \frac{d-1}{d} e^{-\beta E}\right).
\end{equation}
Most importantly, note that in order to obtain a state $\rho_S$ that is close to a Gibbs state at zero temperature the r.h.s. of \eqref{eq:thirdlawwork} has to diverge. For this to be possible both $E$ and $d$ have to diverge, since
\begin{eqnarray}
\lim_{d\rightarrow \infty} \mc{V}_{\beta}(\rho_R^w,H_R^w) &\leq& \beta E,\\
\lim_{E\rightarrow \infty} \mc{V}_{\beta}(\rho_R^w,H_R^w) &\leq& \log (d).
\end{eqnarray}
This implies the unattainability principle in the sense that an infinitely dense spectrum with unbounded energy is needed for cooling to absolute zero.

Before coming to the proof of Thm.~\ref{thm:generaltheorem} and our conclusions, let us briefly comment on a different model of work and a possible source of confusion that might arise. A  model of work known as a \emph{work-bit} has been used in the literature of thermal operations \cite{Horodecki2013,Brandao2015}. In this model, it is assumed that the work-source is a two-level system with energy-gap $W$ that undergoes a transition from the excited state $\ket{W}$ to the ground-state $\ket{0}$ to implement a transition on a system $S$. Using the results of Ref. \cite{Horodecki2013} one can show that it is possible to cool a system to the ground-state in this model as long as $W> \log Z_\beta$, where $Z_\beta$ is the partition function of the system $S$. This means that there exists a (catalytic) thermal operation that implements cooling in the sense that
\begin{align}
\omega_\beta(H_S)\otimes\proj{W} \mapsto \proj{0}\otimes \proj{0}. \label{eq:cooling_wit}
\end{align}
At first this seems to be in conflict with our results. However, using the vacancy, it is easy to show that the above process is extremely unstable: it only works for pure initial states of the work-bit. Indeed, one can use condition \eqref{eq:thirdlawnecessary} to establish limits on cooling if the initial state on the work-bit is any full-rank state that approximates the excited state $\proj{W}$ to \emph{arbitrary} but finite precision. A simple calculation for a initial state of the work system as $\rho=(1-\epsilon)\proj{W} +\epsilon\proj{0}$ with Hamiltonian $H_W=W \proj{W}$ yields the bound
\begin{equation}
\mc{V}_{\beta}(\rho,H_W)\leq - \log\left( \epsilon(1-\epsilon)\right) \:\: \forall \: W.
\end{equation}
This implies that perfect cooling is impossible for any value of $W$ --even diverging-- if the initial state of the work-bit is a full-rank state.

\section{Proof of Thm. \ref{thm:generaltheorem}}\label{sec:proofmaintheorem}
We will now proof Thm.~\ref{thm:generaltheorem}. Before we go to the details, let us first  explain the general logic behind the result. It is clear that to obtain a single necessary and sufficient condition for cooling at low temperatures, we have to show that the infinite set of second laws in eq.~\eqref{eq:second_laws} collapse to a single condition.
The first important step in the proof is the following Lemma.
\begin{lemma}[Concavity at low temperatures]\label{lemma:concavity}
Let $\beta>0$ and a Hamiltonian $H_S$ be given. There exists a critical inverse temperature $\beta_{\rm cr}$ such that for all $\beta_S>\beta_{\rm cr}$ and
\begin{align}
\alpha \mapsto S''_\alpha(\omega_{\beta_S}(H_S)\|\omega_\beta(H_S)) \leq 0.
\end{align}
and
\begin{align}
S_\infty (\omega_{\beta_S}(H_S)\|\omega_\beta(H_S)) \leq \log Z_\beta.
\end{align}
Here, the critical value $\delta(\beta_S)$ is given by
\begin{align}
\delta(\beta_S) = \frac{\log (Z_\beta)}{\vac_\beta(\omega_{\beta_S}(H_S),H_S)}.
\end{align}
\begin{proof}
See appendix \ref{app:sec:concavity_low_temperatures}.
\end{proof}
\end{lemma}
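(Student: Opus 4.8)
The plan is to reduce everything to a classical computation with the log–partition function and then to exploit that the relevant window of Rényi parameters shrinks as the target becomes cold. Since both $\omega_{\beta_S}(H_S)$ and $\omega_\beta(H_S)$ are functions of $H_S$, they commute and $S_\alpha$ collapses to the classical Rényi divergence between the two Gibbs distributions over the spectrum of $H_S$. Writing $\psi(x):=\log Z_x=\log\tr\e^{-xH_S}$ (so that $\psi'(x)=-E_x$ with $E_x:=\tr(\omega_x(H_S)H_S)$, and $\psi''(x)=\mathrm{Var}_x(H_S)\ge 0$), a short calculation gives
\begin{equation}
S_\alpha(\omega_{\beta_S}(H_S)\|\omega_\beta(H_S))=\frac{f(\alpha)}{\alpha-1},\quad f(\alpha)=\alpha\bigl(\psi(\beta)-\psi(\beta_S)\bigr)+\psi(\beta+\alpha\gamma)-\psi(\beta),\nonumber
\end{equation}
with $\gamma:=\beta_S-\beta>0$. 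Here $f$ is convex with $f(0)=f(1)=0$, and one checks $-f'(0)=\vac_\beta(\omega_{\beta_S}(H_S),H_S)$, recovering the identity quoted in the main text. Throughout I set the ground-state energy to zero, so that $\psi(\beta_S)\ge 0$ and $Z_{\beta_S}\ge 1$.

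First I would translate the concavity claim into a statement about $\psi$. Differentiating twice and using that $\alpha-1<0$ on the window (where eventually $\delta(\beta_S)<1$), the condition $S''_\alpha\le 0$ is equivalent to $G(\alpha):=(\alpha-1)^2f''(\alpha)-2(\alpha-1)f'(\alpha)+2f(\alpha)\ge 0$. Substituting the above $f$ and changing variables to $x=\beta+\alpha\gamma$ (note $(\alpha-1)\gamma=x-\beta_S$), the $\psi(\beta)$- and $\psi(\beta_S)$-terms collapse and one is left with $G(\alpha)=\Phi(\beta+\alpha\gamma)$, where
\begin{align}
\Phi(x)&:=(\beta_S-x)^2\psi''(x)-2\!\int_x^{\beta_S}\!(\beta_S-s)\,\psi''(s)\,\rmd s\nonumber\\
&=(\beta_S-x)^2\,\mathrm{Var}_x(H_S)-2\,S\bigl(\omega_x(H_S)\,\|\,\omega_{\beta_S}(H_S)\bigr),\nonumber
\end{align}
and the last equality uses the elementary identity $\int_x^{\beta_S}(\beta_S-s)\psi''(s)\,\rmd s=S(\omega_x(H_S)\|\omega_{\beta_S}(H_S))$ (integration by parts, with $\psi'=-E$). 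Thus proving concavity on $[0,\delta(\beta_S)]$ amounts to showing $\Phi(x)\ge 0$ for $x\in[\beta,\beta+\delta(\beta_S)\gamma]$.

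The estimate then rests on two size considerations in $\beta_S$. (i) The window is bounded: since $\vac_\beta(\omega_{\beta_S}(H_S),H_S)=\gamma E_\beta+\psi(\beta_S)-\psi(\beta)\ge\tfrac12\gamma E_\beta$ for large $\gamma$ (using $\psi(\beta_S)\ge 0$), we get $\delta(\beta_S)\gamma=\gamma\log Z_\beta/\vac_\beta\le 2\log Z_\beta/E_\beta=:L$, a constant depending only on $\beta,H_S$; hence $x$ stays in the fixed compact interval $[\beta,\beta+L]$ as $\beta_S\to\infty$. (ii) On $[\beta,\beta+L]$ the variance $\mathrm{Var}_x(H_S)$ is continuous and strictly positive (the Hamiltonian being non-trivial), hence bounded below by some $c_0>0$, so the first term of $\Phi$ is at least $c_0(\beta_S-\beta-L)^2=\Theta(\beta_S^2)$; whereas $S(\omega_x(H_S)\|\omega_{\beta_S}(H_S))=(\beta_S-x)E_x+\psi(\beta_S)-\psi(x)$ grows only linearly, $\mathrm{O}(\beta_S)$, since $E_x\le E_\beta$ and $\psi$ stays bounded on the window. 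Choosing $\beta_{\rm cr}$ so large that $c_0(\beta_S-\beta-L)^2$ exceeds $2\,S(\omega_x\|\omega_{\beta_S})$ uniformly over $x\in[\beta,\beta+L]$ then yields $\Phi\ge 0$, i.e.\ $S''_\alpha\le 0$ on $[0,\delta(\beta_S)]$. I expect this step to be the main obstacle: with no convexity-of-energy assumption available, one cannot control the sign of $\Phi$ pointwise, and must instead argue that the $\alpha$-window collapses (as $1/\beta_S$) fast enough that the quadratic-in-$\beta_S$ variance term dominates the linear relative-entropy term, uniformly across the window.

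Finally, the bound on $S_\infty$ is immediate: $S_\infty(\omega_{\beta_S}(H_S)\|\omega_\beta(H_S))=\log\bigl(\max_i (\omega_{\beta_S})_i/(\omega_\beta)_i\bigr)$, and since $(\omega_{\beta_S})_i/(\omega_\beta)_i=(Z_\beta/Z_{\beta_S})\e^{-\gamma E_i}$ is maximised at the ground state $E_i=0$, this equals $\log Z_\beta-\log Z_{\beta_S}\le\log Z_\beta$, using $Z_{\beta_S}\ge 1$.
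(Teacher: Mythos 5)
Your proof is correct and follows essentially the same route as the paper's: after the change of variables to the interpolated inverse temperature $\tilde\beta(\alpha)=\beta+\alpha(\beta_S-\beta)$, your $\Phi(\tilde\beta)$ is (up to the prefactor $-2/(1-\alpha)^3$) exactly the bracketed expression the paper derives for $S''_\alpha$, and both arguments conclude by confining $\tilde\beta$ to a fixed compact interval of length $\sim\log Z_\beta/E_\beta$, lower-bounding the variance there, and letting the $\Theta(\beta_S^2)$ variance term beat the $\mathrm{O}(\beta_S)$ remainder. The identification of the linear remainder with $S(\omega_x\|\omega_{\beta_S})$ via integration by parts is a tidy reformulation but not a different method.
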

 Using this result, we can now \emph{upper bound} the Renyi-divergence on the target by its linear approximation at the origin in this parameter regime.
Since $S'_0(\rho || \omega_\beta(H)) = S(\omega_\beta(H) || \rho)=\vac_\beta(\rho,H)$, we get
\begin{align}
S_\alpha(\omega_{\beta_S}(H_S)||\omega_\beta(H_S)) \leq \vac_\beta(\omega_{\beta_S}(H_S),H_S)\alpha,\quad \forall \alpha\leq \alpha_c.
\end{align}
Secondly, for small enough target temperatures we also have $S_\infty(\omega_{\beta_S}(H_S)||\omega_\beta(H_S)) \leq \vac_\beta(\omega_{\beta_S}(H_S),H_S)\alpha_c$. Since $\alpha\mapsto S_\alpha$ is monotonously increasing, the second laws in eq.~\eqref{eq:second_laws} are hence also satisfied if
\begin{align}
S_\alpha(\rho_R || \omega_\beta(H_R)) > \vac_\beta(\omega_{\beta_S}(H_S),H_S)\alpha,\quad \forall\alpha\leq \alpha_c.
\end{align}
For small temperatures, we can further restrict the range of $\alpha$ to the interval $[0,\delta(\beta_S))$, where $\delta(\beta_S)$ is given by:
\begin{align}
\delta(\beta_S) = \frac{S_\infty(\omega_{\beta_S}(H_S)||\omega_\beta(H_S))}{\vac_\beta(\omega_{\beta_S}(H_S),H_S)}.
\end{align}

The final step is now given by bounding the Renyi-divergence of the \emph{resource} $S_\alpha(\rho_R || \omega_\beta(H_R))$. In particular, if we knew that it was convex (such as in the case of a thermal resource with $E^R_\beta$ being convex), we could \emph{lower} bound it by its linear approximation at the origin and obtain the necessary and sufficient condition \eqref{eq:thirdlawnecessary}.

In the general case, $S_\alpha(\rho_R || \omega_\beta(H_R))$ will not be convex. But we can use that we only have to check small values of $\alpha<\delta(\beta_S)$ and simply Taylor expand $S_\alpha(\rho_R || \omega_\beta(H_R))$. Using Taylor's theorem we then obtain
\begin{align}
S_\alpha(\rho_R || \omega_\beta(H_R)) \geq \vac_\beta(\rho_R,H_R)\alpha - k(\beta_S,\beta,\rho_R,H_R)\alpha^2.
\end{align}
This yields as new \emph{sufficient} condition
\begin{align}
\vac_\beta(\rho_R,H_R)\alpha - k(\beta_S,\beta,\rho_R,H_R)\alpha^2 \geq \vac_\beta(\omega_{\beta_S}(H_S),H_S)\alpha, \nonumber
\end{align}
for all $0<\alpha\leq \delta(\beta_S)$. The function $k(\beta_S,\beta,\rho_R,H_R)\geq 0$ is given by
\begin{align}
k(\beta_S,\beta,\rho_R,H_R) = \max\left\{0,-\min_{\alpha\leq \delta(\beta_S)} S''_\alpha(\rho_R \| \omega_\beta(H_R))\right\}.\nonumber
\end{align}
We can now divide the sufficient condition by $\alpha$ and, since $k(\beta_S,\beta,\rho_R,H_R)\geq 0$, replace $\alpha$ by $\delta(\beta_S)$ to arrive at the final sufficient condition
\begin{align}
\vac_\beta(\rho_R,H_R) - K(\beta_S,\beta,\rho_R,H_R,H_S) \geq \vac_\beta(\omega_{\beta_S}(H_S),H_S), \nonumber
\end{align}
with $K(\beta_S,\beta,\rho_R,H_R,H_S) = k(\beta_S,\beta,\rho_R,H_R)\delta(\beta_S)$. This finishes the proof. \qed

\section{Summary}
In this work we have investigated the limits on low temperature cooling when arbitrary systems out of equilibrium are used as a resource. We provide sufficient and necessary conditions that establish novel upper and lower bounds on the amount of resources that are needed to cool a system close to its ground state. We found that the limitations are ruled by a single quantity, namely the vacancy. This is remarkable, since at higher temperatures there is an infinite family of ``second laws'' that need to be checked to determine whether a non-equilibrium state transition is possible.

We have only focused on the amount of non-equilibrium resources, as we assume access to an infinite heat bath and we leave considerations about the time and complexity of the cooling protocol aside. These other kind of resources have been explored in other complementary works on the third law \cite{Masanes2017,Scharlau2016,Silva2016,DiFranco2013}. It would be interesting to see if the the vacancy plays a role to express the limitations on the size of the heat bath or any other resources that diverge when cooling a system to absolute zero. More particularly, it is an interesting question for future work to obtain the optimal sufficient scaling of the size of the heat bath and the potential ``catalyst'' $\tau$ that is needed to cool the system to the final low temperature \cite{Ng2015}.
In this work, we have required the catalyst to be returned exactly. The necessary condition~\eqref{eq:thirdlawnecessary} and the resulting quantitative unattainability principle is, however, stable when one requires instead that the vacancy of the catalyst only changes little (see appendix \ref{app:approx_catalysts} for a discussion of approximate catalysts). We leave it as an open problem to study how the sufficient condition behaves in such an approximate scenario.

The results of Sec. \ref{sec:thermal} suggests that for a large class of physically relevant systems the third law can be expressed simply as the monotonicity of the vacancy. It would be of interest to specify more general assumptions on a many-body system so that this is the case. On the other hand, there exist systems for which the vacancy is not a sufficient condition. This open the possibility to have families of resources that, although are out of equilibrium, are useless for cooling. We leave this as an open question for future work.

Lastly, we note that in this work we have focused on the expenditure of non-equilibrium resources for low temperature cooling, which are precisely the resources that are employed in laser cooling \cite{lasercooling}. We leave as an interesting open research direction to analyse protocols of laser cooling in the light of the bounds obtained here.

\emph{Acknowledgments.}
We thank Lluis Masanes, Joseph M. Renes and Jens Eisert for interesting discussions and valuable feedback. This work has been supported by the ERC (TAQ), the DFG (GA 2184/2-1) and the Studienstiftung des Deutschen Volkes.
\bibliography{cooling.bib}
\clearpage

\onecolumngrid
\appendix

\section{Proof of concavity of Renyi divergence for low temperatures}
\label{app:sec:concavity_low_temperatures}
In this section we proof the Lemma~\ref{lemma:concavity} about concavity of the Renyi-divergence at low temperatures. The Lemma holds for any Hamiltonian with pure point spectrum, a gap above the ground-state and  the property that the partition sum exists for any positive temperature.
\begin{lemma}[Concavity at low temperatures]
Let $\beta>0$ and a Hamiltonian $H_S$ with ground-state degeneracy $g_0$ be given. There exists a critical inverse temperature $\beta_{\rm cr}$ such that for all $\beta_S>\beta_{\rm cr}$ and for all $0<\alpha<\delta(\beta_S)$ we have
\begin{align}
\alpha \mapsto S''_\alpha(\omega_{\beta_S}\|\omega_\beta) \leq 0.
\end{align}
and
\begin{align}
S_\infty (\omega_{\beta_S}\|\omega_\beta) \leq \log Z_\beta.
\end{align}
Here, the critical value $\delta(\beta_S)$ is given by
\begin{align}
\delta(\beta_S) = \frac{\log (Z_\beta)}{\vac_\beta(\omega_{\beta_S}(H_S),H_S)} <1.
\end{align}
\end{lemma}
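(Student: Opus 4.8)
The plan is to use that $\omega_{\beta_S}(H_S)$ and $\omega_\beta(H_S)$ commute, so the Renyi-divergence reduces to a classical expression over the spectrum $\{E_i\}$ of $H_S$. Writing $\Delta\beta := \beta_S-\beta>0$, $\gamma(\alpha):=\beta+\alpha\,\Delta\beta$ and $\phi(x):=\log Z_x=\log\tr(\e^{-xH_S})$, a direct computation gives $\tr(\omega_{\beta_S}^{\alpha}\omega_\beta^{1-\alpha})=Z_{\gamma(\alpha)}/(Z_{\beta_S}^{\alpha}Z_\beta^{1-\alpha})$, so that $S_\alpha=f(\alpha)/(\alpha-1)$ with
\begin{align}
f(\alpha)=\phi(\gamma(\alpha))-\alpha\,\phi(\beta_S)-(1-\alpha)\,\phi(\beta).
\end{align}
The crucial structural fact is that $f$ is the cumulant generating function of $\log(\omega_{\beta_S}/\omega_\beta)$ and hence convex; concretely $f''(\alpha)=(\Delta\beta)^2\,\phi''(\gamma(\alpha))=(\Delta\beta)^2\,\mathrm{Var}_{\omega_{\gamma(\alpha)}}(H_S)\geq 0$, while $f'(\alpha)=\Delta\beta\,\phi'(\gamma(\alpha))+\phi(\beta)-\phi(\beta_S)$.

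First I would reduce the concavity claim to a polynomial positivity. Differentiating $S_\alpha=f/(\alpha-1)$ twice gives
\begin{align}
S''_\alpha=\frac{1}{(\alpha-1)^3}\Big[\,f''(\alpha)(\alpha-1)^2-2f'(\alpha)(\alpha-1)+2f(\alpha)\,\Big].
\end{align}
Since $(\alpha-1)^3<0$ on $(0,1)$ and $\delta(\beta_S)<1$, the claim $S''_\alpha\leq 0$ is equivalent to $h(\alpha):=f''(\alpha)(\alpha-1)^2-2f'(\alpha)(\alpha-1)+2f(\alpha)\geq 0$ for all $\alpha\in(0,\delta(\beta_S))$.

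The heart of the argument is a domination estimate showing that the $(\Delta\beta)^2$-term in $h$ dominates the remainder uniformly once $\beta_S$ is large. Substituting the expressions above, $h(\alpha)=(\Delta\beta)^2\,\phi''(\gamma)(\alpha-1)^2-2\,\Delta\beta\,\phi'(\gamma)(\alpha-1)+R(\alpha)$, where $R(\alpha)$ is built only from $\phi(\gamma),\phi(\beta),\phi(\beta_S)$ and is $\mathrm{O}(1)$. The key observation is that on the shrinking interval $\alpha\in[0,\delta(\beta_S)]$ the interpolating inverse temperature $\gamma(\alpha)$ stays in a fixed compact set: using $\vac_\beta(\omega_{\beta_S},H_S)=\Delta\beta\,\langle H_S\rangle_{\omega_\beta}+\log(Z_{\beta_S}/Z_\beta)$ one finds $\delta(\beta_S)\,\Delta\beta=\log(Z_\beta)\,\Delta\beta/\vac_\beta\to\log(Z_\beta)/\langle H_S\rangle_{\omega_\beta}$ as $\beta_S\to\infty$, so $\gamma(\alpha)\in[\beta,\gamma^\ast]$ for some finite $\gamma^\ast$. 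On $[\beta,\gamma^\ast]$ the variance $\phi''(\gamma)=\mathrm{Var}_{\omega_\gamma}(H_S)$ is continuous and strictly positive (there is at least one excited level), hence bounded below by some $v_{\min}>0$, while $\phi',\phi$ and $\phi(\beta_S)=\log Z_{\beta_S}\in[\log g_0,\log Z_\beta]$ are uniformly bounded. Therefore $h(\alpha)\geq(\Delta\beta)^2 v_{\min}(1-\delta(\beta_S))^2-C_1\,\Delta\beta-C_2$ for constants $C_1,C_2$ independent of $\alpha$ and $\beta_S$; as this lower bound is a quadratic in $\Delta\beta$ with positive leading coefficient it becomes positive for $\beta_S$ large enough, which defines $\beta_{\rm cr}$. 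I expect this uniformity---in particular the boundedness of $\gamma(\alpha)$ over the whole range, which rests on the precise linear growth of $\vac_\beta$ in $\Delta\beta$---to be the main technical obstacle.

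Finally, the two remaining claims are short. Normalising the ground-state energy to zero and writing $p_i=\e^{-\beta_S E_i}/Z_{\beta_S}$, $q_i=\e^{-\beta E_i}/Z_\beta$, one has $S_\infty(\omega_{\beta_S}\|\omega_\beta)=\log\max_i(p_i/q_i)=\log(Z_\beta/Z_{\beta_S})$, the maximum being attained at the ground state $E_0=0$; since $Z_{\beta_S}=\sum_i\e^{-\beta_S E_i}\geq g_0\geq 1$ this is at most $\log Z_\beta$. The same inequality gives $\delta(\beta_S)=\log(Z_\beta)/\vac_\beta<1$ as soon as $\vac_\beta>\log Z_\beta$, which holds for $\beta_S>\beta_{\rm cr}$ because $\vac_\beta\to\infty$ as $\beta_S\to\infty$.
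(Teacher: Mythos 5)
Your proposal is correct and follows essentially the same route as the paper: both reduce to the commuting (classical) expression, write the second derivative in terms of $\log Z$, the mean energy and the variance at the interpolated inverse temperature $\tilde\beta(\alpha)=\beta+\alpha(\beta_S-\beta)$, and then observe that the variance term, quadratic in $\beta_S-\beta$, dominates the linear and constant terms because $\tilde\beta(\alpha)\leq\tilde\beta(\delta(\beta_S))$ stays in a fixed compact interval (limit $\beta+\log Z_\beta/E_\beta$), so the variance is bounded below by a positive constant there. The treatment of $S_\infty$ and of $\delta(\beta_S)<1$ also matches the paper's.
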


\begin{proof}
Let us first prove that the max-Renyi-divergence is upper bounded by the partition function at the environment temperature. Suppose that $\beta_S>\beta$ and let us write
\begin{align}
S_{\alpha}(\omega_{\beta_S}(H_S)\|\omega_\beta(H_S))&=\frac{1}{\alpha-1}\log\left(\sum_ig_i \e^{-\alpha(\beta_S-\beta)E_i}(Z_{\beta}/Z_{\beta_S})^\alpha \frac{\e^{-\beta E_i}}{Z_\beta}\right) \\
&= \frac{1}{\alpha-1}\log\left(\e^{-\alpha(\beta_S-\beta)E_0}(Z_{\beta}/Z_{\beta_S})^\alpha\sum_ig_i\e^{-\alpha(\beta_S-\beta)(E_i-E_0)}\frac{\e^{-\beta E_i}}{Z_\beta}\right),
\end{align}
where $E_i$ denote the different energies of $H_S$, with degeneracies $g_i$. Assuming w.l.o.g. $E_0=0$, we write this as
\begin{align}
S_{\alpha}(\omega_{\beta_S}(H_S)\|\omega_\beta(H_S))&= \frac{\alpha}{\alpha-1}\log( Z_\beta/Z_{\beta_S}) + \frac{1}{\alpha-1}\log\left(1 + \sum_{i>0} \e^{-\alpha(\beta_S-\beta)E_i}\frac{\e^{-\beta E_i}}{Z_\beta} \right).
\end{align}
It is now obvious that in the limit we obtain
\begin{align}
S_{\infty}(\omega_{\beta_S}(H_S)\|\omega_\beta(H_S))&= \lim_{\alpha\rightarrow\infty}S_{\alpha}(\omega_{\beta_S}(H_S)\|\omega_\beta(H_S)) = \log(Z_{\beta}) - \log(Z_{\beta_S}) \leq \log Z_{\beta}.
\end{align}

As a second step let us find the condition for which $\delta(\beta_S)<1$. To do that we express the vacancy as
\begin{align}
\vac_\beta(\omega_{\beta_S}(H_S),H_S) = \beta_S E_\beta - S_\beta + \log Z_{\beta_S},
\end{align}
where we write $S_\beta := S(\omega_\beta(H_S))$. We thus need that
\begin{align}
\beta_S E_\beta - S(\omega_\beta(H_S)) > \log Z_\beta - \log Z_{\beta_S}.
\end{align}
Relaxing to the sufficient criterion $\beta_S E_\beta - S_\beta > \log Z_\beta = S_\beta - \beta E_\beta$ we thus obtain
\begin{align}
\beta_S > \frac{2S_\beta - \beta E_\beta}{E_\beta}.
\end{align}

Let us now turn to the concavity. We will use the representation of $S''_\alpha$ proven in the next section, which is given by
\begin{align}
S''_\alpha(\omega_{\beta_S}\|\omega_\beta) = \frac{2}{(1-\alpha)^3}\left(\log Z_{\beta_S}-\log Z_{\tilde\beta(\alpha)} + (\beta_S-\tilde{\beta}(\alpha))E_{\tilde\beta(\alpha)}-(\beta_S-\tilde\beta(\alpha))^2 \mathrm{Var}(H)_{\tilde\beta(\alpha)}\right),
\end{align}
where $\tilde\beta(\alpha)=\beta(1-\alpha)+\alpha \beta_S$.  Since we are only interested in $\alpha<\delta(\beta_S)<1$, we have $\beta\leq \tilde\beta(\alpha) < \beta_S$. We therefore have to show that the terms in the parenthesis are negative. Let us use that the average energy is monotonic with $\beta$ and that $Z_{\tilde\beta(\alpha)}>1$ to bound these terms as
\begin{align}
\label{eq:parenthesis}
\text{parenthesis}\ &\leq \log Z_{\beta_S} + (\beta_S-\tilde{\beta}(\alpha))E_{\tilde\beta(\alpha)}-(\beta_S-\tilde\beta(\alpha))^2 \mathrm{Var}(H)_{\tilde\beta(\alpha)}\nonumber\\
&\leq \log Z_{\beta_S} + (\beta_S-\beta)E_{\beta}-(\beta_S-\tilde\beta(\alpha))^2 \mathrm{Var}(H)_{\tilde\beta(\alpha)}\nonumber\\
&\leq \log(d) + (\beta_S-\beta)E_{\rm max}-(\beta_S-\tilde\beta(\alpha))^2 \min_{x\in [\beta,\tilde\beta(\alpha)]}\mathrm{Var}(H)_{x}.
\end{align}
Now we bound $\tilde\beta(\alpha)$.  To do that we use that $\tilde\beta(\alpha)\leq \tilde\beta(\delta(\beta_S))=: \tilde\beta^*(\beta_S)$. It is clear that if we can bound $\tilde\beta^*(\beta_S)$ by a constant, the terms in the parenthesis become negative for some $\beta_S$ since the second order term in $\beta_S$ dominates. To see that $\tilde\beta^*(\beta_S)$ is indeed upper bounded by a constant, we again write the vacancy as
\begin{align}
\vac_\beta(\omega_{\beta_S}(H_S),H_S) &= -S(\omega_\beta)+ \beta_S E_\beta + \log Z_{\beta_S}
\end{align}
to obtain
\begin{align}
\beta^*:= \lim_{\beta_S\rightarrow \infty} \tilde\beta^*(\beta_S) &= \lim_{\beta_S\rightarrow \infty}\beta(1-\delta(\beta_S)) + \delta(\beta_S)\beta_S\\
&= \beta + \lim_{\beta_S\rightarrow \infty}\frac{\log Z_\beta}{\beta_S E_\beta +\log Z_{\beta_S}-S(\omega_\beta(H_S))}\beta_S = \beta+ \frac{\log Z_\beta}{E_\beta}.
\end{align}
This finishes the proof that $\beta_{\rm cr}$ as claimed in the Lemma exists. We also note that the function $\tilde\beta^*(\beta_S)$ is monotonically decreasing for all $\beta_S$ such that $\tilde\beta^*(\beta_S)<1$.
Finally, note that eq.~\eqref{eq:parenthesis} allows to give upper bounds \ro{on $\beta_{\rm cr}$} once one has lower bounds on the energy variance for inverse temperatures in the interval $[\beta,\beta^*]$.
\end{proof}


\section{Renyi divergence between thermal states}\label{sec:app:thermalstates}
Here, we will specialize to the situation where the resource states are thermal, with inverse temperature $\beta_R$. We now calculate the Renyi divergence for $\alpha<1$ in this case. We first write:
\begin{align}
S_\alpha (\omega_{\beta_R} || \omega_\beta) &= -\frac{\alpha}{\alpha-1}\log Z_{\beta_R} + \log Z_\beta + \frac{1}{\alpha-1}\log\tr(\e^{-\beta_R H\alpha}\e^{-\beta H(1-\alpha)}) \\
&= -\frac{\alpha}{\alpha-1}\log Z_{\beta_R} + \log Z_{\beta} + \frac{1}{\alpha-1}\log Z_{(\beta_R-\beta)\alpha + \beta}\\
&= -\frac{\alpha-1}{\alpha-1}\log Z_{\beta_R} + \log Z_{\beta} + \frac{1}{\alpha-1}\log(Z_{(\beta_R-\beta)\alpha + \beta}/Z_{\beta_R}) \\
&= \log(Z_{\beta}/Z_{\beta_R}) + \frac{1}{\alpha-1}\log(Z_{(\beta_R-\beta)\alpha + \beta}/Z_{\beta_R}).
\end{align}

We will now show that the function is convex provided $\beta_R<\beta$ and that the  function $x\mapsto E_{\beta_R+x}$ is convex for $0\leq x \leq \beta-\beta_R$.  For the second derivative (with $\tilde{\beta}=(\beta_R-\beta)\alpha + \beta$) we obtain:
\begin{align}
  S_\alpha (\omega_{\beta_R} || \omega_\beta)'' &= \frac{2}{(1-\alpha)^3}\log Z_{\beta_R} -\frac{2}{(1-\alpha)^3}\log Z_{\tilde{\beta}} - 2\frac{1}{(1-\alpha)^2}\partial_\alpha \log Z_{\tilde{\beta}} + \frac{1}{\alpha-1}\partial^2_\alpha \log Z_{\tilde{\beta}} \\
&= \frac{2}{(1-\alpha)^3}\log Z_{\beta_R} -\frac{2}{(1-\alpha)^3}\log Z_{\tilde{\beta}} - 2\frac{1}{(1-\alpha)^2}(\beta-\beta_R)E_{\tilde{\beta}} - \frac{1}{1-\alpha}(\beta-\beta_R)^2 \mathrm{Var}(H)_{\tilde{\beta}}  \\
&=\frac{2}{(1-\alpha)^3}\left[\log Z_{\beta_R} - \log Z_{\tilde{\beta}} -(1-\alpha)(\beta-\beta_R)E_{\tilde{\beta}} - \frac{(1-\alpha)^2}{2}(\beta-\beta_R)^2\mathrm{Var}(H)_{\tilde{\beta}} \right].
\end{align}
Utilizing $(1-\alpha)(\beta-\beta_R) = \tilde{\beta} - \beta_R$, we can write this as
\begin{align}\label{eq:app:secondder}
  S_\alpha (\omega_{\beta_R} || \omega_\beta)'' &= \frac{2}{(1-\alpha)^3}\left[\log Z_{\beta_R} - \log Z_{\tilde{\beta}} - (\tilde{\beta}-\beta_R)E_{\tilde{\beta}} - \frac{(\tilde{\beta}-\beta_R)^2}{2}\mathrm{Var}(H)_{\tilde{\beta}}\right].
\end{align}
Here, we have introduced the average energy $E_\beta$ and the variance $\mathrm{Var}(H)_\beta = \langle H^2\rangle_\beta - \langle H\rangle_\beta^2$, which fulfill $\partial_x E_x = - \mathrm{Var}(H)_x$. With these expressions at hand, we will now show Thm. \ref{cor:simplecorollary} and another result about about the convexity of Renyi divergences for sufficiently large reference temperature $\beta$.

\subsection{Proof of Thm. \ref{cor:simplecorollary}}

We need to show that the r.h.s. of \eqref{eq:app:secondder} is positive with the premise that $x \mapsto E_x$ is convex in $x\in [ \beta_R,\beta]$, $\beta_R \leq \beta$ and $\alpha <1$. This last condition on $\alpha$ implies that we need to show that

\begin{align}
\log Z_{\beta_R} - \log Z_{\tilde{\beta}} \geq (\tilde{\beta}-\beta_R)E_{\tilde{\beta}} + \frac{(\tilde{\beta}-\beta_R)^2}{2}\mathrm{Var}(H)_{\tilde{\beta}}.
\end{align}
We use an integral representation of the l.h.s:
\begin{align}
\log Z_{\beta_R} - \log Z_{\tilde{\beta}} &= -\int^{\tilde{\beta}-\beta_R}_{0} \frac{\mathrm{d}}{\mathrm{d} x}\log Z_{\beta_R+x}\, \mathrm{d}x = \int^{\tilde{\beta}-\beta_R}_{0} E_{\beta_R+x} \mathrm{d} x.
\end{align}
Hence, we conclude that what needs to be shown is
\begin{align}\label{eq:provedinfiga}
 \int^{\tilde{\beta}-\beta_R}_{0} E_{\beta_R+x} \mathrm{d} x \geq (\tilde{\beta}-\beta_R)E_{\tilde{\beta}} + \frac{(\tilde{\beta}-\beta_R)^2}{2}\mathrm{Var}(H)_{\tilde{\beta}}.
\end{align}
Whether this inequality is satisfied, and thus, $S_{\alpha}(\rho_R\|\omega_{\beta}(H))$ is convex, is entirely determined by the function $x\mapsto E_x$.  This is due to the fact that the derivative of $E_x$ is given by $-\mathrm{Var}(H)_x$, so that the right hand side can be seen as a linear approximation to the function $E_x$. A geometrical interpretation is provided in Fig \ref{fig:illustration} showing that it is trivially satisfied when $E_x$ is convex. This finishes the proof.

As a final remark, although not useful to obtain bounds on the third law, we note that a completely analogous argument implies that if $\alpha<1$, $E_x$ is convex but in contrast to the previous case $\beta_R\geq  \beta$, then it is fulfilled that
\begin{align}\label{eq:provedinfigb}
 \int^{\tilde{\beta}-\beta_R}_{0} E_{\beta_R+x} \mathrm{d} x \leq (\tilde{\beta}-\beta_R)E_{\tilde{\beta}} + \frac{(\tilde{\beta}-\beta_R)^2}{2}\mathrm{Var}(H)_{\tilde{\beta}}.
\end{align}
This shows that in the case of resources colder than the bath, the function $S_\alpha(\rho_R\|\omega_{\beta}(H))$ is concave.

\begin{figure}
\includegraphics[width=6cm]{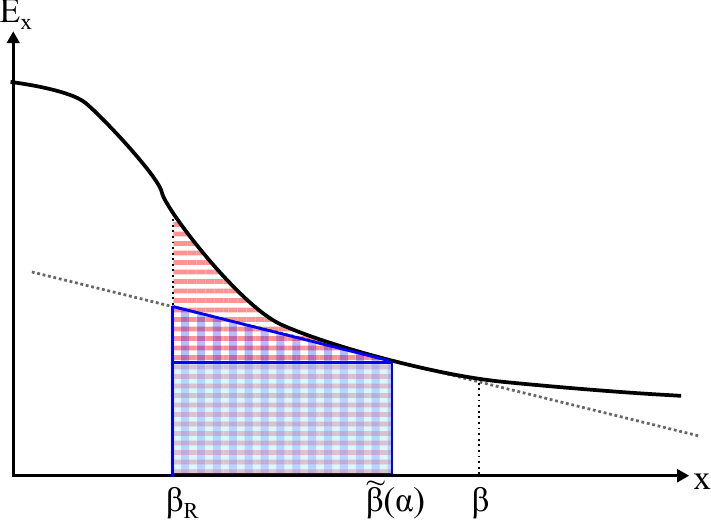}
\caption{The l.h.s. of \eqref{eq:provedinfiga} is represented by the red striped area under the curve. The r.h.s. corresponds with the blue striped region. This can be seen by noting that the blue striped region can be decomposed onto a rectangle of sides $\tilde{\beta}  - \beta_R$ and $E_{\tilde{\beta}}$ (light blue) which corresponds with the first term of the r.h.s of \eqref{eq:provedinfiga}; and a triangle that corresponds to the second term. If the function is $E_x$ convex, the red region is always larger than the blue region. }
\label{fig:illustration}
\end{figure}

\subsection{Very cold heat baths}

We will now show that in the case of very cold heat baths (very large $\beta$) we also have that $S_{\alpha}(\omega_{\beta_R}\|\omega_{\beta})$ is convex, and hence \eqref{eq:thirdlawnecessary} becomes sufficient and necessary.

\begin{theorem} For any resource of the form ($\omega_{\beta_R}(H_R),H_R)$, given a fixed $\beta_R$ there exist a sufficiently large value of $\beta$ so that \eqref{eq:thirdlawnecessary} is a sufficient and necessary condition for low temperature cooling.
\end{theorem}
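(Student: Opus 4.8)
The plan is to work directly with the exact expression~\eqref{eq:app:secondder} for the second derivative $S''_\alpha(\omega_{\beta_R}\|\omega_\beta)$ and to show that, once $\beta$ is large enough, its bracketed factor stays \emph{strictly positive} on a fixed subinterval $\alpha\in[0,\alpha_0]$ with $0<\alpha_0<1$. Since the prefactor $2/(1-\alpha)^3$ is positive for $\alpha<1$, this yields convexity of $\alpha\mapsto S_\alpha(\omega_{\beta_R}(H_R)\|\omega_\beta(H_R))$ there. Writing $\tilde\beta=(1-\alpha)\beta+\alpha\beta_R$, I set
\[
B(\alpha)=\log Z_{\beta_R}-\log Z_{\tilde\beta}-(\tilde\beta-\beta_R)E_{\tilde\beta}-\tfrac{(\tilde\beta-\beta_R)^2}{2}\,\mathrm{Var}(H)_{\tilde\beta},
\]
so that the claim reduces to $B(\alpha)\ge c>0$ uniformly on $[0,\alpha_0]$.

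The essential observation is that for $\alpha\le\alpha_0<1$ one has $\tilde\beta\ge(1-\alpha_0)\beta$, so $\tilde\beta\to\infty$ \emph{uniformly} in $\alpha$ as $\beta\to\infty$ with $\beta_R$ fixed; this is exactly the regime of a bath much colder than the resource, $\beta\gg\beta_R$. Normalising the ground-state energy to $0$ and letting $\Delta>0$ be the gap and $g_0$ the ground-state degeneracy, I would use the elementary facts $\log Z_{\tilde\beta}\to\log g_0$ together with $E_{\tilde\beta}=O(\mathrm{e}^{-\Delta\tilde\beta})$ and $\mathrm{Var}(H)_{\tilde\beta}=O(\mathrm{e}^{-\Delta\tilde\beta})$. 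The polynomial prefactors satisfy $(\tilde\beta-\beta_R)\le\beta$ and $(\tilde\beta-\beta_R)^2\le\beta^2$, which are overwhelmed by the exponential suppression; hence the two subtracted terms vanish uniformly and $B(\alpha)\to\log Z_{\beta_R}-\log g_0$. Because $Z_{\beta_R}>g_0$ for every finite $\beta_R$, this limit is a strictly positive constant, so there is a $\beta$ large enough that $B(\alpha)\ge\tfrac12(\log Z_{\beta_R}-\log g_0)>0$ for all $\alpha\in[0,\alpha_0]$.

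To finish, I would invoke Lemma~\ref{lemma:concavity}: for low target temperatures only the second laws~\eqref{eq:second_laws} with $\alpha\le\delta(\beta_S)$ need to be verified, and $\delta(\beta_S)\to0$ as $\beta_S\to\infty$ since the target vacancy $\vac_\beta(\omega_{\beta_S}(H_S),H_S)$ diverges. Fixing $\alpha_0=\tfrac12$ and taking $\beta$ as produced above, any sufficiently cold target satisfies $\delta(\beta_S)\le\alpha_0$, so $S_\alpha(\omega_{\beta_R}(H_R)\|\omega_\beta(H_R))$ is convex on the entire relevant range $[0,\delta(\beta_S)]$. Then $k(\beta_S,\beta,\rho_R,H_R)=\max\{0,-\min_{\alpha\le\delta(\beta_S)}S''_\alpha\}=0$, hence $K=0$, and the general sufficient condition~\eqref{eq:main_result} collapses exactly onto the necessary condition~\eqref{eq:thirdlawnecessary}, which is the desired equivalence.

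The delicate point is making the decay estimates genuinely uniform in $\alpha$ and verifying that the exponential suppression of $E_{\tilde\beta}$ and $\mathrm{Var}(H)_{\tilde\beta}$ really beats the growing factors $\beta$ and $\beta^2$; this is where the spectral gap enters. I would also stress that one \emph{cannot} hope for convexity on all of $[0,1)$: a short expansion of $B$ near $\alpha=1$ (where $\tilde\beta\to\beta_R$) gives $B\approx\frac{(\tilde\beta-\beta_R)^3}{6}\,\partial_x^2E_x\big|_{\beta_R}$, whose sign equals that of the third cumulant of $H_R$ at $\beta_R$ and may be negative. This is precisely why restricting to small $\alpha$ through Lemma~\ref{lemma:concavity}---equivalently keeping $\tilde\beta$ close to the large value $\beta$---is indispensable.
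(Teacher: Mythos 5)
Your proposal is correct and follows essentially the same route as the paper: the paper's proof also reduces the claim to the positivity of the bracket in Eq.~\eqref{eq:app:secondder} (in its integral form \eqref{eq:provedinfiga}) for $\alpha$ bounded away from $1$, and argues that the subtracted terms vanish because the exponential decay of $E_{\tilde\beta}$ and $\mathrm{Var}(H)_{\tilde\beta}$ in $\tilde\beta\sim(1-\alpha)\beta$ beats the polynomial growth of $\tilde\beta-\beta_R$, while the remaining term tends to the strictly positive constant $\log Z_{\beta_R}-\log g_0$. Your added care about uniformity in $\alpha$ and the remark that convexity must fail near $\alpha=1$ when the third cumulant at $\beta_R$ is negative are consistent refinements of the paper's (admittedly sketchy) argument rather than a different method.
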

\begin{proof}
We only give a sketch and show that $S_{\alpha}(\omega_{\beta_R}\|\omega_{\beta})$ is convex for values of $\alpha<\alpha_c$, where $\alpha_c<1$ is chosen arbitrarily. Recalling Eq. \ref{eq:provedinfiga}, we then need to show that
\begin{align}\label{eq:provedinfiga'}
\int^{\tilde{\beta}-\beta_R}_{0} E_{\beta_R+x} \mathrm{d} x \geq (\tilde{\beta}-\beta_R)E_{\tilde{\beta}} + \frac{(\tilde{\beta}-\beta_R)^2}{2}\mathrm{Var}(H)_{\tilde{\beta}}.
\end{align}
Note that in the limit of large $\beta$ the scaling of the r.h.s. of \eqref{eq:provedinfiga'} is such that $\tilde{\beta}-\beta_R=(1-\alpha)(\beta-\beta_R)$ scales proportionally to $\beta$, while $E_{\tilde{\beta}}$ and $\mathrm{Var}(H)_{\tilde{\beta}}$ scale as $e^{-k\beta}$. Therefore, the r.h.s. of \eqref{eq:provedinfiga'} approaches zero as $\beta \rightarrow \infty$ whereas the l.h.s. grows monotonically with $\beta$. Hence, \eqref{eq:provedinfiga'} is fulfilled which concludes the proof.
\end{proof}

\section{Equidistant levels}
Here, we consider the particular case of a system with $M+1$ equidistant levels and show that the function $E_\beta$ is convex. The energy-gap between subsequent levels is $\Delta$ and we set the ground-state energy to zero. The energy $E_\beta$ then takes the form
\begin{align}
E_\beta = \frac{1}{\e^{\beta \Delta }-1}\Delta  - \frac{M+1}{\e^{(M+1)\Delta\beta}-1}\Delta.
\end{align}
In particular, for $M\rightarrow \infty$ we obtain results for the harmonic oscillator and for $M=1$ for a qubit. We have to prove that the second derivative is positive, i.e.,
\begin{align}
E''_\beta = \frac{1}{8}\Delta^3\left[\frac{\sinh(\beta\Delta)}{\sinh(\beta\Delta/2)^4}-\underbrace{(M+1)^3\frac{\sinh((M+1)\beta\Delta)}{\sinh((M+1)\beta\Delta/2)^4}}_{=:f(\beta,M+1)}\right] \geq 0.
\end{align}
For $M=0$ this is clearly true. We will set $M+1=:\gamma$ and show that $\partial_\gamma f(\beta,\gamma) \leq 0$. We have
\begin{align}
\partial_\gamma f(\beta,\gamma) = -\gamma^2 \frac{1}{\sinh(\gamma\beta\Delta/2)^4}\left[\gamma\beta\Delta(2+\cosh(\gamma\beta\Delta))-3\sinh(\gamma\beta\Delta)\right].
\end{align}
In the following, set $\gamma\beta\Delta=x$. Due to the negative pre-factor, we are done if we can show
\begin{align}
x(2+\cosh(x)) - 3\sinh(x) \geq 0.
\end{align}
We will show this using a Taylor-expansion:
\begin{align}
2x + x \cosh(x) - 3\sinh(x) &= 2x+\sum_{n=0}^\infty x^{2n+1}\left(\frac{1}{(2n)!}-\frac{3}{(2n+1)!}\right)\\
                            &= 2x+\sum_{n=0}^\infty x^{2n+1}\left(\frac{((2n+1)-3)(2n)!}{(2n)!(2n+1)!}\right)\\
                            &= 2x - 2x + \sum_{n=1}^\infty x^{2n+1}\left(\frac{((2n+1)-3)(2n)!}{(2n)!(2n+1)!}\right)\\
&\geq 0.
\end{align}

\section{The work source model}\label{sec:worksource}

Here we show that a work source of the form $(\rho_R^w, H_R^w)$ as given in the main text fulfills the second law of thermodynamics. Let us consider an arbitrary system $(\rho_S,H_S)$ and let us consider catalytic thermal operations on $SW$. We will show that the maximum amount of mean energy that one can store on the work source is bounded by the initial non-equilibrium free energy of $S$. Let us recall, see for instance Ref. \cite{Skrzypczyk2011}, that the free energy difference is given by
\begin{equation}
\Delta F_{\beta} (\rho,H):=1/\beta S(\rho\| \omega_{\beta}(H)) =F(\rho,H) - F(\omega_{\beta}(H)),
\end{equation}
where $F(\rho,H)=\tr (\rho H) - \frac{1}{\beta} S(\rho)$ is the free energy.

The protocol of work extraction is a transition of the form
 \begin{equation}
 \rho_{SW}^i:=\rho_S \otimes \rho_R^w \rightarrow \rho_{SW}^f= \mc{E}(\rho_{SW}^i)
 \end{equation}
 where $\mc{E}$ is any channel that has the Gibbs state as a fixed point. Monotonicity of $\Delta F_{\beta}$ under channels of the form $\mc{E}$ imply that
\begin{equation}
\Delta F^f_{\beta} := \Delta F_{\beta}(\rho_SR^f,H_S+ H_R) \leq \Delta F_{\beta}(\rho_{SR}^i,H_S+ H_R) := \Delta F_{\beta}^i.
\end{equation}
Combining this last equation with super-additivity and additivity of the relative entropy one can easily find, following a similar reasoning as in Ref. \cite{Skrzypczyk2011}, that
\begin{equation}
\Delta E_R \leq \Delta F_{\beta} (\rho_S,H_S)
\end{equation}
where $\Delta E_R=\tr (\id \otimes H_R^w\:\: \rho_{SR}^f)- \tr (\id \otimes H_R^w  \:\: \rho_{SR}^i)$ is the mean energy stored in the work source $R$.

\section{Arbitrary target-states close to the ground-state}\label{sec:app:epsilon}
In this section we prove a result similar to our general sufficient condition for cooling, but where we consider target states of the form
\begin{align}
\rho_\epsilon = (1-\epsilon) \ketbra{0}{0} + \epsilon \rho^\perp,\quad \epsilon \ll 1.
\end{align}
where $\rho^\perp$ is a density matrix which has full rank on the subspace orthogonal to the ground-state $\ket{0}$ and commutes with the Hamiltonian $H_S$.

\begin{theorem}[General sufficient condition for cooling] \label{thm:generaltheorem_epsilon} For every choice of $\beta$, $H_S$ and $\rho^\perp$ as above, there is a critical $\epsilon_{\mathrm{cr}}>0$ such that for any $\epsilon < \epsilon_{\rm cr}$ the condition
\begin{align}\label{eq:main_result_epsilon}
\mc V_\beta(\rho_R,H_R) +\tilde{K}(\epsilon,\beta,\rho_R,H_R,H_S,\rho^\perp) \geq \mc V_\beta(\rho_\epsilon,H_S)
\end{align}
is \emph{sufficient} for cooling. The function $\tilde{K}$ has the property $\tilde{K}(\epsilon,\beta,\rho_R,H_R,\rho^\perp) \rightarrow 0$ as $\epsilon \rightarrow \infty$ for any fixed $\beta,H_R,\rho_R,\rho^\perp$ and $H_S$.
\end{theorem}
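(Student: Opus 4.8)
The plan is to mirror the proof of Theorem~\ref{thm:generaltheorem} essentially verbatim, replacing the thermal target $\omega_{\beta_S}(H_S)$ by the near-pure state $\rho_\epsilon$ and replacing Lemma~\ref{lemma:concavity} by an analogous low-$\epsilon$ statement. Since $\rho^\perp$ commutes with $H_S$, the target $\rho_\epsilon$ is diagonal, so cooling is still characterized by the second laws \eqref{eq:second_laws}, and the whole task is to show that these infinitely many inequalities collapse to a single one of the form \eqref{eq:main_result_epsilon} once $\epsilon$ is small. The role of the parameter $\beta_S\to\infty$ is taken over by $\epsilon\to0$.

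First I would compute the two scalars that control the reduction. Writing $\rho_\epsilon$ and $\omega_\beta(H_S)$ in their common eigenbasis with populations $p_0=1-\epsilon$, $p_i=\epsilon r_i$ for $i>0$, and $q_i=\e^{-\beta E_i}/Z_\beta$, one checks that $\mc V_\beta(\rho_\epsilon,H_S)=S(\omega_\beta(H_S)\|\rho_\epsilon)$ diverges like $(1-q_0)\log(1/\epsilon)$ as $\epsilon\to0$, while $S_\infty(\rho_\epsilon\|\omega_\beta(H_S))=\log\max_i(p_i/q_i)\to\log Z_\beta$ stays bounded and is $\leq\log Z_\beta$ once $\epsilon$ is small. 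These are exactly the ingredients used in Lemma~\ref{lemma:concavity}, so I set $\delta(\epsilon):=S_\infty(\rho_\epsilon\|\omega_\beta(H_S))/\mc V_\beta(\rho_\epsilon,H_S)\to0$.

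The heart of the argument is the tight upper bound on the target. I would represent $S_\alpha(\rho_\epsilon\|\omega_\beta(H_S))=f(\alpha)/(\alpha-1)$ via the cumulant generating function $f(\alpha)=\log\tr(\rho_\epsilon^\alpha\,\omega_\beta(H_S)^{1-\alpha})=\log\langle \e^{\alpha L}\rangle_{\omega_\beta(H_S)}$ with $L:=\log\rho_\epsilon-\log\omega_\beta(H_S)$. Then $f$ is convex, $f(0)=0$, $f'(0)=\langle L\rangle_{\omega_\beta(H_S)}=-\mc V_\beta(\rho_\epsilon,H_S)$, and $f''(\alpha)=\mathrm{Var}(L)$ under the tilted state $\propto\rho_\epsilon^\alpha\omega_\beta(H_S)^{1-\alpha}$. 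Introducing $\psi(\alpha):=f(\alpha)+\mc V_\beta(\rho_\epsilon,H_S)\alpha-\mc V_\beta(\rho_\epsilon,H_S)\alpha^2$, one has $\psi(0)=\psi'(0)=0$ and $\psi''=f''-2\mc V_\beta(\rho_\epsilon,H_S)$, so a lower bound $f''(\alpha)\geq 2\mc V_\beta(\rho_\epsilon,H_S)$ on $[0,\delta(\epsilon)]$ makes $\psi$ convex and hence nonnegative, which is exactly the tight estimate $S_\alpha(\rho_\epsilon\|\omega_\beta(H_S))\leq \mc V_\beta(\rho_\epsilon,H_S)\,\alpha$. To get this variance bound I would show that throughout $[0,\delta(\epsilon)]$ the tilted state retains $\Theta(1)$ weight both near the ground state (where $L\approx\log Z_\beta$) and on the excited block (where $L\approx\log\epsilon\to-\infty$); the gap between these values is $\Theta(|\log\epsilon|)$, so $f''(\alpha)=\Theta((\log\epsilon)^2)$, which outgrows $\mc V_\beta(\rho_\epsilon,H_S)=\Theta(|\log\epsilon|)$ by one logarithmic power. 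Combined with $S_\infty\leq \mc V_\beta(\rho_\epsilon,H_S)\,\delta(\epsilon)$ and monotonicity of $\alpha\mapsto S_\alpha$, it then suffices to verify \eqref{eq:second_laws} only on $[0,\delta(\epsilon)]$.

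The resource side is then identical to Theorem~\ref{thm:generaltheorem}: Taylor's theorem gives $S_\alpha(\rho_R\|\omega_\beta(H_R))\geq \mc V_\beta(\rho_R,H_R)\alpha-k\,\alpha^2$ with $k=\max\{0,-\min_{\alpha\leq\delta(\epsilon)}S''_\alpha(\rho_R\|\omega_\beta(H_R))\}$, and demanding that the resource curve dominate the linear target bound on $(0,\delta(\epsilon)]$, dividing by $\alpha$ and taking the worst case $\alpha=\delta(\epsilon)$, collapses everything to a single condition of the form \eqref{eq:main_result_epsilon} with a nonnegative correction $\tilde K=k\,\delta(\epsilon)$; since $k$ stays bounded while $\delta(\epsilon)\to0$, the correction vanishes as $\epsilon\to0$. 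The main obstacle is precisely the concavity/tight-bound step: unlike the thermal case of Appendix~\ref{app:sec:concavity_low_temperatures}, here $\rho^\perp$ is arbitrary, so the partition-function and energy-monotonicity identities are unavailable, and the real work is to control the tilted variance $f''(\alpha)$ uniformly on $[0,\delta(\epsilon)]$ for a general $\rho^\perp$ — in particular to make rigorous that the (order-one, not weak) tilting over this interval never depletes the $\Theta(1)$ weight on the two widely separated parts of the spectrum, so that the $(\log\epsilon)^2$-versus-$\log\epsilon$ scaling genuinely forces $f''\geq 2\mc V_\beta(\rho_\epsilon,H_S)$.
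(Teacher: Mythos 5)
Your proposal is correct in its overall architecture and reaches the same reduction as the paper (restrict to $\alpha\in[0,\delta(\epsilon)]$ with $\delta(\epsilon)=\Theta(1/\log(1/\epsilon))$, upper-bound the target Renyi curve by its tangent $\mc V_\beta(\rho_\epsilon,H_S)\,\alpha$ at the origin, Taylor-bound the resource from below, divide by $\alpha$), but the key technical step is handled by a genuinely different route. The paper's Lemma~\ref{lemma:concavity_epsilon} proves outright \emph{concavity} of $\alpha\mapsto S_\alpha(\rho_\epsilon\|\omega_\beta(H_S))$ on $[0,\delta(\epsilon)]$ by writing $f_\epsilon(\alpha)=(1-\epsilon)^\alpha+\epsilon^\alpha\tilde f(\alpha)$, expanding $S''_\alpha$ into roughly a dozen terms, and controlling them one by one with auxiliary lemmata (a Cauchy--Schwarz bound $\bigl(f'\bigr)^2\leq f''f$ as in Lemma~\ref{lemma:lemma1}, the bounds $1\leq f_\epsilon\leq Z_S$, $f_\epsilon'\leq 0$), until the single term scaling as $\log(\epsilon)^2\epsilon^{\delta(\epsilon)}$ with $\epsilon^{\delta(\epsilon)}\to Z_S^{-Z_S/(Z_S-1)}$ dominates. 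You instead prove only the weaker (but fully sufficient) statement $S_\alpha\leq \mc V_\beta(\rho_\epsilon,H_S)\,\alpha$ directly, via convexity of $\psi(\alpha)=f(\alpha)+\mc V_\beta\alpha-\mc V_\beta\alpha^2$ for the cumulant generating function $f(\alpha)=\log\langle\e^{\alpha L}\rangle_{\omega_\beta}$, which reduces everything to the single estimate $\mathrm{Var}(L)_{\text{tilted}}\geq 2\mc V_\beta(\rho_\epsilon,H_S)$ on $[0,\delta(\epsilon)]$. This is more conceptual and makes transparent \emph{why} the result holds: the two-valued structure of $L$ (roughly $\log Z_\beta$ on the ground state, $\log\epsilon$ on the excited block) forces a variance of order $(\log\epsilon)^2$ against a vacancy of order $|\log\epsilon|$, provided the tilt does not deplete either weight --- and it cannot, since $\alpha|\log\epsilon|\leq \delta(\epsilon)|\log\epsilon|\to \log Z_\beta/(1-Z_\beta^{-1})$ is uniformly bounded, so the tilting factor between the two blocks stays $O(1)$. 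You correctly flag this weight-retention step as the crux; it does go through with the two-point coarse-graining bound $\mathrm{Var}(L)\geq w_0(1-w_0)\bigl(L_0-\bar L_{\mathrm{exc}}\bigr)^2$. What your route buys is a much shorter and more robust argument (it never uses the explicit form of $\tilde f$ and would generalize easily); what the paper's route buys is the stronger concavity statement and fully explicit constants. One cosmetic point: the sign of $\tilde K$ in the theorem as printed ($+\tilde K$ on the resource side) is evidently a typo relative to Theorem~\ref{thm:generaltheorem}, and your derivation correctly produces the subtracted, nonnegative correction $\tilde K=k\,\delta(\epsilon)\to 0$.
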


The proof of this theorem is essentially identical to the one of Theorem~\ref{thm:generaltheorem}. The only difference is that instead of Lemma~\ref{lemma:concavity}, we use the following Lemma:
\begin{lemma}[Concavity close to ground-state]\label{lemma:concavity_epsilon}
Let $H_S$ be a $d$-dimensional Hamiltonian with ground-state $\ket{0}$ and $H\ket{0}=0$. Let $\beta>0$ be fixed and consider the state
\begin{align}
\rho_\epsilon = (1-\epsilon) \ketbra{0}{0} + \epsilon \rho^\perp,
\end{align}
with $\mathrm{rank}(\rho^\perp)=d-1$, $\rho^\perp\ket{0}=0$ and $[\rho^\perp,H_S]=0$. Then there exists an $\epsilon_{\mathrm{cr}}>0$ such that for all $\alpha<\delta(\epsilon)$
\begin{align}
\frac{\mathrm{d}^2}{\mathrm{d}\alpha^2 }S_\alpha(\rho_\epsilon||\omega_\beta(H_S)) <0, \quad \forall \epsilon<\epsilon_{\mathrm{cr}}.
\end{align}
Here, $\delta(\epsilon)$ fulfills
\begin{align}
\delta(\epsilon) = \frac{\log Z_\beta}{\vac_\beta(\rho_\epsilon,H_S)} < 1,\quad \forall \epsilon<\epsilon_{\mathrm{cr}}.
\end{align}
\end{lemma}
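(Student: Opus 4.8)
The plan is to follow the proof of Lemma~\ref{lemma:concavity} almost verbatim, with the thermal target $\omega_{\beta_S}(H_S)$ replaced by $\rho_\epsilon$ and the limit $\beta_S\to\infty$ replaced by $\epsilon\to0$. Since $\rho_\epsilon$ and $\omega_\beta(H_S)$ both commute with $H_S$, and hence with each other, the Renyi divergence is a classical quantity in the energy eigenbasis. Denote by $p_i$ the eigenvalues of $\rho_\epsilon$ (so $p_0=1-\epsilon$ at the ground state and $p_i=\epsilon r_i$ on the orthogonal subspace, $r_i$ being the eigenvalues of $\rho^\perp$) and by $q_i=\e^{-\beta E_i}/Z_\beta$ those of $\omega_\beta(H_S)$. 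I introduce the cumulant generating function
\[
\psi(\alpha):=\log\sum_i p_i^\alpha q_i^{1-\alpha},\qquad S_\alpha(\rho_\epsilon\|\omega_\beta(H_S))=\frac{\psi(\alpha)}{\alpha-1},
\]
which satisfies $\psi(0)=0$, is convex, and whose derivatives $\psi'(\alpha)=\langle X\rangle_\alpha$ and $\psi''(\alpha)=\mathrm{Var}_\alpha(X)$ are the mean and variance of $X_i:=\log(p_i/q_i)$ under the tilted law $\pi^{(\alpha)}_i\propto p_i^\alpha q_i^{1-\alpha}$. Differentiating twice gives $S_\alpha''=B(\alpha)/(\alpha-1)^3$ with $B(\alpha):=(1-\alpha)^2\psi''(\alpha)+2(1-\alpha)\psi'(\alpha)+2\psi(\alpha)$, so for $\alpha<1$ the desired inequality $S_\alpha''<0$ is equivalent to $B(\alpha)>0$.

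Next I would pin down the two relevant scales. Since $\psi'(0)=-\vac_\beta(\rho_\epsilon,H_S)$ and a direct computation gives $\vac_\beta(\rho_\epsilon,H_S)=(1-q_0)\log(1/\epsilon)+O(1)$, the vacancy diverges and $\delta(\epsilon)=\log Z_\beta/\vac_\beta(\rho_\epsilon,H_S)\to0$; in particular $\delta(\epsilon)<1$ for $\epsilon$ below some $\epsilon_{\rm cr}$, which already yields the last claim of the Lemma. The crucial observation is that on the window $\alpha\in[0,\delta(\epsilon)]$ the product $\alpha\log(1/\epsilon)$ stays bounded, because $\delta(\epsilon)\log(1/\epsilon)\to\log Z_\beta/(1-q_0)$. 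Consequently the suppression factor $s:=\epsilon^\alpha=\e^{-\alpha\log(1/\epsilon)}$ does not tend to zero but remains in a compact interval $[s_0,1]$ with $s_0>0$.

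With this reparametrization the tilted law $\pi^{(\alpha)}$ splits into two clusters. The non-degenerate ground state carries a weight $\pi_0$ of order one and has coordinate $X_0=\log((1-\epsilon)Z_\beta)=O(1)$, whereas the excited states carry the remaining weight $1-\pi_0$---relative to the ground state their mass is $s(Z_\beta-1)(1+o(1))$, so with $s$ of order one $\pi_0$ is bounded away from both $0$ and $1$---and sit in a cluster of coordinate $X_i=-\log(1/\epsilon)+O(1)$. A two-point distribution separated by a gap $\log(1/\epsilon)$ then gives the leading behaviours $\psi'(\alpha)=-(1-\pi_0)\log(1/\epsilon)+O(1)$ and $\psi''(\alpha)=\pi_0(1-\pi_0)(\log(1/\epsilon))^2(1+o(1))$, while $\psi(\alpha)=O(1)$. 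Inserting these into $B(\alpha)$, the positive term $(1-\alpha)^2\pi_0(1-\pi_0)(\log(1/\epsilon))^2$ is quadratic in $\log(1/\epsilon)$, whereas $2(1-\alpha)\psi'$ is only linear and $2\psi$ is bounded, so $B(\alpha)>0$ for $\epsilon$ small.

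The main obstacle is to make this last estimate hold \emph{uniformly} over the whole interval $\alpha\in[0,\delta(\epsilon)]$ rather than pointwise, the difficulty being that $\alpha$ and $\epsilon$ are coupled (the admissible $\alpha$ scale like $1/\log(1/\epsilon)$). The reparametrization in terms of $s=\epsilon^\alpha\in[s_0,1]$ is exactly what resolves this: because $s$ ranges over a compact set bounded away from $0$, the coefficient $\pi_0(1-\pi_0)$ of the dominant $(\log(1/\epsilon))^2$ term is bounded below by a strictly positive constant and the $o(1)$ and $O(\cdot)$ remainders are controlled uniformly in $\alpha$. One thereby extracts an explicit $\epsilon_{\rm cr}$ (depending on $\beta$, $H_S$ and $\rho^\perp$) below which $B(\alpha)>0$ on the entire window, which is the assertion of the Lemma. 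If one prefers fully explicit constants, the bounds on $\pi_0$, $\psi'$ and $\psi''$ can be turned into rigorous inequalities exactly as the energy-variance bound was used in the proof of Lemma~\ref{lemma:concavity}.
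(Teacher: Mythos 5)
Your argument is correct and, once the uniform estimates you indicate are written out, it yields a complete proof; but it is organized quite differently from the paper's. The paper works directly with $f_\epsilon(\alpha)=(1-\epsilon)^\alpha+\epsilon^\alpha\tilde f(\alpha)$, where $\tilde f(\alpha)=\tr\bigl((\rho^\perp)^\alpha\e^{-\beta H_S(1-\alpha)}\bigr)$: it first proves the auxiliary facts $(f')^2-f''f\le0$ (Cauchy--Schwarz), $1\le f_\epsilon\le Z_S$ and $f_\epsilon'\le0$ on the relevant window, then expands $(f_{\epsilon,\alpha}')^2-f_{\epsilon,\alpha}''f_{\epsilon,\alpha}$ term by term to isolate the $\epsilon$-dependence, arriving at an upper bound on $S''_\alpha$ whose dominant contribution is $\log(\epsilon)\cdot\bigl(-(1-\epsilon)\log(\epsilon)\epsilon^{\alpha_c}\bigr)\tilde f_\alpha/Z_S^2$, i.e.\ of order $-(\log\epsilon)^2\epsilon^{\delta(\epsilon)}$. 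Your route instead writes $S''_\alpha=B(\alpha)/(\alpha-1)^3$ with $B=(1-\alpha)^2\psi''+2(1-\alpha)\psi'+2\psi$ and reads off $\psi'$, $\psi''$ as mean and variance of $\log(p/q)$ under the tilted law, so the dominant positive contribution to $B$ is the between-cluster variance $\pi_0(1-\pi_0)(\log(1/\epsilon))^2$ with $1-\pi_0\propto\epsilon^\alpha$ --- literally the same leading term, reached without the algebra. Both proofs pivot on the identical key fact that $\epsilon^{\delta(\epsilon)}\to Z_S^{-Z_S/(Z_S-1)}>0$, so that the admissible $\alpha$ keep the excited-state weight of the tilted measure bounded away from zero; what your reformulation buys is that the compactness of $s=\epsilon^\alpha\in[s_0,1]$ delivers the uniformity over $\alpha\in[0,\delta(\epsilon)]$ essentially for free, and the law-of-total-variance bound replaces the paper's Cauchy--Schwarz lemma and the long expansion. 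It is worth noting that the paper already uses exactly your mean/variance representation of $S''_\alpha$ in the proof of the thermal-target Lemma~\ref{lemma:concavity}, and only switches to the brute-force computation for $\rho_\epsilon$; your proposal shows the tilted-distribution form extends to this case as well. To make the write-up airtight you should state explicitly the uniform lower bound on $\pi_0(1-\pi_0)$ over the window (which follows since $(r_i/q_i)^\alpha\to1$ uniformly over the finitely many levels as $\delta(\epsilon)\to0$, together with $s\ge s_0>0$) and the bound $|\psi(\alpha)|\le\alpha\,\vac_\beta(\rho_\epsilon,H_S)\le\log Z_\beta$ on the window; both are routine.
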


We will now proof this Lemma. Let us then express the Renyi-divergence as
\begin{align}
S_\alpha(\rho_\epsilon||\omega_\beta(H_S)) &= \frac{1}{\alpha-1}\log\left((1-\epsilon)^\alpha + \epsilon^\alpha\tr((\rho^\perp)^\alpha \e^{-\beta H_S(1-\alpha)})\right) + \log(Z_S) \\
&=: \frac{1}{\alpha-1}\log (f(\alpha)) + \log(Z_S).
\end{align}
As is apparent from the expression, in the following we will often encounter the functions
\begin{align}
\tilde{f}(\alpha) &:= \tr((\rho^\perp)^\alpha \e^{-\beta H(1-\alpha)}),\\
f_\epsilon(\alpha) &:= \tr(\rho_\epsilon^\alpha \e^{-\beta H(1-\alpha)}) = (1-\epsilon)^\alpha + \epsilon^\alpha\tilde{f}(\alpha).
\end{align}
It is useful to remember from the main text that $\rho^\perp$ is a normalized quantum state that commutes with $H$ and has rank $d-1$.
In the following we will also often write $S_\alpha$ instead of $S_\alpha(\rho_\epsilon||\omega_\beta(H_S))$ and simply $f_\epsilon$ or $f_{\epsilon,\alpha}$ instead of $f_\epsilon(\alpha)$ to simplify the notation (similarly for $\tilde{f}$). While $f_\epsilon$ and $\tilde{f}$ are structurally essentially the same, it is important to keep in mind that only $f_\epsilon$, and not $\tilde{f}$, depends on $\epsilon$.

We now have to prove that $S_\alpha$ is concave for small enough $\epsilon$, i.e., have to show that there exists a $\epsilon_{\mathrm{cr}}>0$ such that  its second derivative is negative for $\epsilon < \epsilon_{\mathrm{cr}}$.
The second derivative of $S_\alpha$ can be computed straightforwardly and gives
\begin{align}\label{eq:secondderivative}
S''_\alpha &= -\frac{2}{(1-\alpha)^3}\log f_\epsilon - \frac{2}{(1-\alpha)^2}\frac{f_\epsilon'}{f_\epsilon} + \frac{1}{1-\alpha}\left(\left(\frac{f_\epsilon'}{f_\epsilon}\right)^2-\frac{f_\epsilon''}{f_\epsilon}\right).
\end{align}
To go on we need to establish a few properties of functions like $f_\epsilon$ and $\tilde{f}$. We will collect these properties in a series of Lemmata.

\begin{lemma}\label{lemma:lemma1}
Let $\rho$ be a quantum state and $\sigma$ be a positive semi-definite operator with $[\rho,\sigma]=0$. Define $f(\alpha):=\tr(\rho^{\alpha}\sigma^{1-\alpha})$.
Then
\begin{align}
 (f')^2 - f'' f \leq 0,\quad 0<\alpha<1.
\end{align}
\begin{proof}
A simple calculation shows that
\begin{align}
f'(\alpha) &= \tr(\rho^\alpha (\log(\rho) - \log(\sigma)) \sigma^{1-\alpha}),\\
f''(\alpha) &= \tr(\rho^\alpha (\log(\rho) - \log(\sigma))^2 \sigma^{1-\alpha}).
\end{align}
We now use the Cauchy-Schwartz inequality $|\tr(A^\dagger B\rho)|^2 \leq \tr(A^\dagger A \rho)\tr(B^\dagger B\rho)$ with $A=\rho^{-\alpha/2} (\log(\rho) - \log(\sigma)) \sigma^{\alpha/2}$ and $B = \rho^{-\alpha/2}\sigma^{\alpha/2}$ to obtain (note the change from $\alpha$ to $1-\alpha$)
\begin{align}
f'(1-\alpha)^2 &= \tr\left(\rho^{-\alpha/2} (\log(\rho) - \log(\sigma)) \sigma^{\alpha/2} \rho^{-\alpha/2}\sigma^{\alpha/2} \rho\right)^2\nonumber \\
&\leq \tr\left(\rho^{-\alpha} (\log(\rho) - \log(\sigma))^2 \sigma^{\alpha}\rho \right)\tr(\rho^{-\alpha}\sigma^{\alpha}\rho) \nonumber \\
&= \tr\left(\rho^{1-\alpha} (\log(\rho) - \log(\sigma))^2 \sigma^{\alpha} \right)\tr\left(\rho^{1-\alpha}\sigma^\alpha\right) \nonumber \\
&= f''(1-\alpha) f(1-\alpha).
\end{align}
\end{proof}
\end{lemma}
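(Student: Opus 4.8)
The plan is to use the hypothesis $[\rho,\sigma]=0$ to diagonalize both operators in a common eigenbasis, which turns the claimed operator inequality into an elementary statement about a nonnegative weighted sum—a route distinct from, and more transparent than, invoking the operator Cauchy--Schwarz inequality directly. Writing $p_i\geq 0$ and $q_i\geq 0$ for the eigenvalues of $\rho$ and $\sigma$ in that shared basis, we have
\begin{align}
f(\alpha) = \sum_i p_i^{\alpha} q_i^{1-\alpha}. \nonumber
\end{align}
For $0<\alpha<1$ only indices with $p_i>0$ and $q_i>0$ contribute: if $p_i=0$ the summand vanishes identically (through $p_i^\alpha$, $\alpha>0$), and if $q_i=0$ it vanishes identically (through $q_i^{1-\alpha}$, $\alpha<1$), so those terms and all their $\alpha$-derivatives are zero. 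Restricting to $I=\{i:p_i>0,\ q_i>0\}$ makes every summand smooth, and differentiating under the finite sum gives
\begin{align}
f'(\alpha) &= \sum_{i\in I} p_i^{\alpha} q_i^{1-\alpha}\,(\log p_i - \log q_i), \nonumber\\
f''(\alpha) &= \sum_{i\in I} p_i^{\alpha} q_i^{1-\alpha}\,(\log p_i - \log q_i)^2. \nonumber
\end{align}

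Next I would set $w_i:=p_i^{\alpha}q_i^{1-\alpha}\geq 0$ and $x_i:=\log p_i-\log q_i$, so that $f=\sum_i w_i$, $f'=\sum_i w_i x_i$, and $f''=\sum_i w_i x_i^2$. The desired inequality $(f')^2\leq f''f$ is then precisely the Cauchy--Schwarz inequality for the nonnegative discrete measure $(w_i)$,
\begin{align}
\Bigl(\sum_i w_i x_i\Bigr)^2 \leq \Bigl(\sum_i w_i\Bigr)\Bigl(\sum_i w_i x_i^2\Bigr), \nonumber
\end{align}
equivalently the nonnegativity of the weighted variance of the $x_i$. This is exactly the diagonal form of the operator Cauchy--Schwarz inequality cited in the author's proof, and commutativity is what permits its use in this scalar guise. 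A conceptually cleaner reformulation, which I would at least remark upon, is that each summand $p_i^{\alpha}q_i^{1-\alpha}=\e^{(1-\alpha)\log q_i+\alpha\log p_i}$ is log-affine, hence log-convex, in $\alpha$; since a sum of log-convex functions is log-convex, $f$ is log-convex on $(0,1)$, and $(\log f)''\geq 0$ reads $(f''f-(f')^2)/f^2\geq 0$, which is the claim.

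The analytic content is thus a one-line consequence of Cauchy--Schwarz (or log-convexity); the only point demanding care is the bookkeeping of vanishing eigenvalues. I expect the main—though minor—obstacle to be verifying that indices with $p_i=0$ or $q_i=0$ genuinely drop out of $f$ and of all its derivatives throughout the open interval $0<\alpha<1$, so that Cauchy--Schwarz is applied only to strictly positive weights with finite $x_i$ (the degenerate case $I=\emptyset$, where $f\equiv 0$ and the inequality reads $0\leq 0$, being handled trivially). Once this support issue is dispatched, the bound holds uniformly for all $\alpha\in(0,1)$.
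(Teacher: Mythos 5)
Your proof is correct, and while it rests on the same underlying inequality as the paper's (Cauchy--Schwarz for the weighted measure defined by the common eigenbasis), the execution differs in a way worth noting. The paper works at the operator level, writing $f'$ and $f''$ as traces involving $\log\rho-\log\sigma$ and invoking the operator Cauchy--Schwarz inequality with $A=\rho^{-\alpha/2}(\log\rho-\log\sigma)\sigma^{\alpha/2}$ and $B=\rho^{-\alpha/2}\sigma^{\alpha/2}$, together with a substitution $\alpha\to 1-\alpha$; this silently requires $\rho^{-\alpha/2}$ (and $\log\rho$, $\log\sigma$) to make sense, i.e.\ it implicitly restricts to the supports. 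Your diagonal version makes the support bookkeeping explicit: for $0<\alpha<1$ the indices with $p_i=0$ or $q_i=0$ contribute nothing to $f$ or any of its derivatives, and on the remaining indices the claim is the scalar Cauchy--Schwarz inequality $\bigl(\sum_i w_ix_i\bigr)^2\leq\bigl(\sum_i w_i\bigr)\bigl(\sum_i w_ix_i^2\bigr)$, i.e.\ nonnegativity of a weighted variance. The log-convexity reformulation — each summand $p_i^\alpha q_i^{1-\alpha}$ is log-affine in $\alpha$, sums of log-convex functions are log-convex, and $(\log f)''\geq 0$ is exactly $f''f-(f')^2\geq 0$ — is the cleanest statement of what is going on and generalizes immediately. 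What the paper's operator formulation buys is a template that would survive without the commutativity assumption in suitably modified form; what yours buys is transparency and a rigorous treatment of rank-deficient $\rho$ or $\sigma$, which is relevant since the lemma is applied to $\sigma=\e^{-\beta H(1-\alpha)}$ against states that need not be full rank.
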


\begin{lemma}\label{lemma:upperboundf}
Let $H$ be a Hamiltonian with ground-state energy $E_0=0$ and let $sigma$ be a quantum state with $[\sigma,H]=0$. Then
\begin{align}
f(\alpha) := \tr(\sigma^\alpha \e^{-\beta H(1-\alpha)}) \leq Z,\quad 0\leq \alpha \leq 1.
\end{align}
\begin{proof}
From the calculation of the previous Lemma we see that the second derivative of $f$ is the trace of a product of positive commuting operators. Hence it is always positive and therefore $f$ is convex. But since $H\geq 0$ we have $f(0) = Z \geq 1 = f(1)$ and from convexity we get $f(\alpha) \leq Z$.
\end{proof}
\end{lemma}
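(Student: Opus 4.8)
The plan is to use the commutativity $[\sigma,H]=0$ to establish convexity of $f$ in $\alpha$ and then invoke the elementary fact that a convex function on a closed interval attains its maximum at an endpoint. Concretely, I would first diagonalize $\sigma$ and $H$ in a common eigenbasis, writing $\sigma=\sum_i p_i\proj{i}$ with $\sum_i p_i=1$ and $H=\sum_i E_i\proj{i}$ with $E_i\geq 0$ and $\min_i E_i=0$, so that $f(\alpha)=\sum_i p_i^\alpha\,\e^{-\beta E_i(1-\alpha)}$, the sum effectively running over the support of $\sigma$.

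The second step is to prove convexity. The cleanest route is to note that each summand equals $\exp\!\big(\alpha\log p_i-\beta E_i(1-\alpha)\big)$, an exponential of an affine function of $\alpha$ and hence convex; as $f$ is then a nonnegative combination of convex functions, it is itself convex on $[0,1]$. Alternatively, one can recycle the computation in the proof of Lemma~\ref{lemma:lemma1}: substituting $\rho\mapsto\sigma$ and $\sigma\mapsto\e^{-\beta H}$ there gives $f''(\alpha)=\tr\!\big(\sigma^\alpha(\log\sigma+\beta H)^2\e^{-\beta H(1-\alpha)}\big)$, a trace of a product of positive commuting operators, hence $f''\geq 0$.

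With convexity in hand, $f(\alpha)\leq\max\{f(0),f(1)\}$ for all $\alpha\in[0,1]$. Evaluating the endpoints gives $f(1)=\tr(\sigma)=1$ and $f(0)=\tr(P\,\e^{-\beta H})\leq Z$, where $P$ is the support projector of $\sigma$; moreover the assumption $E_0=0$ forces $Z=\sum_i\e^{-\beta E_i}\geq 1$. Combining these, $\max\{f(0),f(1)\}\leq Z$, which yields the claimed bound $f(\alpha)\leq Z$.

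The only delicate point — the main obstacle, such as it is — is the interpretation of $\sigma^\alpha$ at $\alpha=0$ when $\sigma$ is not full rank, since then $\sigma^0$ must be read as the support projector and the vanishing $p_i$ terms require a brief continuity argument. In the application at hand $\sigma=\rho^\perp$ is full rank on the relevant subspace, so this subtlety does not arise; in general it is settled by continuity of $f$ on $[0,1]$.
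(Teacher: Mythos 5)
Your proof is correct and follows essentially the same route as the paper: establish convexity of $f$ (via positivity of the second derivative, equivalently by writing each eigenvalue summand as the exponential of an affine function of $\alpha$) and then bound $f$ by its endpoint values $f(1)=1$ and $f(0)\leq Z$. Your treatment of the endpoint $\alpha=0$ is in fact slightly more careful than the paper's, which writes $f(0)=Z$ and thereby implicitly assumes $\sigma$ has full rank; your observation that $f(0)=\tr(P\,\e^{-\beta H})\leq Z$ with $P$ the support projector closes that small gap without changing the argument.
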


Note that due to our assumption about the groundstate energy we have $Z_S\geq 1$ and from the above Lemma we know $f_\epsilon \leq Z_S$. We will now show that for every $0 < \alpha_c' < 1$, we have $1\leq f_\epsilon(\alpha)\leq Z_S$ if $\epsilon$ is small enough and $\alpha < \alpha_c'$.

\begin{lemma}\label{lemma:lowerboundf}
For any $0< \alpha_c'< 1$ there exists a critical $\epsilon'_{\mathrm{cr}}>0$, such that for all $\epsilon < \epsilon'_{\mathrm{cr}}$ we have
\begin{align}
f_\epsilon(\alpha) &:= \tr(\rho_\epsilon^\alpha \e^{-\beta H(1-\alpha)}) \geq 1, \quad 0\leq \alpha < \alpha_c'.
\end{align}
\begin{proof}
Assume some $0 < \alpha < \alpha_c'$. Using that $\tilde{f}$ is independent of $\epsilon$ and positive, we can lower bound it by some $\tilde{f}_{\mathrm{min}}>0$. Also, $(1-\alpha)^\alpha$ is monotonically decreasing with $\alpha$ for $0<\alpha < 1$. We therefore get the lower bound
\begin{align}
f_\epsilon(\alpha) = (1-\epsilon)^\alpha + \epsilon^\alpha \tilde{f}(\alpha) \geq (1-\epsilon) + \epsilon^\alpha\tilde{f}_{\mathrm{min}} = 1 + \epsilon(\epsilon^{\alpha-1} \tilde{f}_{\mathrm{min}}-1).
\end{align}
Thus for $\epsilon < \epsilon'_{\mathrm{cr}} (\alpha_c') := \left(\tilde{f}_\mathrm{min}\right)^{\frac{1}{1-\alpha_c}}$ we have $f_\epsilon(\alpha) \geq 1$.
\end{proof}
\end{lemma}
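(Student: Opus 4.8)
The plan is to exploit the block structure of $\rho_\epsilon$ and $\e^{-\beta H(1-\alpha)}$ in their common eigenbasis and reduce the whole claim to one elementary inequality in $\epsilon$. Since $[\rho^\perp,H]=0$ and $\rho^\perp$ is supported on the orthogonal complement of the ground state $\ket{0}$ (with $H\ket{0}=0$), the operators $\rho_\epsilon^\alpha$ and $\e^{-\beta H(1-\alpha)}$ are simultaneously diagonal and split into a one-dimensional ground-state block and an excited block. First I would record the resulting decomposition
\begin{align}
f_\epsilon(\alpha) = (1-\epsilon)^\alpha + \epsilon^\alpha\, \tilde f(\alpha),\quad \tilde f(\alpha) := \tr\big((\rho^\perp)^\alpha \e^{-\beta H(1-\alpha)}\big),\nonumber
\end{align}
where the ground-state contribution is exactly $(1-\epsilon)^\alpha\cdot 1$ because $\e^{-\beta H(1-\alpha)}\ket{0}=\ket{0}$.

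Next I would bound the two pieces separately. For the ground-state term I would use the elementary inequality $(1-\epsilon)^\alpha \geq 1-\epsilon$, valid for $0\leq\alpha\leq 1$ and $0<\epsilon<1$ (the base lies in $(0,1)$, so $b^\alpha\geq b^1$ when $\alpha\leq 1$). For the excited term, the decisive point is that $\tilde f$ does \emph{not} depend on $\epsilon$: in the eigenbasis it reads $\tilde f(\alpha)=\sum_{i:E_i>0} p_i^\alpha \e^{-\beta E_i(1-\alpha)}$ with all eigenvalues $p_i>0$ of $\rho^\perp$, so $\tilde f$ is continuous and strictly positive on the compact interval $[0,\alpha_c']$ and therefore attains a uniform positive minimum $\tilde f_{\min}>0$ there. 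Combining the two bounds yields
\begin{align}
f_\epsilon(\alpha) \geq (1-\epsilon) + \epsilon^\alpha \tilde f_{\min} = 1 + \epsilon\big(\epsilon^{\alpha-1}\tilde f_{\min} - 1\big),\nonumber
\end{align}
so that $f_\epsilon(\alpha)\geq 1$ is equivalent to the single condition $\epsilon^{\alpha-1}\tilde f_{\min}\geq 1$.

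Finally I would make this condition uniform in $\alpha$. Since $\alpha-1<0$ and $\ln\epsilon<0$, the factor $\epsilon^{\alpha-1}$ is decreasing in $\alpha$, hence its smallest value over $0\leq\alpha<\alpha_c'$ is approached as $\alpha\to\alpha_c'$. It therefore suffices to enforce $\epsilon^{\alpha_c'-1}\tilde f_{\min}\geq 1$, i.e. $\epsilon\leq \tilde f_{\min}^{1/(1-\alpha_c')}$, and I would set $\epsilon'_{\mathrm{cr}} := \min\{1,\tilde f_{\min}^{1/(1-\alpha_c')}\}$. The estimate is otherwise routine; the only genuine subtlety is securing the uniform positive lower bound $\tilde f_{\min}$ (which works precisely because $\tilde f$ is $\epsilon$-independent and continuous on a compact interval) together with correctly identifying $\alpha\to\alpha_c'$ as the worst case, and this is where I would concentrate the care — everything else is a one-line comparison.
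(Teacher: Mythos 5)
Your proof is correct and follows essentially the same route as the paper's: the same split $f_\epsilon(\alpha)=(1-\epsilon)^\alpha+\epsilon^\alpha\tilde f(\alpha)$, the same bounds $(1-\epsilon)^\alpha\geq 1-\epsilon$ and $\tilde f(\alpha)\geq\tilde f_{\min}>0$ (from $\epsilon$-independence and continuity on the compact interval), and the same resulting threshold $\epsilon'_{\mathrm{cr}}=\tilde f_{\min}^{1/(1-\alpha_c')}$. Your added remarks — identifying $\alpha\to\alpha_c'$ as the worst case of $\epsilon^{\alpha-1}$ and capping $\epsilon'_{\mathrm{cr}}$ at $1$ — are minor tidy-ups of details the paper leaves implicit.
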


\begin{remark}Due to the preceding Lemma, we will in the following take the (somewhat arbitrary) choice $\alpha_c' = 1/3$ and only consider $\alpha < \alpha_c'$ as well as values of $\epsilon < \epsilon_{\mathrm{cr}}'(\alpha_c')$. Since later we are anyway only interested in arbitrarily small values of $\epsilon$ and $\alpha \leq \delta(\epsilon)$, this is no obstruction.
\end{remark}

\begin{lemma}\label{lemma:firstderivativef}
For all $\alpha < \alpha_c'$ and $\epsilon < \epsilon_{\mathrm{cr}}'$ we have $f_\epsilon'(\alpha) \leq 0$.
\begin{proof}
Follows from $f_\epsilon(\alpha) \geq 1$ for all $\alpha \leq \alpha_c'$ together with the facts that $f_\epsilon(1)=1$,$f_\epsilon(0)=Z_S$ and that $f_\epsilon$ is convex.
\end{proof}
\end{lemma}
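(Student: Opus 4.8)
The plan is to read off the sign of $f_\epsilon'$ entirely from convexity together with the boundary values and the lower bound already established, so the argument reduces to a short one-dimensional calculus observation. First I would collect the three ingredients I intend to use: (i) $f_\epsilon$ is convex on $[0,1]$, which is exactly the content of Lemma~\ref{lemma:upperboundf} (its second derivative is the trace of a product of positive commuting operators, hence nonnegative); (ii) the two boundary values $f_\epsilon(0)=Z_S$ and $f_\epsilon(1)=1$, which I would verify directly from $f_\epsilon(\alpha)=(1-\epsilon)^\alpha+\epsilon^\alpha\tilde{f}(\alpha)$, using $\tilde{f}(0)=\tr\!\left((\mathbf{1}-\proj{0})\e^{-\beta H_S}\right)=Z_S-1$ and $\tilde{f}(1)=\tr(\rho^\perp)=1$; and (iii) the lower bound $f_\epsilon(\alpha)\geq 1$ valid for $0\leq\alpha<\alpha_c'$ and $\epsilon<\epsilon_{\mathrm{cr}}'$, supplied by Lemma~\ref{lemma:lowerboundf}.

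The key step is that convexity makes $f_\epsilon'$ nondecreasing on $[0,1]$, and I would exploit this in a proof by contradiction. Suppose that $f_\epsilon'(\alpha_0)>0$ for some $\alpha_0<\alpha_c'$. By monotonicity of the derivative we then have $f_\epsilon'(\alpha)\geq f_\epsilon'(\alpha_0)>0$ for every $\alpha\in[\alpha_0,1]$, and integrating yields $f_\epsilon(1)-f_\epsilon(\alpha_0)=\int_{\alpha_0}^{1} f_\epsilon'(\alpha)\,\mathrm{d}\alpha>0$. Invoking ingredient (iii) at the left endpoint and ingredient (ii) at the right endpoint turns this into the chain $1=f_\epsilon(1)>f_\epsilon(\alpha_0)\geq 1$, which is absurd. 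Hence no such $\alpha_0$ exists, and $f_\epsilon'(\alpha)\leq 0$ throughout $[0,\alpha_c')$, as claimed.

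Geometrically this is just the statement that a convex function which starts at $Z_S\geq 1$, never dips below $1$ on $[0,\alpha_c')$, and comes back down to the value $1$ at $\alpha=1$ cannot have positive slope anywhere on the initial interval. I do not anticipate a genuine obstacle here, since all the analytic work is already carried by the convexity lemma (Lemma~\ref{lemma:upperboundf}) and the lower-bound lemma (Lemma~\ref{lemma:lowerboundf}). The only points that require care are the correct evaluation of the endpoints---in particular that $\rho_\epsilon$ has full rank so that $f_\epsilon(0)=Z_S$, and that $f_\epsilon(1)=\tr(\rho_\epsilon)=1$---and making sure the strict inequality $\alpha_0<\alpha_c'$ is kept so that the lower bound of Lemma~\ref{lemma:lowerboundf} genuinely applies at $\alpha_0$.
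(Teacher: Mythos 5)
Your argument is correct and is precisely the paper's own proof, merely spelled out: the paper's one-line justification invokes exactly the same three ingredients (convexity of $f_\epsilon$, the endpoint values $f_\epsilon(0)=Z_S$ and $f_\epsilon(1)=1$, and the lower bound $f_\epsilon\geq 1$ from the preceding lemma), and your contradiction via the monotonicity of the derivative of a convex function is the natural way to fill in that one line.
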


We are now in position to go on with the proof of the asymptotic concavity. First, we will further restrict the values of $\alpha$ by choosing arbitrarily $\alpha_c < \alpha_c'=1/3$ and restricting to $\alpha \leq \alpha_c$. The reason for this will become clear later in the proof.

Considering eq.~\eqref{eq:secondderivative} and using $Z_S \geq f_\epsilon \geq 1$ as well as Lemma~\ref{lemma:lemma1}, we can upper now bound the second derivative as
\begin{align}\label{eq:secondderivative2}
  S''_\alpha &\leq - \frac{2}{(1-\alpha)^2}\frac{f_\epsilon'}{f_\epsilon} + \frac{1}{(1-\alpha)f_\epsilon^2}\left(\left(f_\epsilon'\right)^2-f_\epsilon''f_\epsilon\right) \\
            &\leq - \frac{2}{(1-\alpha_c)^2}\frac{f_\epsilon'}{f_\epsilon} + \frac{1}{Z_S^2}\left(\left(f_\epsilon'\right)^2-f_\epsilon''f_\epsilon\right).
\end{align}
One might be tempted to use Lemma~\ref{lemma:lemma1} and simply upper bound the second term by zero, but that bound would be too weak, since the first term diverges as $\log(1/\epsilon)$. We will therefore now have to do a more detailed calculation. We first compute the derivatives of $f_\epsilon$:
\begin{align}
f_{\epsilon,\alpha}' &= (1-\epsilon)^\alpha\log(1-\epsilon) + \log(\epsilon)\epsilon^\alpha\tilde{f}_\alpha + \epsilon^\alpha \tilde{f}'_\alpha,\\
(f_{\epsilon,\alpha}')^2 &= (1-\epsilon)^{2\alpha}\log(1-\epsilon)^2 + \log(\epsilon)^2\epsilon^{2\alpha}\tilde{f}^2_\alpha + \epsilon^{2\alpha} (\tilde{f}'_\alpha)^2 + 2(1-\epsilon)^\alpha\log(1-\epsilon)\log(\epsilon)\epsilon^\alpha \tilde{f}_\alpha\nonumber \\
&\quad + 2(1-\epsilon)^\alpha\log(1-\epsilon)\epsilon^\alpha \tilde{f}'_\alpha + 2 \log(\epsilon)\epsilon^{2\alpha}\tilde{f}_\alpha \tilde{f}'_\alpha.\\
f_{\epsilon,\alpha}'' &= (1-\epsilon)^\alpha\log(1-\epsilon)^2 + \log(\epsilon)^2\epsilon^\alpha\tilde{f}_\alpha+2\log(\epsilon)\epsilon^\alpha\tilde{f}'_\alpha + \epsilon^\alpha \tilde{f}''_\alpha.
\end{align}
These give
\begin{align}
(f_{\epsilon,\alpha}')^2 - f_{\epsilon,\alpha}'' \epsilon^\alpha \tilde{f}_\alpha &= (1-\epsilon)^{2\alpha}\log(1-\epsilon)^2 + \log(\epsilon)^2\epsilon^{2\alpha}\tilde{f}^2_\alpha + \epsilon^{2\alpha} (\tilde{f}'_\alpha)^2 + 2(1-\epsilon)^\alpha\log(1-\epsilon)\log(\epsilon)\epsilon^\alpha \tilde{f}_\alpha \nonumber\\
&\quad + 2(1-\epsilon)^\alpha\log(1-\epsilon)\epsilon^\alpha \tilde{f}'_\alpha + 2 \log(\epsilon)\epsilon^{2\alpha}\tilde{f}_\alpha \tilde{f}'_\alpha \nonumber \\
&\quad - \epsilon^\alpha \tilde{f}_\alpha\left((1-\epsilon)^\alpha\log(1-\epsilon)^2 + \log(\epsilon)^2\epsilon^\alpha\tilde{f}_\alpha+2\log(\epsilon)\epsilon^\alpha\tilde{f}'_\alpha + \epsilon^\alpha \tilde{f}''_\alpha\right) \\
&= \log(1-\epsilon)^2 (1-\epsilon)^\alpha\left((1-\epsilon)^\alpha -\epsilon^\alpha \tilde{f}_\alpha\right)  + \epsilon^{2\alpha} (\tilde{f}'_\alpha)^2 \nonumber \\
&\quad + 2(1-\epsilon)^\alpha\log(1-\epsilon)\log(\epsilon)\epsilon^\alpha \tilde{f}_\alpha + 2(1-\epsilon)^\alpha\log(1-\epsilon)\epsilon^\alpha \tilde{f}'_\alpha - \epsilon^{2\alpha} \tilde{f}''_\alpha \tilde{f}_\alpha\\
&= \log(1-\epsilon)^2 (1-\epsilon)^\alpha\left((1-\epsilon)^\alpha -\epsilon^\alpha \tilde{f}_\alpha\right)  + 2(1-\epsilon)^\alpha\log(1-\epsilon)\epsilon^\alpha\left(\log(\epsilon) \tilde{f}_\alpha + \epsilon^\alpha \tilde{f}'_\alpha\right) \nonumber \\
&\quad +  \epsilon^{2\alpha}\left((\tilde{f}'_\alpha)^2- \tilde{f}''_\alpha \tilde{f}_\alpha\right).
\end{align}
Hence we have
\begin{align}
(f_{\epsilon,\alpha}')^2 - f_{\epsilon,\alpha}'' f_\alpha &= -(1-\epsilon)^\alpha f_{\epsilon,\alpha}'' + \log(1-\epsilon)^2 (1-\epsilon)^\alpha\left((1-\epsilon)^\alpha -\epsilon^\alpha \tilde{f}_\alpha\right) \nonumber \\
&\quad + 2(1-\epsilon)^\alpha\log(1-\epsilon)\epsilon^\alpha\left(\log(\epsilon) \tilde{f}_\alpha + \epsilon^\alpha \tilde{f}'_\alpha\right) +  \epsilon^{2\alpha}\left((\tilde{f}'_\alpha)^2- \tilde{f}''_\alpha \tilde{f}_\alpha\right).
\end{align}
Note in particular that the last term is negative semi-definite due to Lemma~\ref{lemma:lemma1}.

Let us now also write $0< 1/k_c := (1-\alpha_c)^2 < 1$. Inserting the previous result into eq.~\eqref{eq:secondderivative2} we then obtain
\begin{align}
S''_\alpha &\leq - 2 k_c f_{\epsilon,\alpha}' + \frac{1}{Z^2}\left(\left(f_{\epsilon,\alpha}'\right)^2-f_{\epsilon,\alpha}'' f_\alpha\right) \\
&\leq -2 k_c f_{\epsilon,\alpha}' + \frac{1}{Z^2}
\left( -(1-\epsilon)^\alpha f_{\epsilon,\alpha}''
  + \log(1-\epsilon)^2 (1-\epsilon)^\alpha\left((1-\epsilon)^\alpha -\epsilon^\alpha \tilde{f}_\alpha\right) \right.\nonumber \\
  &\quad\quad\quad\quad\quad\quad\quad\left.+ 2(1-\epsilon)^\alpha\log(1-\epsilon)\epsilon^\alpha\left(\log(\epsilon) \tilde{f}_\alpha
  + \epsilon^\alpha \tilde{f}'_\alpha\right) \right)\\
&\leq -2 k_c f_{\epsilon,\alpha}' + \frac{1}{Z^2}
\left( -(1-\epsilon) f_{\epsilon,\alpha}''
  + \log(1-\epsilon)^2 \left(1 -(1-\epsilon)\epsilon \tilde{f}_\alpha\right) \right.\nonumber \\
  &\quad\quad\quad\quad\quad\quad\quad\left.+ 2\log(1-\epsilon)\left(\log(\epsilon) \tilde{f}_\alpha
  + \epsilon^{2\alpha}(1-\epsilon)^\alpha \tilde{f}'_\alpha\right) \right)
\end{align}
We now lower bound $f_\epsilon'$ and $f_\epsilon''$ as
\begin{align}
f_{\epsilon,\alpha}' &\geq \log(1-\epsilon) + \log(\epsilon)\tilde{f}_\alpha + \tilde{f}_\alpha'\\
f_{\epsilon,\alpha}'' &\geq (1-\epsilon)\log(1-\epsilon)^2 + \log(\epsilon)^2 \epsilon^{\alpha_c} \tilde{f}_\alpha + 2\log(\epsilon)\epsilon^\alpha\tilde{f}'_\alpha + \epsilon \tilde{f}''_\alpha.
\end{align}
Note that we cannot easily bound the terms involving $\tilde{f}'_\alpha$ since we do not know the sign of $\tilde{f}'_\alpha$.  However, we emphasize again that $\tilde{f}$ is independent of $\epsilon$ and can hence essentially be treated as constant. Putting in the bounds then yields
\begin{align}
S''_\alpha &\leq -2k_c (\log(1-\epsilon) + \log(\epsilon)\tilde{f}_\alpha + \tilde{f}_\alpha') -\frac{1-\epsilon}{Z^2}\left((1-\epsilon)\log(1-\epsilon)^2 + \log(\epsilon)^2 \epsilon^{\alpha_c} \tilde{f}_\alpha + 2\log(\epsilon)\epsilon^\alpha\tilde{f}'_\alpha + \epsilon \tilde{f}''_\alpha\right) \nonumber \\ &\quad+ \frac{1}{Z^2}
\left( \log(1-\epsilon)^2 \left(1 -(1-\epsilon)\epsilon \tilde{f}_\alpha\right) + 2\log(1-\epsilon)\left(\log(\epsilon) \tilde{f}_\alpha
  + \epsilon^{2\alpha}(1-\epsilon)^\alpha \tilde{f}'_\alpha\right) \right)\\
&= \log(\epsilon)\tilde{f}_\alpha \left(\frac{2}{Z^2}\log(1-\epsilon)-\frac{1-\epsilon}{Z^2}2\epsilon^\alpha \frac{\tilde{f}'_\alpha}{\tilde{f}_\alpha} -\frac{1-\epsilon}{Z^2}\log(\epsilon)\epsilon^{\alpha_c} - \frac{2}{(1-\alpha_c)^2} \right) \nonumber \\
&\quad + \log(1-\epsilon)\left( - \frac{2}{(1-\alpha_c)^2} - \frac{(1-\epsilon)^2}{Z^2}\log(1-\epsilon) + \frac{1-(1-\epsilon)\epsilon\tilde{f}_\alpha}{Z^2}\log(1-\epsilon) + \frac{2}{Z^2}\epsilon^{2\alpha}(1-\epsilon)^\alpha\tilde{f}'_\alpha \right)\nonumber  \\
&\quad -\frac{2}{(1-\alpha_c)^2}\tilde{f}'_\alpha - \frac{(1-\epsilon)\epsilon}{Z^2}\tilde{f}''_\alpha \\
&\leq \log(\epsilon)\tilde{f}_\alpha \left(\frac{2}{Z^2}\log(1-\epsilon)-\frac{1-\epsilon}{Z^2}2\epsilon^\alpha \frac{\tilde{f}'_\alpha}{\tilde{f}_\alpha} -\frac{1-\epsilon}{Z^2}\log(\epsilon)\epsilon^{\alpha_c} - \frac{2}{(1-\alpha_c)^2} \right) + M(\epsilon, H,\beta,\rho^\perp) - K(\alpha_c,H,\beta, \rho^\perp),
\end{align}
where $M$ goes to zero as $\epsilon$ goes to zero and $K$ is independent of $\epsilon$. Also note that $M$ is bounded and independent of $\alpha$ (due to the boundedness of $\tilde{f}_\alpha$ and its derivatives).
Let us define $m(\alpha_c) = \max_{\alpha\leq \alpha_c} \tilde{f}'_\alpha/ \tilde{f}_\alpha$. Since $\alpha_c < 1/2$ we can simplify the bound to
\begin{align}
S''_\alpha &\leq \log(\epsilon)\frac{\tilde{f}_\alpha}{Z^2} \left(2 \log(1-\epsilon)-2(1-\epsilon)\epsilon^\alpha m(\alpha_c) -(1-\epsilon)\log(\epsilon)\epsilon^{\alpha_c} - 8 Z^2 \right) + M(\epsilon, H,\beta,\rho^\perp) - K(\alpha_c,H,\beta, \rho^\perp).
\end{align}
Clearly $S''_\alpha$ can be made negative by taking $\epsilon$ and $\alpha_c$ arbitrarily small since the dominant term in the bracket goes as $-\log(\epsilon)$. However, since our objective is to upper bound $S_\alpha$ by $\mc V_\beta(\rho_\epsilon,H_S) \alpha$ for all $\alpha\leq \alpha_c$, we also need that $\mc V_\beta(\rho_\epsilon,H_S) \alpha_c \geq \log Z_S$ and hence $\alpha_c \geq \log(Z_S)/\mc V_\beta(\rho_\epsilon,H_S)$. Hence we choose $\alpha_c = \delta(\epsilon) = \log Z_S/\mc V_\beta(\rho_\epsilon,H_S)$ and hope for the best. The vacancy is given by:
\begin{align}
\mc V_\beta(\rho_\epsilon,H_S) &= - \log(1-\epsilon)\frac{1}{Z_S} + \frac{Z_S-1}{Z_S}\log(1/\epsilon) + C_1(\rho^\perp,\beta, H_S),
\end{align}
where $C_1$ does not depend on $\epsilon$. Hence
\begin{align}
\lim_{\epsilon\rightarrow 0} \epsilon^{\delta(\epsilon)} = \frac{1}{Z_S^{\frac{Z_S}{Z_S-1}}}\quad\text{and}\quad
\lim_{\epsilon\rightarrow 0 } - (1-\epsilon)\log(\epsilon)\epsilon^{\delta(\epsilon)} = +\infty.
\end{align}
Since the other terms in the first bracket in $S''_\alpha$ go to zero as $\epsilon\rightarrow 0$ and $K$ is independent of $\epsilon$, we can thus find a finite $\epsilon_{cr}$ such that
\begin{align}
S''_\alpha \leq 0,\quad \alpha\leq \delta(\epsilon_{cr}).
\end{align}
This finishes the proof.\qed

\he{
\section{Exactly conserved catalysts}
\label{app:sec:exact_catalysts}
In this section we analyze the scenario where \ro{the catalyst always has to be} returned without any error. \ro{That is, the cooling protocol considers a process like the one described in Sec. \ref{sec:set-up} but taking $\epsilon=0$.} First, note that the vacancy is automatically also a monotone in this setting, since we are considering a subset of free operations. Hence, the inequality
\begin{align}
V_\beta(\rho_R,H_R) \geq V_\beta(\rho_S,H_S)
\end{align}
\ro{is also \he{a necessary condition} for this set of free operations}.
In the following, we will consider for simplicity only the case where the target system is thermal, $\rho_S = \omega_{\beta_S}(H_S)$.

We will now prove the following theorem, which provides a sufficient condition for cooling and which coincides with that in our general Theorem~\ref{thm:generaltheorem} up to a multiplicative factor.

\begin{theorem}[Sufficient condition under exact catalysis] \label{thm:generaltheorem_exact_catalysts} Assume thermal operations with exact catalysts. Then for every choice of $\beta$ and $H_S$ there is a critical $\beta_{\mathrm{cr}}>0$ such that for any $\beta_S > \beta_{\rm cr}$ and full-rank resource $(\rho_R,H_R)$ (diagonal in the energy eigenbasis) the condition
\begin{align}\label{eq:main_result}
\mc V_\beta(\rho_R,H_R) - K(\beta_S,\beta,\rho_R,H_R,H_S) \geq r(\beta,H)\mc V_\beta(\omega_{\beta_S}(H_S),H_S)
\end{align}
is \emph{sufficient} for cooling. The positive semidefinite function $K$ is identical to that in Theorem~\ref{thm:generaltheorem} and the constant $r(\beta,H_S)$ is \ro{independent of $\rho_R,H_R$ and $\beta_S$ and given by
\begin{equation}
r(\beta,H_S)= 1+2\frac{E_{\mathrm{max}}-E_\beta}{E_\beta}
\end{equation}
where $E_{\mathrm{max}}$ is the largest eigenvalue of $H_S$ \he{and we assume that the ground-state energy of $H_S$ is zero.}}
\end{theorem}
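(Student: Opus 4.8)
The plan is to reduce the transition to an extended family of Renyi conditions and to treat its positive- and negative-$\alpha$ parts separately. With exact catalysts the transition $\rho_R\otimes\omega_\beta(H_S)\to\omega_\beta(H_R)\otimes\omega_{\beta_S}(H_S)$ is possible provided the Renyi monotones are correctly ordered for \emph{all} $\alpha\in\R$ (this is the extension of \eqref{eq:second_laws} to negative $\alpha$ alluded to in the comment on catalysts). By additivity of $S_\alpha$ and $S_\alpha(\omega_\beta(H)\|\omega_\beta(H))=0$, this reduces to $S_\alpha(\rho_R\|\omega_\beta(H_R))\geq S_\alpha(\omega_{\beta_S}(H_S)\|\omega_\beta(H_S))$ for $\alpha>0$ and to the \emph{reversed} inequality $S_\alpha(\rho_R\|\omega_\beta(H_R))\leq S_\alpha(\omega_{\beta_S}(H_S)\|\omega_\beta(H_S))$ for $\alpha<0$, the direction flipping because $S_\alpha$ is no longer contractive under the operations for $\alpha<0$. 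The positive-$\alpha$ family is \emph{identical} to the one treated in Theorem~\ref{thm:generaltheorem}, so its proof applies verbatim and shows that $\vac_\beta(\rho_R,H_R)-K\geq\vac_\beta(\omega_{\beta_S}(H_S),H_S)$ already guarantees every condition with $\alpha>0$, with exactly the same $K$ and the same Lemma~\ref{lemma:concavity}.

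For the negative-$\alpha$ family I would invoke the elementary symmetry $S_\alpha(\rho\|\sigma)=\frac{\alpha}{1-\alpha}S_{1-\alpha}(\sigma\|\rho)$, valid for commuting full-rank $\rho,\sigma$. Setting $\gamma=1-\alpha>1$ and dividing by the negative factor $\alpha/(1-\alpha)$ turns every $\alpha<0$ condition into $S_\gamma(\omega_\beta(H_R)\|\rho_R)\geq S_\gamma(\omega_\beta(H_S)\|\omega_{\beta_S}(H_S))$ for all $\gamma>1$. The decisive simplification is that $\gamma\mapsto S_\gamma$ is monotonically increasing: on the left $S_\gamma(\omega_\beta(H_R)\|\rho_R)\geq S_1(\omega_\beta(H_R)\|\rho_R)=\vac_\beta(\rho_R,H_R)$, while on the right $S_\gamma(\omega_\beta(H_S)\|\omega_{\beta_S}(H_S))\leq S_\infty(\omega_\beta(H_S)\|\omega_{\beta_S}(H_S))$. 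Hence the entire negative-$\alpha$ family collapses to the single inequality $\vac_\beta(\rho_R,H_R)\geq S_\infty(\omega_\beta(H_S)\|\omega_{\beta_S}(H_S))$, with no Taylor remainder and thus no contribution to $K$.

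It then remains to bound the target's max-divergence by a multiple of its vacancy. Writing $\Delta\beta=\beta_S-\beta$ and $L=\log(Z_{\beta_S}/Z_\beta)$ and using $E_0=0$, a direct computation gives the two \emph{parallel} expressions $S_\infty(\omega_\beta(H_S)\|\omega_{\beta_S}(H_S))=\Delta\beta\,E_{\mathrm{max}}+L$ and $\vac_\beta(\omega_{\beta_S}(H_S),H_S)=\Delta\beta\,E_\beta+L$, with $E_\beta=\tr(\omega_\beta(H_S)H_S)$. Substituting $r=1+2(E_{\mathrm{max}}-E_\beta)/E_\beta$ and simplifying, the desired bound $S_\infty\leq r\,\vac_\beta(\omega_{\beta_S}(H_S),H_S)$ becomes equivalent to $\Delta\beta\geq (2/E_\beta)(\log Z_\beta-\log Z_{\beta_S})$; since $\log Z_{\beta_S}\geq 0$ the right-hand side is bounded by the constant $(2/E_\beta)\log Z_\beta$, so the bound holds for every $\beta_S>\beta+(2/E_\beta)\log Z_\beta$. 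This is exactly where the factor $2$ in $r$ is needed: it supplies precisely the slack that absorbs the bounded partition-function term and yields a clean critical temperature.

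Finally I would assemble the pieces. The stated single condition $\vac_\beta(\rho_R,H_R)-K\geq r\,\vac_\beta(\omega_{\beta_S}(H_S),H_S)$ implies the positive-$\alpha$ requirement, because $r\geq 1$ and $\vac_\beta\geq 0$ give $\vac_\beta(\rho_R,H_R)-K\geq\vac_\beta(\omega_{\beta_S}(H_S),H_S)$; and it implies the negative-$\alpha$ requirement, because $K\geq 0$ together with the previous paragraph give $\vac_\beta(\rho_R,H_R)\geq r\,\vac_\beta(\omega_{\beta_S}(H_S),H_S)\geq S_\infty(\omega_\beta(H_S)\|\omega_{\beta_S}(H_S))$. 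Taking $\beta_{\mathrm{cr}}$ to be the maximum of the value above and the one from Lemma~\ref{lemma:concavity} makes both parts valid at once, which establishes sufficiency. The main obstacle I anticipate is the negative-$\alpha$ side: getting the inequality directions right after the symmetry flip, and proving the max-divergence-versus-vacancy bound with a resource-independent constant $r$ — the latter succeeds only because $S_\infty$ and the vacancy share the same $L$ term, so their difference is governed entirely by $E_{\mathrm{max}}-E_\beta$. The strictness of the Renyi orderings demanded by the exact-catalysis characterization follows from the slack in the hypothesis, as in the cited trumping results.
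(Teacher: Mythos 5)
Your proof is correct and follows the same overall architecture as the paper's: split the exact-catalysis characterization into the positive-$\alpha$ family, which is handled verbatim by Theorem~\ref{thm:generaltheorem}, and the negative-$\alpha$ family, which you collapse to the single inequality $\vac_\beta(\rho_R,H_R)\geq r(\beta,H_S)\,\vac_\beta(\omega_{\beta_S}(H_S),H_S)$ with the same constant $r$; the final assembly using $r\geq 1$ and $K\geq 0$ is also identical. Where you genuinely differ is in how the two key bounds for negative $\alpha$ are obtained. The paper proves the resource-side lower bound $S_{-|\alpha|}(\rho_R\|\omega_\beta(H_R))\geq\tfrac{|\alpha|}{|\alpha|+1}\vac_\beta(\rho_R,H_R)$ from Jensen's inequality and the target-side upper bound from the log-sum inequality. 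You instead pass through the duality $S_\alpha(\rho\|\sigma)=\tfrac{\alpha}{1-\alpha}S_{1-\alpha}(\sigma\|\rho)$ and then invoke monotonicity of $\gamma\mapsto S_\gamma$, so that both bounds become the two ends of $S_1\leq S_\gamma\leq S_\infty$ applied to the reversed-argument divergences; since $\tfrac{|\alpha|}{|\alpha|+1}$ is exactly the prefactor produced by the duality, your bounds coincide term-for-term with the paper's, but your derivation explains where they come from and is arguably cleaner. (The apparent sign discrepancy --- you reverse the inequality for $\alpha<0$, the paper does not --- is only a convention issue: the paper includes a $\mathrm{sign}(\alpha)$ in its definition of $S_\alpha$, and the underlying trace condition is the same.) The last step also differs cosmetically: the paper absorbs the partition-function term by showing $\Delta F_{\beta_S}(\omega_\beta(H_S),H_S)\geq E_\beta/2$ for large $\beta_S$, whereas you solve $S_\infty\leq r\,\vac_\beta$ explicitly and obtain the concrete threshold $\beta_S>\beta+(2/E_\beta)\log Z_\beta$; both exploit the factor $2$ in $r$ in the same way. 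The only implicit assumptions worth flagging are $E_{\max}>E_\beta>0$ (needed to divide by $E_{\max}-E_\beta$ and for $r$ to be finite), which the paper leaves implicit as well.
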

\ro{Before going into the proof, let us discuss briefly the implications that the correction given by $r(\beta,H_S)$ has over the sufficient condition Thm. \ref{thm:generaltheorem}. This is better explained if we look at the scaling results of Sec. \ref{sec:iid}. There we showed that the sufficient condition of Thm. \ref{thm:generaltheorem} provides an upper bound on the number of copies of a resource that are sufficient to implement a cooling process, as given by $n^{\text{suff}}$ in \eqref{eq:iidupperbound}. The sufficient condition laid out in Thm. \ref{thm:generaltheorem_exact_catalysts} implies simply that that we need $r$-times more systems to implement the cooling protocol, where $r=r(\beta,H)$. Note importantly that $r$ does not depend on the final temperature, so employing $r(\beta,H_S) \times n^{\text{suff}}$ is always sufficient for cooling.} \he{We emphasize that we believe that the factor $r(\beta,H_S)$ can be made much closer to $1$ by more elaborate proof techniques, but leave this as an open problem. }

\begin{proof}
It was shown in Ref.~\cite{Brandao2015} that a transition $\rho\rightarrow \rho'$ between two diagonal states is possible with exact preservation of the catalyst if and only if the Renyi-divergences
\begin{align}
S_\alpha(\rho \| \omega_\beta(H)) = \frac{\mathrm{sign}(\alpha)}{\alpha-1} \log \tr\left(\rho^\alpha \| \omega_\beta(H)^{1-\alpha}\right)
\end{align}
do not increase for \emph{all} $\alpha \in (-\infty,+\infty)$. The sufficient condition in Theorem~\ref{thm:generaltheorem} covers all $\alpha\geq 0$. We thus have to check that we can fulfill all the inequalities for $\alpha<0$ using the multiplicative factor $r(\beta,H_S)$. To do this we provide new lower- and upper-bounds for the Renyi-divergences for negative $\alpha$. We begin with a lower-bound. Consider any state $\rho$ with eigenvalues $p_i$ in the energy-eigenbasis. Then we have
\begin{align}
S_{-|\alpha|}(\rho \| \omega_\beta(H)) = \frac{1}{|\alpha|+1} \log \left(\sum_i p_i^{-|\alpha|} w_i^{1+|\alpha|}\right),
\end{align}
where $w_i = \e^{-\beta E_i}/Z_\beta$ are the eigenvalues of the thermal state. Using concavity of the logarithm we can bound this as
\begin{align}\label{eq:upper_bound_renyi}
S_{-|\alpha|}(\rho \| \omega_\beta(H)) &\geq \frac{1}{|\alpha|+1} \sum_i w_i \log\left( p_i^{-|\alpha|} w_i^{|\alpha|}\right) = \frac{|\alpha|}{|\alpha|+1} \sum_i\left(w_i \log(w_i)- w_i \log(p_i) \right)\\
&= \frac{|\alpha|}{|\alpha|+1} V_\beta(\rho,H).
\end{align}
We can thus lower bound all the Renyi-divergences for negative $\alpha$ by a simple function. Later, we will apply this bound to the resource.

We will now derive a similar \emph{upper bound} for the target system, i.e., assuming a system in a thermal state.
First, we rewrite the Renyi-divergences as
\begin{align}
S_{-|\alpha|}(\omega_{\beta_S}(H_S) \| \omega_\beta(H_S)) = \frac{|\alpha|}{|\alpha|+1}\log(Z_{\beta_S}) - \log(Z_\beta) + \frac{1}{1+|\alpha|} \log \tr\left(\e^{(\beta_S-\beta)|\alpha|H_S}\e^{-\beta H_S}\right),
\end{align}
which can be verified by direct calculation. We will now use the \emph{log-sum inequality}. It states that for any two sets of $d$ non-negative numbers $\{a_i\}$ and $\{b_i\}$ we have
\begin{align}
 \log \frac{a}{b} \leq \sum_i \frac{a_i}{a} \log \frac{a_i}{b_i},
\end{align}
with $a=\sum_i a_i$ and $b=\sum_i b_i$. Let $E_i$ be the energy-eigenvalues of $H_S$. Then we set
\begin{align}
a_i = \e^{(\beta_S-\beta)|\alpha| E_i} \e^{-\beta E_i} = \e^{-\ro{\tilde{\beta}(\alpha)E_i}},\quad b_i = \frac{\e^{-\beta E_i}}{Z_\beta},
\end{align}
\ro{where $\tilde{\beta}(\alpha):= \beta - (\beta_S -\beta)|\alpha|$}. Using the log-sum inequality we then obtain
\begin{align}
S_{-|\alpha|}(\omega_{\beta_S}(H_S) \| \omega_\beta(H_S)) &\leq \frac{|\alpha|}{|\alpha|+1}\log(Z_{\beta_S}) - \log(Z_\beta) + \frac{1}{1+|\alpha|} \sum_i \frac{\e^{-\tilde{\beta}(\alpha)E_i}}{Z_{\tilde{\beta}(\alpha)}} \log\left( \e^{(\beta_S-\beta)|\alpha| E_i} Z_\beta \right)\\
&= \frac{|\alpha|}{|\alpha|+1}\left(\log(Z_{\beta_S}) - \log(Z_\beta)\right)  + \frac{|\alpha|}{|\alpha|+1}(\beta_S-\beta) \sum_i \frac{\e^{-\tilde{\beta}(\alpha)E_i}}{Z_{\tilde{\beta}(\alpha)}} E_i.
\end{align}
Denoting by $E_{\mathrm{max}}$ the maximum energy, we then get the bound
\begin{align}
S_{-|\alpha|}(\omega_{\beta_S}(H_S) \| \omega_\beta(H_S)) &\leq\frac{|\alpha|}{|\alpha|+1} \left(\log(Z_{\beta_S}) - \log(Z_\beta) + (\beta_S-\beta) E_{\mathrm{max}}\right).
\end{align}
\ro{Let us recall from Eq. \eqref{eq:vac_as_F}, that for thermal states, the vacancy can be expressed as a function of the non-equilibrium free energy as}
\begin{align}
 \mc V_\beta(\omega_{\beta_S}(H_S),H_S) = \beta_S \Delta F_{\beta_S}(\omega_\beta(H),H) = \beta_S E_\beta - S_\beta + \log(Z_{\beta_S}),
\end{align}
where $E_\beta$ and $S_\beta$ denote the thermal energy expectation value and von~Neumann entropy at inverse temperature $\beta$. Using this together with $-\log{Z_\beta} = \beta E_\beta - S_\beta$, we can rewrite the upper bound on the Renyi-divergences as
\begin{align}
S_{-|\alpha|}(\omega_{\beta_S}(H_S) \| \omega_\beta(H_S)) &\leq
\frac{|\alpha|}{|\alpha|+1} \left( \mc V_\beta(\omega_{\beta_S}(H_S),H_S)\left[1+\frac{E_{\mathrm{max}}-E_\beta}{\Delta F_{\beta_S}(\omega_\beta(H),H)} \right] - \beta(E_{\mathrm{max}}-E_\beta)\right) \\
&\leq \frac{|\alpha|}{|\alpha|+1}  \mc V_\beta(\omega_{\beta_S}(H_S),H_S)\left[1+\frac{E_{\mathrm{max}}-E_\beta}{\Delta F_{\beta_S}(\omega_\beta(H),H)} \right].
\end{align}
Since we have
\begin{align}
\Delta F_{\beta_S}(\omega_\beta(H),H) = E_\beta - E_{\beta_S} - \frac{1}{\beta_S}(S_\beta - S_{\beta_S}),
\end{align}
\ro{it is always possible find a critical inverse temperature $\beta'_S$ such that $\Delta F_{\beta_S}(\omega_\beta(H),H)\geq E_\beta/2$ for all $\beta_S>\beta'_S$.} \ro{Then, for} all $\beta_S$ larger than this critical temperature we can bound the Renyi-divergences as
\begin{align}
S_{-|\alpha|}(\omega_{\beta_S}(H_S) \| \omega_\beta(H_S)) &\leq \frac{|\alpha|}{|\alpha|+1}  \mc V_\beta(\omega_{\beta_S}(H_S),H_S)\left[1+2\frac{E_{\mathrm{max}}-E_\beta}{E_\beta} \right] =: \frac{|\alpha|}{|\alpha|+1}  \mc V_\beta(\omega_{\beta_S}(H_S),H_S) r(\beta,H_S).
\end{align}
Using the lower bound \eqref{eq:upper_bound_renyi} for the resource, we then find that the inequalities for negative $\alpha$ are fulfilled if
\begin{align}
\frac{|\alpha|}{|\alpha|+1} \mc V_\beta(\rho_R,H_R) \geq \frac{|\alpha|}{|\alpha|+1} \mc V_\beta(\omega_{\beta_S}(H_S),H_S) r(\beta,H_S)
\end{align}
Cancelling the prefactors, we get as sufficient condition for negative values of $\alpha$ the inequality
\begin{align}
\mc V_\beta(\rho_R,H_R) \geq \mc V_\beta(\omega_{\beta_S}(H_S),H_S) r(\beta,H_S).
\end{align}
Combining this with the sufficient condition for positive $\alpha$, \ro{which is the sufficient condition provided by Thm. \ref{thm:generaltheorem},} then yields the claimed sufficient condition in the theorem.
\end{proof}

\subsubsection{Catalysts can always be chosen with full rank}
Before finishing this section, let us point out that in the case of exact catalysis, one can always choose the catalyst to have full rank. \ro{That is, the actual implementation of the cooling protocol, which is guaranteed to exist under the conditions of Thm. \ref{thm:generaltheorem_exact_catalysts}, never requires to employ a catalyst which is not full rank.}

To see this, consider a bi-partite system with non-interacting Hamiltonian $H_1+H_2$.
Then consider the initial state $\rho_{SB}\otimes \sigma_C$ and apply an energy-preserving unitary operation $U$ that results in the state $\rho'_{SBC}$ with $\rho'_C=\sigma_C$.
Here, we imagine that $\rho_{SB}$ also includes the state of the heat-bath and thus there can be a built-up of correlations between the catalyst and $\rho_{SB}$.
Furthermore assume that $\rho_{SB}$ has full-rank and that $\sigma_C$ is supported only on a subspace $P\subset \mc H_2$ with complement $Q=\one - P$ (we identify the vector-space and the projector on the space). Thus $P=\sum_{j} \proj{j}$, where the sum is over the eigen-states of $\sigma_C$.
Let the spectrum of $\rho_{SB}$ and $\sigma_C$ be $\{p_\alpha\}$ and $\{q_j\}$, respectively. Then the final state
\begin{align}
\rho'_{SBC} = \sum_{\alpha,j} p_\alpha q_j U\proj{\alpha}\otimes \proj{j}U^\dagger
\end{align}
is a convex sum of the positive semi-definite operators $U\proj{\alpha}\otimes \proj{j}U^\dagger$. The sum has support only within $\one\otimes P$ since otherwise the reduced state $\rho'_C$ would also be supported outside of $P$. Hence also every summand is supported within $\one\otimes P$. Using $\one\otimes P = \sum_{\alpha,j} \proj{\alpha}\otimes\proj{j}$ we then obtain
\begin{align}
(\one\otimes Q) U(\one\otimes P) U^\dagger = \one \otimes Q \sum_{\alpha,j}U\proj{\alpha}\otimes\proj{j}U^\dagger = 0.
\end{align}
In other words, we have $(\one \otimes Q) U (\one \otimes P) = 0$ and by a similar calculation also   $(\one \otimes P) U(\one \otimes Q)=0$.
Thus, the unitary $U$ is block-diagonal. In particular, the operator $V=(\one \otimes P)U(\one \otimes P)$ considered as an operator on the Hilbert-space $\mc H_1\otimes P$ is unitary.
Since $U$ is energy-preserving by assumption, we can deduce that $P=\mathrm{span}\{\ket{E_j}\}$ for some subset of energy-eigenstates $\ket{E_j}$ of the Hamiltonian $H_2$ of the catalyst.

Then $V$ commutes with the Hamiltonian $H_1+ H_2|_{P}$, where $H_2|_{P}$ denotes the Hamiltonian of the catalyst, but restricted to the subspace $P$.

We can thus obtain an equivalent catalyst with full rank and a corresponding thermal operation by restricting $\sigma_C$ and $H_2$ to the subspace $P$ on which $\sigma_C$ has full rank and using the thermal operation defined by $V$:
\begin{align}
V \rho_{SB}\otimes \sigma_C|_P V^\dagger = \rho'_{SBC}|_{\one\otimes P}.
\end{align}
In particular note that the above analysis also shows that, in the case of exact catalysis, pure catalysts are useless: If a transition can be done with a pure catalyst, it can also be done without a catalyst.
}

\section{Approximate catalysis}
\label{app:approx_catalysts}
In this manuscript, we have assumed that catalysts are returned arbitrarily close to their initial state (or exactly, in the last section).
Here we will discuss possible relaxations of this assumption to include approximate catalysts.

First, we note that \ro{the problem of allowing for finite errors --in some suitable measure-- between the initial and final state of the catalyst is a delicate one, specially in the context of the third law of thermodynamics.}
The challenge is caused by the fact that already the statement of the unattainability principle is not stable under arbitrarily good approximations:
It compares the case where the state of the target system is exactly the ground-state with the case of approximating the  ground-state to arbitrary precision.
In the latter case, infinite resources are needed while in the latter case finite resources are needed (however diverging with the approximation precision).
This is the ultimate reason why a discontinuous measure of resources (like the vacancy) is necessary to capture the third law in the resource theoretic setting.

With this in mind, let us discuss the problem of approximate catalysts. If one demands that the catalyst is returned in approximately the same state,  it is crucial how one measures ``approximately''. In the context of thermal operations, this problem has been studied in Refs.~\cite{Brandao2015,Ng2015}. It has been shown in Ref.~\cite{Brandao2015}, that if one requires only that the catalyst is returned up to an arbitrarily small but fixed error in trace-distance, any transition can be implemented using a thermal operation to arbitrary precision -- without any resource. \ro{In particular, this implies that perfect cooling can be achieved without using any resource state.} Therefore, it is clear that stronger conditions are necessary to not trivialize \ro{problem of cooling.}

A second way to define approximate catalysts is to require that the catalyst is returned up to an error $\epsilon/\log(d)$ in trace-distance, where $d$ is the dimension of the catalyst, and $\epsilon>0$ is arbitrarily small but fixed for all catalysts.
Intuitively, this definition requires that the error is small even when multiplied by the number of particles in the catalyst.
In this case, transitions can be implemented to arbitrary precision if the non-equilibrium free energy decreases \cite{Brandao2015}.
This would lead to a constant amount of resources needed to cool a given system to arbitrary low temperatures -- hence the unattainability principle is \ro{also in this case} violated.

\ro{Because of the arguments above,} it seems \ro{allowing for a finite error --measured in trace-distance--} in catalyst seems to be too forgiving.
However, Ref.~\cite{Brandao2015} also hints at a solution to this problem: One should measure the error in terms of a quantity that is meaningful for the problem at hand. In Ref.~\cite{Brandao2015} the authors consider the problem of work extraction and demand in turn that the catalyst is returned with approximately the same ``work distance'', where the work distance measures the potential of one state to produce work. In our case, we are concerned with the task of low-temperature cooling. Indeed, the vacancy itself plays the role of the cooling potential, since the limitations for low temperature cooling of Thm. \ref{thm:necessary} are expressed in terms of the vacancy. We can thus require that the catalyst has to be returned with a vacancy that differs only by an amount $\epsilon$ from the initial vacancy. If we adopt this definition of approximate catalysts, the general necessary condition~\eqref{eq:thirdlawnecessary} is modified only slightly.
\he{This can be seen in the following way. First note that this notion requires that catalysts all have finite vacancy, i.e., they must have full rank. In this case, we can simply evaluate the vacancy of the resource, system and target before and after the cooling protocol has been applied. Let us assume that the initial state of the catalyst is $\sigma$, while the final state is $\sigma'$. Since the vacancy is an additive monotone of thermal operations and vanishes on thermal states, we then obtain
\begin{align}
\mc V_\beta(\rho_R\otimes \omega_\beta(H_S)\otimes \sigma,H_R+H_S+H_C) &= \mc V_\beta(\rho_R,H_R) + \mc V_\beta(\sigma,H_C) \\
&\geq \mc V_\beta(\omega_\beta(H_R)\otimes \rho_S \otimes \sigma', H_R+ H_S +H_C) = V_\beta(\rho_S,H_S) + \mc V_\beta(\sigma',H_C).
\end{align}
We hence obtain as a new necessary condition
\begin{align}\label{eq:necessary_approx}
\mc V_\beta(\rho_R,H_R) + \epsilon \geq V_\beta(\rho_S,H_S),
\end{align}
with $\epsilon = \mc V_\beta(\sigma,H_C) - \mc V_\beta(\sigma',H_C)$ being the error in the catalyst measured by the vacancy.
}

Thus the necessary condition and hence the quantitative unattainability principle is stable under approximate catalysts if defined consistently: \ro{allowing a fixed but small error measured by the vacancy difference.}

It seems plausible that under this definition of catalysts, also the sufficient condition in Theorem~\ref{thm:generaltheorem} simplifies to \eqref{eq:necessary_approx} for arbitrary resources -- at least for low enough target temperatures. However, proving this statement rigorously seems to require further technical innovations beyond the scope of this work. We therefore leave this as an open problem.

\end{document}